\let\color@begingroup\relax
	\let\color@endgroup\relax}{}%
\def\fix@ieeecolor@hbox#1{%
	\hbox{\color@begingroup#1\color@endgroup}}
\patchcmd\@makecaption{\hbox}{\fix@ieeecolor@hbox}{}{\FAILED}
\patchcmd\@makecaption{\hbox}{\fix@ieeecolor@hbox}{}{\FAILED}
\newtheorem{remark}{Remark}{}
\newtheorem{theorem}{Theorem}
\newtheorem{lemma}{Lemma}
\newtheorem{definition}{Definition}
\newtheorem{assumption}{Assumption}
\newtheorem{corollary}{Corollary}
\def\BibTeX{{\rm B\kern-.05em{\sc i\kern-.025em b}\kern-.08em
		T\kern-.1667em\lower.7ex\hbox{E}\kern-.125emX}}
\newcounter{example1}
\newenvironment{example1}[1][]{\refstepcounter{example}\par
	\textbf{Case~\theexample #1} \rmfamily}
\newcommand{\caref}[1]{\textbf{Case~\ref{#1}}}
\crefname{example}{\textbf{Case}\rmfamily}{{\textbf{Cases}}\rmfamily}
\newcommand{\colorbibs}[2][blue]%
{%
	\DeclareBibliographyCategory{ColoredBiblist#1}%
	\addtocategory{ColoredBiblist#1}{#2}%
	\AtEveryBibitem{\ifcategory{ColoredBiblist#1}{\color{#1}}{}}
}
\newcommand\bib@setcolor[1]{%
	\ifcsname bib@colored@#1\endcsname
	\expandafter\color\expandafter{\csname bib@colored@#1\endcsname}
	\else
	\normalcolor
	\fi
}
\begin{document}
\title{Resilient Control for Networked Switched Systems With/Without ACK: An Active Quantized Framework }

\author{Rui Zhao,  \IEEEmembership{Member, IEEE},
	Zhiqiang Zuo, \IEEEmembership{Senior Member, IEEE},
	Yijing Wang, \\  Wentao Zhang, \IEEEmembership{Member, IEEE}, 	and Yang Shi, \IEEEmembership{Fellow, IEEE}
	\thanks{This work was supported by the National Natural Science Foundation of China (grants 62173243).}
\thanks{R.~Zhao is with the Department of Electrical Engineering, City University of Hong Kong, Hong Kong SAR, China. (e-mail: ruizhao@tju.edu.cn, ruzhao@cityu.edu.hk)}
	\thanks{Z.~Zuo, and Y.~Wang are with the Tianjin Key Laboratory of Intelligent Unmanned Swarm Technology and System, School of Electrical and Information Engineering, Tianjin University, Tianjin 300072, China. 	(e-mail:  zqzuo@tju.edu.cn; yjwang@tju.edu.cn)}
		\thanks{W.~Zhang is with the Continental-NTU Corporate Lab, Nanyang Technological University, 639798, Singapore, and is also with the School of Electrical and Electronic Engineering, Nanyang Technological University, Singapore 639798. (e-mail: wentao.zhang@ntu.edu.sg) }
	\thanks{Y.  Shi is with the Department of Mechanical Engineering, University of Victoria, Victoria, BC V8W 2Y2, Canada. (e-mail: yshi@uvic.ca)}
}

\maketitle

\begin{abstract}
	This paper deals with the quantized control problem for switched systems under denial-of-service (DoS) attack. 
	Considering the system's defensive capability and the computational resources of quantizers and controllers, four control strategies are proposed. These strategies incorporate different combinations of controllers (active and passive), quantizers (centered on the origin or custom-designed), and network configurations (with or without ACK signals). For each strategy, specific update laws for the encoder and decoder are designed to avoid quantization saturation. 
	Furthermore, the uniformity of encoder and decoder operations is maintained by transmitting additional information to the decoder.  To achieve asymptotic stability, sufficient conditions concerning the switching signal and DoS attack constraints are derived by taking into account the asynchronous behaviors. The proposed active quantization strategy with the ACK signal leverages the system model information to compute the control signal in real-time, allowing for possible convergence of the system state despite DoS attack. Additionally, a well-designed switching signal is suggested to further mitigate the impact of DoS attack. A passive quantization strategy with ACK signal is also developed as a simplified version of the active quantized control strategy, providing the foundation for a strategy without ACK signal. Inspired by time-triggered and event-triggered mechanisms, the passive quantization strategy without ACK signal is investigated, with two feasible update laws for the quantizer. Finally, two simulations are conducted to validate the effectiveness of the proposed strategies.
\end{abstract}
\begin{IEEEkeywords}
	DoS attack, switched systems, active controller, quantized control
\end{IEEEkeywords}
\section{Introduction}
Switched systems are composed of a series of subsystems and a logical law governing how these subsystems evolve \cite{DT}. Owing to the superiority in describing complex control systems, they have been widely used in modeling various practical systems \cite{App1,App2}. Moreover, several systems with switching properties can also be  modeled as switched systems, for example, aero-engine \cite{APP4}. 
In the past few decades, with the development of computer technology, the research of networked switched systems has received increasing attention, encompassing both theoretical investigation and applications \cite{NSS}.

On the other hand,  an effective method to overcome the limited bandwidth is quantifying the state information before being transmitted via network. \color{black} The quantization approaches for switched systems can be categorized into static and dynamic ones. 
Ref. \cite{Q_S_rui} studied the event-triggered control for networked switched systems with dynamic quantization. The state is bounded within the range of the quantizer and the quantization parameters are adjusted with the changes in time and switching instants.   Liberzon presented a state quantization scheme for switched systems, where the quantization center is the prediction state \cite{quan_2014}. The key idea is to design the center and radius of quantization regions.  Furthermore, \cite{quan_2018} extended the above approach for switched systems with unknown disturbance to achieve input-to-state stability with exponential decay.

For networked control systems, the security is a challenging yet crucial issue\textcolor{blue}{\cite{R2-1,R2-2}}. It is known that DoS attack deteriorates system performance by blocking the communication channel. Some important results about resilient control are proposed \cite{etDoS-Ass,9903320}.   Moreover, several studies focused on  quantized control for non-switched under DoS attack, including state quantization \cite{Shi2022},  output quantization \cite{output2}, and output and control signal quantization \cite{Liu2022}.
Feng et al. \cite{Quan_DoS} studied the tradeoff between bandwidth and resilience against DoS attack. For a system under quantized control, it has been identified that the imposed hypotheses in the associated DoS attack are stricter because dealing with the quantization error requires an extra effort. 

Recently, the resilient control problem for switched systems under DoS attack has attracted much attention. \cite{Rui2}  revealed the relationship between DoS attack and switching signal. Then a new switching law was proposed to mitigate the negative impacts of periodical DoS attack \cite{Fu2022}. Moreover, the stability problem	for switched systems with unstable mode was investigated in \cite{Wang2022}. In addition, an active control strategy was devised for switched systems under asynchronous DoS attack \cite{Rui1,Rui3}, which improves resilience with the help of predictor and buffer. Such a strategy provides an appropriate control signal during DoS attack intervals. And this makes it possible for the system state to converge even if an attack occurs. 

To the best of our knowledge, little research has been conducted on switched systems under denial-of-service (DoS) attacks incorporating quantized control. Refs. \cite{NAHS} and \cite{quant2} explored quantized control using a dynamic quantizer for switched systems subject to DoS attack. The control signal either becomes zero or retains the last valid signal during DoS attack intervals. 
More recently, \cite{IJRNC} proposed a quantizer centered on the predicted state rather than the origin. Unfortunately, this approach introduces a decision-making mechanism to address asynchronous behavior, which may pose practical implementation challenges.
In the context of switched systems utilizing quantized control, a significant challenge arises from ensuring uniformity of the encoder and decoder due to attack-induced asynchronicity. The interdependent error among DoS attack , asynchronous behavior, and quantization must be effectively managed, as these coupling effect deteriorates system performance. This can manifest as conservative switching signals, stringent DoS attack constraints, and elevated quantization levels.
Therefore, an interesting yet important question arises: Can we design an active quantization control strategy to enhance system performance in the presence of DoS attack? Furthermore, existing approaches typically rely on the status of DoS attack to update the quantizer. As we know, detection mechanisms often rely on acknowledgment (ACK) signals. However, some networks, such as those employing the user datagram protocol (UDP), do not facilitate ACK signal transmission. Consequently, another motivation for this research is to develop an update law for the quantizer under conditions where the status of DoS attack remains unknown.

In this paper, the resilient control problem for switched systems under DoS attack using quantized control is studied. The main contributions are listed as follows.

\textcolor{blue}{1) An active control strategy (\textbf{\textit{Strategy 1}}) is proposed to enhance the defense capacity of switched systems, which consists of an active controller and a quantizer with non-origin center (\Cref{thm1}). 
	It stabilizes switched systems under higher attack frequencies and longer attack durations while requiring lower quantization levels.   The designed center of the quantizer accelerates the zooming-in phase, whereas the active controller mitigates divergence during the zooming-out stage, jointly relaxing the DoS attack frequency and duration bounds. 
	Consequently, this treatment significantly improves defense capacity. A well-structured switching signal is also proposed that effectively counters DoS attacks (\Cref{coro1}).}


2) The strategy of a quantizer centered at the origin is developed to address the quantization problem in scenarios with limited computational resources (\Cref{thm1_ori}). This approach reflects a good compromise between computational burden and overall performance.

3) Two novel quantization schemes  tailored for networks lacking ACK signals are designed based on the passive control strategy with ACK signal (\Cref{thm1_zero}), inspired by time-triggered and event-triggered mechanisms (\Cref{thm_time,thm1_ET}). This research makes the first attempt to investigate the uniformity of quantizers for switched systems under DoS attack in the absence of ACK signals. Notably, this also fills the gap in research even for non-switched systems.

4) Uniformity between the encoder and decoder is maintained during attack-free intervals while accounting for DoS attack and asynchronous behavior. Since the duration of attack-induced asynchronous behavior exceeds one sampling period and switching signals cannot be transmitted during DoS attack intervals, the mode of the encoder is different from the mode of decoder. In this study, the switching instant is transmitted alongside the switching signal. Furthermore, the \textit{zooming-out} instant is also transmitted with the sensor signal when there is no ACK signal. These two enhancements do not significantly increase network load but effectively ensure the uniformity of the quantizer.	 


\begin{table}[h]
	\caption{Notations}	\label{tab1}
	\renewcommand\arraystretch{1.1}
	\setlength{\tabcolsep}{2pt}
	{\color{blue}
		\begin{tabular}{p{40pt} p{205pt}}
			\hline
			Symbol& Definition\\
			\hline
			$\tau_s$& sampling period\\
			$t_k,~t_s,~s_k$ & sampling, switching and successful transmission instants\\
			$x(t)$, $u(t)$& state and control input \\
			$x(t_k)$& state at sampling instants\\
			$\hat x(t)$& estimated state  by predictor \\
			$c_k$, $c_{\overline{k}}$ &   state after being decoded without/with attack\\
			${q}_k$, ${q}_{\overline{k}}$ &  state after being encoded without/with attack  \\ 
			$N_{\max}$ & the maximal number of asynchronous interval\\
			$N$ &   quantization level \\
			$\sigma(k),~\hat{ \sigma}(k)$ & switching signal for subsystems and controller\\
			$E_k^e$, $E_k^d$ & quantization parameters for encoder and decoder\\
			$x_k^{e*}$, $x_k^{d*}$ & quantization centers for encoder and decoder\\
			$\|\cdot\|$ & infinity norm \\
			$\llcorner t_s \lrcorner$ & the latest sampling instant just before $t_s$\\
			$\lceil t_s \rceil $ & the first successful transmission instant after $t_s$ \\
			$	\ulcorner t_s \urcorner $ & the next sampling instant after $t_s$\\
			$n(t,t_0)$& the number of DoS attack off/on switching over $[t_0,t)$\\
			$|\Xi(t,t_0)|$& the length of DoS attack duration during $[t_0,t)$\\
			$|\Theta(t,t_0)|$ & the length of DoS attack-free duration during $[t_0,t)$\\
			$ACK_{t_k}$ & ACK signal at time instant $t_k$\\
			$SY_{t_k} $ & the synchronous stage flag during $[t_{k-1}, t_k)$\\
			$\tau_d$ & dwell time constraint \\
			$e(t)$ & the error between system state and predicted state \\
			\hline
		\end{tabular}
	}
\end{table}
\section{Problem formulation}\label{sec_pf}

Consider a class of networked switched systems
\begin{equation}\label{equ_state}
	\dot{x} (t)  = A_{\sigma(t)} x(t) + B_{\sigma(t)} u(t)
\end{equation}
where $x(t) \in \mathbb{R}^{n_x}$ and $u(t) \in \mathbb{R}^{n_u}$ are the system state and control input. $\sigma(t) \in \mathcal{M} = \{1,2,\cdots,m\}$ is the switching signal with $m$ being the number of subsystems. $A_{\sigma(t)},B_{\sigma(t)}$  are constant matrices with appropriate dimensions. 	Without loss of generality, each pair $(A_p,B_p)$, $p \in \mathcal{M}$ is stabilizable \cite{quan_2014}.

\begin{figure}[b]
	\centering
	\includegraphics[width=0.7\linewidth]{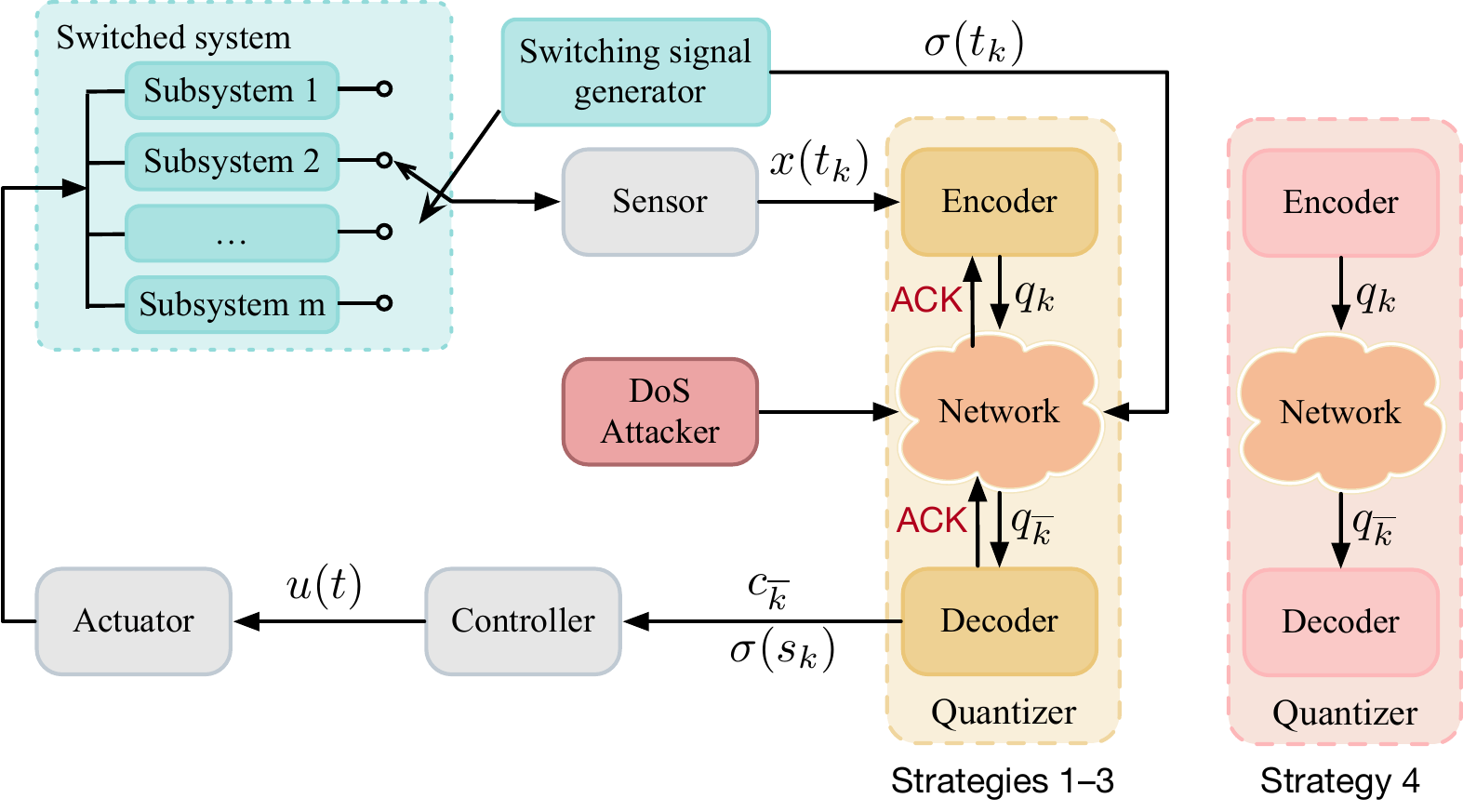}
	\caption{\textcolor{blue}{The block diagram of networked switched systems with quantizer under DoS attack. The left quantizer and the right one are designed for \textbf{\textit{Strategies 1--3}} with ACK signal and  \textbf{\textit{Strategy 4}} without ACK signal, respectively.}}\label{fig:frame}
\end{figure}

Fig. \ref{fig:frame} illustrates the block diagram of network switched systems with quantizer under DoS attack. 
The closed-loop system comprises a quantizer, a network suffering from DoS attack, and a controller. 
The system under consideration is a sampled-data control one, where the sensor samples the state signal at discrete time instants with a fixed sampling period $\tau_s$. The sampling instants are defined as $t_k = k \tau_s$, where $k = 0, 1, 2, \cdots$. At each sampling instant, the state signal $x(t_k)$ will be encoded into an integer $\textcolor{blue}{q_k} \in \{1, 2, \dots, N^{n_x}\}$, where $N$ is an odd positive integer and $n_x$ represents the dimension of the state vector. More specifically, the integer $N$ represents the number of quantization levels.
After the encoding process, the encoder transmits the encoded information \textcolor{blue}{$q_k$} to the decoder. Upon receiving  \textcolor{blue}{$q_k$}, the decoder reconstructs the corresponding state information $c_k$ which is the center of the box labeled \textcolor{blue}{$q_k$}. Then the decoded information $c_k$ is subsequently used by the controller.
In addition to the transmission of the encoded state information, a switching signal $\sigma(t_k)$ is transmitted simultaneously at each sampling instant. 

In what follows, we introduce the components of the system, including quantizer, network subject to DoS attack, and controller.  Next, the control strategies are explained, along with the cases that arise in the design of the quantizer. Finally, the main objectives and structure of this paper are presented.
\subsection{Quantizers}
The quantization error depends on the quantization center and the quantization range, see \cite{quan_2014} for details. In existing literature, the quantization parameters for both encoder and decoder are uniform when the system is a non-switched one with/without DoS attack or the system is switched system without DoS attack. However, the consistency of the quantizer needs to be guaranteed by a well-designed update law for switched systems. Asynchronous behavior occurs between the encoder and decoder sides of switched systems under DoS attack. Consequently, we design the quantization parameters for the encoder and decoder separately. Moreover, the uniformity of quantizer will be 
be addressed in this paper. The encoder and decoder satisfy the following conditions.

\textbf{Encoding}: The encoder admits
\begin{equation}\label{equ_bound}
	\| x(t_{k})-x_k^{e*} \| \leq E_k^e, ~\forall{k \in \mathbb{Z}^+}
\end{equation} 
where $x_k^{e*}$ is the center of the quantizer and $E_k^e$ represents the quantization range on the encoder side.  These parameters will be determined in the sequel.
The hypercube $ \mathcal{H} \triangleq \{x\in \mathbb{R}^{n_x}: \|x(t_{k}) - x_k^{e*}\| \leq E_k^e\}$ is divided into $N^{n_x}$ identical hypercubic boxes, where $N$ is the number of quantization levels. More specifically, the quantization area in each dimension is divided into $N$ segments. These boxes are then labeled from $1$ to $N^{n_x}$. The box number $q_k$ containing the true value of $x(t_k)$  will be transmitted via network.

\textbf{Decoding}:  Once the decoder receives the encoded signal $q_k$, the center of box $c_k$ will be calculated with the help of quantized parameters $E_k^d$ and $x_k^{d*} $.
Then it follows that
\begin{equation}\label{equ_error_bound}
	\| x(t_k)-c_k\| \leq \frac{E_k^d}{N}
\end{equation}
and 
	$	\| c_k-x_k^{d*}\| \leq \frac{(N-1)E_k^d}{N}.$
To ensure stability for systems under quantized control,  it is required that $E_k^d= E_k^e$ and $x_k^{d*} = x_k^{e*} $.

According to the quantizer center, two kinds of quantizers are given:
\begin{itemize}
	\item \textit{Quantizer i:} The quantization center	$x_k^{*}$ is not the origin. Both $x_k^*$ and $E_k^e$ need to be designed.
	\item \textit{Quantizer ii:} The quantization center is the origin. The update law of parameter $E_{k}^e$ will be devised.
\end{itemize}
\subsection{DoS Attack via a Network}
Since the quantized signal is transmitted over a network, it is more vulnerable to attackers.  
In such cases, as shown in Fig. \ref{fig:frame}, the quantized signal $q_k$ cannot always be successfully sent to the decoder at every sampling instant. The received quantization signal then becomes $q_{\overline{k}}$. Accordingly, the decoded signal is thus denoted as $c_{\overline{k}}$.
It is noted that the switching signal may also be attacked. At the sampling instants, the switching signal $\sigma(t_k)$ can only be transmitted at the attack-free instant $s_k$. Due to the occurrence of DoS attack, asynchronous behavior may occur.

Let the starting instant of the $n$-th DoS attack be $h_n$ and the duration be $\tau_n$, then the $n$-th intervals for attack and attack-free scenarios are 
$	\mathcal{H}_n \triangleq{h_n} \cup [h_n,h_n+\tau_n),
\mathcal{D}_n \triangleq [h_n+\tau_n,h_{n+1})$.
Therefore,  the total DoS attack and attack-free duration can be expressed as 
$			\Xi(t,t_0) \triangleq \cup _{n\in\mathbb{N}_{\geq 0}} \mathcal{H}_n \cap [t_0,t),
\Theta(t,t_0) \triangleq  [t_0,t) \setminus \Xi(t,t_0).$
In this paper, the network under DoS attack satisfies the following constraints.
\begin{assumption}[DoS attack frequency\cite{etDoS-Ass}]\label{DoSF}
	The number of DoS attack off/on switching over the interval $[t_0,t)$ satisfies 
	\begin{equation}\label{equ_dosF}
		n(t,t_0) \leq n_0+ \frac{t-t_0}{\tau_D}, ~\forall ~t>t_0\geq 0.
	\end{equation}
	where scalars $n_0 \in \mathbb{R}_{\geq 0 }$ and $\tau_D \in \mathbb{R}_{>0}$.
\end{assumption}
\begin{assumption}[DoS attack duration\cite{etDoS-Ass}]\label{DoSD}
	The length of DoS attack duration obeys 
	\begin{equation}\label{equ_dosD}
		|\Xi(t,t_0) | \leq \kappa+\frac{t-t_0}{T},~\forall ~t>t_0\geq 0.
	\end{equation}
	where  scalars $\kappa\in \mathbb{R}_{\geq 0 }$ and $T \in \mathbb{R}_{>1}$.
\end{assumption}

Note that the quantized signal is sampled with a fixed period $\tau_s$, it cannot be sent to the controller until the first sampling instant after one DoS attack interval. Similar to  \cite{Rui2}, the equivalent DoS attack duration admits 
$	|\overline{\Xi} (t,t_0)| \leq  |\Xi(t,t_0)| + \tau_s \cdot (n(t,t_0)+1)
\leq \overline{\kappa}+\frac{t-t_0}{\overline{T}}$
where $\overline{\kappa} = \kappa + \tau_s(n_0+1)$ and $\frac{1}{\overline{T}} =\frac{1}{{T}}+\frac{\tau_s}{\tau_D}  $.
Thus the DoS attack-free duration turns to be 
$	|\overline{\Theta}(t,t_0)| = t-t_0- |\overline{\Xi}(t,t_0)|.$

As we know, 	DoS attacks disrupt network availability through UDP flooding, challenge collapsar, SYN flooding, etc.  The status of DoS attack can be detected with the help of ACK signal.
ACK signal is commonly used as a response to confirm the successful receipt of signal; interested readers may refer to TCP/IP protocol for more details.
For example, if the network protocol used in this paper is TCP/IP, an ACK signal will be sent from the decoder to the encoder for attack-free case, i.e., $ACK =1$. Otherwise, if there exists an attack, $ACK = 0$. However, some protocols, such as the UDP protocol, do not provide an ACK signal. In such a case, the status of DoS attack cannot be directly determined. In this paper, two kinds of networks are investigated, i.e., 
\begin{itemize}
	\item \textit{Network i}: The network with ACK signal. The DoS attack is available to the encoder. 
	\item \textit{Network ii}: The network without ACK signal.The DoS attack cannot be known by the encoder.  
\end{itemize}		
\subsection{Controller }
There are mainly two approaches to calculate the control signal. The first one employs the predicted state to generate a continuous control signal, as in \cite{quan_2014}. The other updates the control signal at sampling instants and keeps its value between two consecutive sampling instants, i.e., the zero-order hold (ZOH) mechanism; see \cite{output1} for more details. 
In this paper, we put forward two strategies to generate the control signal for different quantizers. We call the two control approaches as the \textit{active controller} and the \textit{passive controller}.

\textit{Active controller: }The control signal is generated in terms of the predictor. And the predictor has the form
\begin{equation}\label{equ_pre_state}
	\begin{aligned}
		\dot{\hat{x}}(t) = A_{\hat{\sigma}(t)} \hat{x}(t) + B_{\hat{\sigma}(t)} u(t), \text{ for } t\in [s_k,s_{k+1}),
	\end{aligned}
\end{equation}
where $\hat x (s_k^+) = c_{\overline{k}}$, $s_k$ is the $k$-th instant when the signal has been successfully transmitted, $c_{\overline{k}}$ is the state signal after decoding and ${\hat{\sigma}(t)} = \sigma(s_k)$ represents the controller mode.
For DoS attack-free case, the predictor updates the state at sampling instants. However, when a DoS attack occurs, the predictor would not update the state at each sampling instant since the decoder cannot receive the latest state. Consequently, the state will be updated at the successful transmission instants.
With the predicted state,  the control signal is designed as
$u(t) = K_{\hat{\sigma}(t)} \hat{x}(t)$
where $K_p$ is the control gain for mode $p\in\mathcal{M}$.

\textit{Passive controller: }The control signal keeps unchanged using ZOH mechanism during two sampling instants. The control signal for DoS attack-free case is 
$	u(t) = K_{\hat{\sigma}(t_k)} \hat{x}(t_k), \forall t\in [t_k,t_{k+1}).$
When DoS attack occurs, $\hat{x}(t_k)=\textbf{0}$. Hence, 
$	u(t) = \textbf{0}$.
\color{blue}
In this sense, passive control is similar to intermittent pinning control \cite{R1-1}.
\color{black}		
\subsection{Control Strategies}
Four quantized control strategies are designed employing different controllers and quantizers, while considering whether DoS attack is detectable:
\begin{itemize}
	\item \textbf{\textit{Strategy 1:}} \textit{Active controller} \& \textit{Quantizer i }\& \textit{Network i}
	\item \textbf{\textit{Strategy 2:}} \textit{Active controller} \& \textit{Quantizer ii }\& \textit{Network i}
	\item \textbf{\textit{Strategy 3:}} \textit{Passive controller} \& \textit{Quantizer ii }\&\textit{ Network i}
	\item \textbf{\textit{Strategy 4:}} \textit{Passive controller} \& \textit{Quantizer ii }\& \textit{Network ii}
\end{itemize}

The decoder can detect the status of DoS attacks under each strategy, but this is not the case for the encoder. Notably, \textit{\textbf{Strategies 1-3}} employ ACK technique, allowing the encoder to recognize the occurrence of DoS attack and respond accordingly. On the contrary, in \textit{\textbf{Strategy 4}}, the encoder fails to detect DoS attack. Therefore, the quantizer's update law in \textbf{\textit{Strategy 4}} has no relationship with the DoS attack status, as the quantization parameters should remain uniform. 

The core principle of dynamic quantizer for systems without attacks is the convergence of the quantization parameters, which meets condition \eqref{equ_bound}. This process is referred to as \textit{zooming-in}. However, if a DoS attack occurs, the new state cannot be transmitted, and the quantization error may diverge, violating condition \eqref{equ_bound}; see \cite{output1} and \cite{output2}. To capture the state divergence caused by DoS attack, the \textit{zooming-out} scheme is employed.
In addition, asynchronous behavior brings additional challenges. In \cite{quan_2014} and \cite{quan_2018}, asynchronous behavior emerges since a switch may occur between two consecutive sampling instants. The \textit{zooming-out} scheme is used to capture the system state when the controller mode is inconsistent with the subsystem mode. Note that the maximum asynchronous interval is less than one sampling period, i.e., $\tau_s$, and the encoder and decoder parameters remain unchanged at all times.
In contrast, the asynchronous behavior induced by a DoS attack may far exceed one sampling period. Furthermore, the new mode is unknown to the decoder if there exists a switching during the DoS attack interval. In other words, the encoder’s mode (which matches the subsystem’s mode) and the decoder’s mode (which matches the controller’s mode) become different, potentially destabilizing the closed-loop system due to incorrect decoding. Therefore, the status of the DoS attack during the asynchronous interval should be further investigated.

\begin{table}
	\centering
	\caption{Different cases of updating $E_k^e$ and $x^{e*}_k$ for \textbf{\textit{Strategies 1-3}}}
	\label{Table:case}
	\renewcommand\arraystretch{1}
		\begin{tabular}{ccccc}
			\hline
			$SY _{t_{k}}$ &$SY _{t_{k+1}}$ & 	$ACK_{t_k}$ & $ACK_{t_{k+1}}$ &  Cases \\
			\hline
			1/0& 1& 1&1/0& \begin{example1}\label{case_1}\end{example1} \\ 
			1/0& 1& 	0&1/0&    \begin{example1}\label{case_2}\end{example1}\\ 
			1& 0&1&1&   \begin{example1}\label{case_3}\end{example1}\\
			1& 0&1&0&   \begin{example1}\label{case_4}\end{example1}\\
			1& 0&0&0&  \begin{example1} \label{case_5}\end{example1}\\
			1& 0&0&1&   \begin{example1}\label{case_6}\end{example1}\\
			0& 0&0&0&  \begin{example1} \label{case_7}\end{example1}\\
			0& 0&0&1&   \begin{example1} \label{case_8}\end{example1}\\
			\hline
		\end{tabular}
	\end{table}
	For clarity, let $ACK_{t_k}$ represent the ACK signal at time instant $t_k$. Define the synchronous stage flag as $SY_{t_k} = 1$ during the interval $[t_{k-1}, t_k)$. When $SY_{t_k} = 0$, the system is operating in the asynchronous stage.
	Depending on the values of $ACK_{t_k}$, $ACK_{t_{k+1}}$, $SY_{t_k}$, and $SY_{t_{k+1}}$, the update laws for $E_k^e$ and $x^{e*}_k$ can be categorized into eight cases. See Table \ref{Table:case} for details.
	The analysis presented above pertains to \textbf{\textit{Strategies 1-3}}, with the help of the ACK technique. And the classification for \textbf{\textit{Strategy 4}} will be discussed in the subsequent section.
	\subsection{Main Objective}
	Here, we first present a definition and an assumption for this paper. 
	
	\begin{definition}[\hspace{-0.015cm}\cite{DT}]
		For any two consecutive switching instants $t_{s}$ and $t_{s+1}$ with $s \in \mathbb{N}^{+}$, if $t_{s+1} - t_s \geq \tau_d>0$, then $\tau_d$ is called the dwell time.
	\end{definition}
	
	\begin{assumption}[Initial state bound]\label{initial}
		The initial state $x(0)$ is bounded by a constant $E_0> 0$, i.e., $\|x(0)\| \leq {E}_0$.
	\end{assumption}
	
	Assumption \ref{initial} has been widely adopted in the literature \cite{Liu2022,output1,output2}. Furthermore, the value of $E_0$ can be determined using the \textit{zooming-out} strategy \cite{quan_2014,quan_2018}. 
	
	The aim of this paper is to design appropriate update laws for the quantizers and the corresponding switching law in each strategy, ensuring the stability of switched systems under DoS attack. Fig. \ref{fig:structure} shows the structure of this paper.

	\begin{figure}
		\centering
		\includegraphics[width=0.85\linewidth]{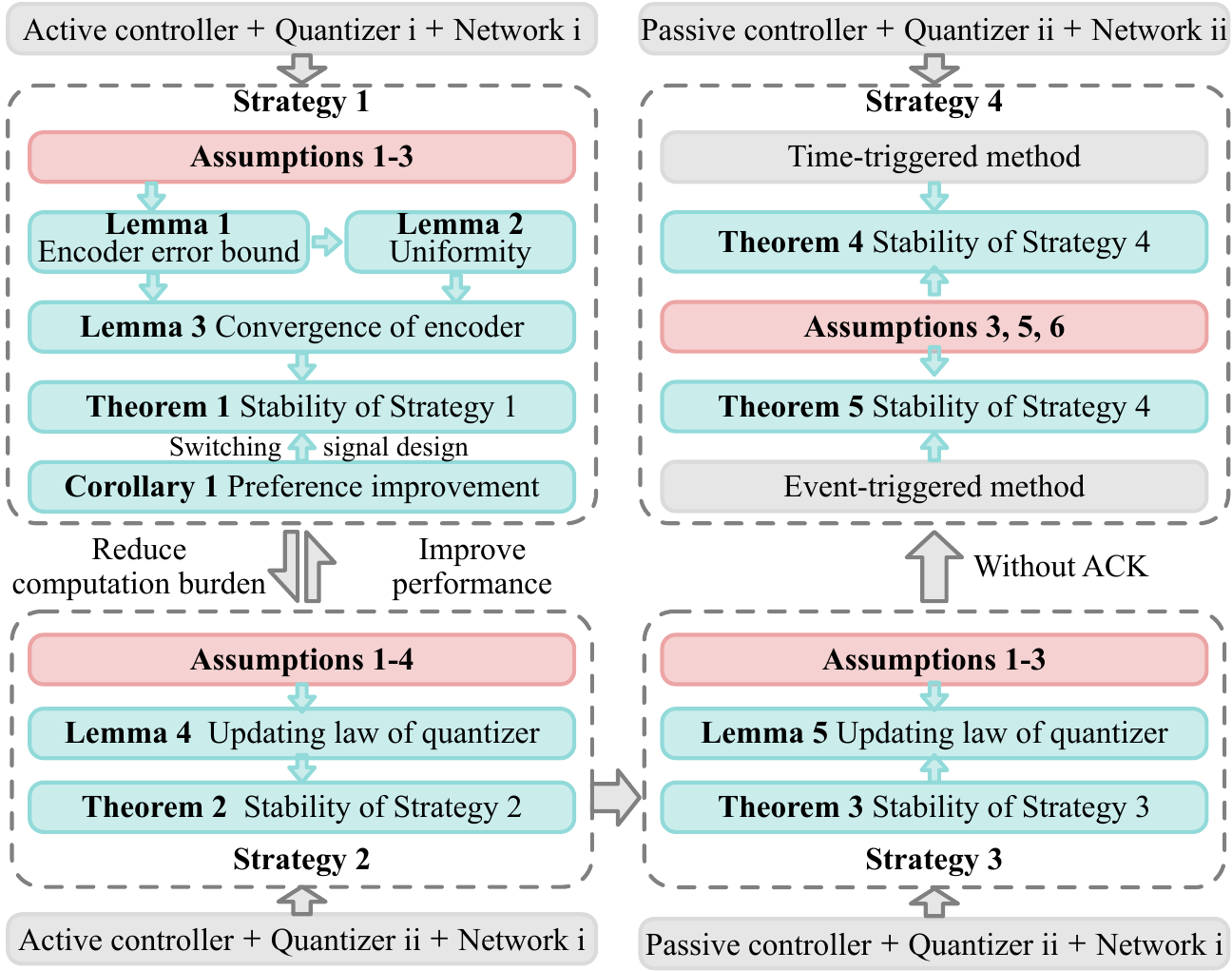}
		\caption{\textcolor{blue}{The structure of this paper}}
		\label{fig:structure}
	\end{figure}
	
\color{black}
\section{Active Quantized Control Strategy} \label{sec_active}

It is well known that DoS attack may degrade system performance and even destroy stability by disrupting the control signal. However, an appropriate control signal, provided by an active controller, can stabilize the system. As stated in \cite{Rui2}, the system state still converges during synchronous intervals when a DoS attack occurs. Moreover, this holds for systems subject to bounded disturbances \cite{Rui3}.
Inspired by active control strategy in \cite{Rui2,Rui3}, we investigate \textbf{\textit{Strategies 1-2}}, which incorporate the \textit{active controller}. The update laws for the quantization parameters are derived to ensure that the system state remains within the quantization range. Subsequently, a sufficient condition concerning DoS attack constraints and the switching signal is proposed to guarantee closed-loop stability.
\subsection{Closed-loop System Using Active Controller}		
In the sequel, we give the dynamics of the closed-loop system using \textit{active controller}.
First of all,  define the error between system state and predicted state 
\begin{equation}\label{equ_error}
	e(t) = x(t)- \hat{x}(t),
\end{equation}
then the dynamics of the closed-loop system admits
\begin{equation}\label{equ_x_dot}
	\dot x(t)  = \overline{A}_{\sigma(t)\hat \sigma(t)}x (t) + B_{\sigma(t) }K_{\hat \sigma(t) } e(t)
\end{equation}
where $\overline{A}_{\sigma(t)\hat \sigma(t)} = A_{\sigma(t)}+B_{\sigma(t)}K_{\hat \sigma(t)}$. 
The closed-loop system is divided into synchronous stage and asynchronous stage with/without DoS attack. 

$\bullet$ Synchronous stage:
In this stage, the modes of controller and subsystem are the same, that is, $\sigma(t)  = \hat \sigma(t)$. 
Suppose that $\sigma(t) = p$, then the solution of \eqref{equ_x_dot} has the form 
$	x(t) = e^{\overline{A}_{pp} (t-t_k)} x(t_k) + \int_{t_k}^{t} e^{\overline{A}_{pp} (t-s)}B_pK_p e(s) ds$
for all $t\in [t_k,t_{k+1})$. 	From  \eqref{equ_state} and  \eqref{equ_pre_state},  we have 
\begin{equation}\label{equ_edot}
	\dot {e} (t) = A_p e(t), 	 \forall t\in [t_k,t_{k+1}).
\end{equation}
which implies  
\begin{equation}\label{equ_etsol}
	e(t) = e^{{A}_p (t-t_k)} e(t_k) ,\forall t  \in [t_k,t_{k+1}).
\end{equation}
Thus, one has 
\begin{equation} \label{equ_xtk1_ds}
	\begin{aligned}
		&x(t_{k+1}) 
		= A_p^d x(t_k) + B_p^d e(t_k)
	\end{aligned}
\end{equation}
where $A_p^d  = e^{\overline{A}_{pp} \tau_s}$ and $B_p^d = \int_{0}^{\tau_s} e^{\overline{A}_{pp} (\tau_s-s)}B_pK_pe^{{A}_p s}ds$.

$\bullet$Asynchronous stage containing the switching instant: 
For interval $[t_s, t_{k+1})$ with $t_s$ being the switching instant, the dynamics of  the error becomes
$	\dot	e(t) 
= ( \overline{A}_{pq} - \overline{A}_{qq})x(t) + (B_pK_q- \overline{A}_{qq}) {e}(t)
=  \Pi_{pq}^1 x(t) + \Pi_{pq}^2  e(t)$
where $p= \sigma(t),~q= \hat{\sigma}(t)$, $\Pi_{pq}^1  =  \overline{A}_{pq} - \overline{A}_{qq}$, and $\Pi_{pq}^2  =  B_pK_q- \overline{A}_{qq}$. 
Let $e_z(t) =\left[\begin{matrix}
	x^T(t) &e^T(t)
\end{matrix}\right]^T$, we have
\begin{equation} \label{equ_ez}
	\begin{aligned}
		\dot	e_z(t) =~& \left[\begin{array}{cc}
			A_p + B_p K_q	& B_p K_q  \\
			\Pi_{pq}^1 &  \Pi_{pq}^2
		\end{array}
		\right] e_z(t) 
		=~ \widetilde{\mathcal{A}}_{pq} e_z(t).
	\end{aligned}
\end{equation}

From \eqref{equ_x_dot} and  \eqref{equ_edot}, one gets 
$	\dot e_z(t) =  \left[\begin{array}{cc}
	A_p	& B_pK_p  \\
	\textbf{0}	&  A_p 
\end{array}\right]e_z(t) = \overline{\mathcal{A}}_{p} e_z(t),~ \forall t\in [t_k,t_s).$
Hence, $	e_z(t_{k+1}) =  e^{\tilde{\mathcal{A}}_{pq}\overline{t}}e^{ \overline{\mathcal{A}}_{p} (\tau_s -\overline{t})} e_z(t_k)$ where $\overline{t} = t_{k+1}-t_s$. 
Then, it can be obtained that 
\begin{equation} \label{equ_closed_asy}
	\begin{aligned}
		x(t_{k+1}) 
		=~& \left[\begin{array}{cc}
			\textbf{I} &\textbf{0}
		\end{array}\right] e^{\widetilde{\mathcal{A}}_{pq}\overline{t}}e^{ \overline{\mathcal{A}}_{p} (\tau_s -\overline{t})} e_z(t_k) \\
		= ~& A_{pq}^d (\overline{t}) 	\tilde{x}(t_k)+ B_{pq}^d (\overline{t})  	\tilde{e}(t_k)
	\end{aligned}
\end{equation}
where $	\tilde{x}(t)  =  \left[\begin{array}{c}
	{x}(t)	\\
	\textbf{0}
\end{array}\right]$, $A_{pq}^d (\overline{t})  = \left[\begin{array}{cc}
	\textbf{I} &\textbf{0}
\end{array}\right] e^{\widetilde{\mathcal{A}}_{pq}(\overline{t})}e^{ \overline{\mathcal{A}}_{p} (\tau_s -  \overline{t})} $,  $	\tilde{e}(t)  =  \left[\begin{array}{c}
	\textbf{0} \\ {e}(t) 
\end{array}\right]$ and $B_{pq}^d (\overline{t})  =  A_{pq}^d (\overline{t})  $.

$\bullet$ Asynchronous stage without switching instant:
Based on the above analysis, we have 
$	x(t_{k+1}) 
=  A_{pq}^d (\tau_s) 	\tilde{x}(t_k)+ B_{pq}^d (\tau_s)  	\tilde{e}(t_k).$

Since the system is stabilizable under control gain $K_p$ and the divergence rate of state has an upper bound, there exist scalars  $\eta_{pq} ,~\xi_{pq}$,  $\rho_p>0$ and $0<\lambda_p <1$, such that 	$ \| ({A}_p^d)^k\| \leq \rho_{p} \lambda_p^k$ and $  \left \| \left[\begin{array}{c}
	{A}_{pq}^d(\tau_s)	\\
	\textbf{0}
\end{array}\right]^k\right\| \leq \xi_{pq} \eta_{pq}^k$  for all mode $p,q\in\mathcal{M}$. 

\subsection{ Strategy 1} \label{sec_quan1}
First, some definitions of notation are given. Define a time sequence $s_{k-1} \leq t_{k} < t_s\leq t_{k+1}<\cdots <  t_{k+m} = s_k (m\geq 1)$. 
$\llcorner t_s\lrcorner$ means the latest sampling instant $t_k$, i.e., $\llcorner t_s\lrcorner = t_k$. 	$	\ulcorner t_s \urcorner $ stands for the next sampling instant after $t_s$, i.e., $	\ulcorner t_s \urcorner = t_{k+1}$.
$\lceil t_s\rceil$ represents the next successful transmission instant $s_k$, i.e., $\lceil t_s\rceil = s_k$.

\begin{lemma}[Encoder error bound]\label{lemma1}
	For the switched system \eqref{equ_state}  with \textbf{\textit{Strategy 1}}, if the update law of  the quantizer has the form
	
	{\scriptsize  \vspace{-1em}	\begin{align}
			E_{k+1}^e = &
			\begin{cases}
				\begin{array}{ll}
					\frac{\Gamma_{\sigma(t_k)}}{N} {E_k^e} &\text{\cref{case_1}} \\
					\Gamma_{\sigma(t_k)} {E_k^e}  &\text{\cref{case_2}}\\
					\Gamma_{{\sigma}(t_k) \hat\sigma(t_k)}^{1} E_k^e	
					+ \Gamma_{{\sigma}(t_k) \hat\sigma(t_k)}^{2} \| x_k^{e*}\|  &\text{\cref{case_3,case_4,case_5,case_6}}\\
					\Gamma_{{\sigma}(t_k) \hat\sigma(t_k) }^3E_{k}^e&\text{\cref{case_7,case_8}}
				\end{array}
			\end{cases}\label{equ_Eke}\\
			x_{k+1}^{e*}=  &
			\begin{cases} \hspace{-0pt}
				\begin{array}{ll}
					e^{\big(A_{ \sigma(t_k) }+B_{ \sigma(t_k)} K_{ \sigma(t_k)} \big)\tau_s} c_k   & \hspace{-4pt} \text{\cref{case_1,case_3,case_4}}  \\ \hspace{-0pt}
					e^{\big(A_{ \sigma(t_k)} +B_{ \sigma(t_k)} K_{ \sigma(t_k)} \big)\tau_s} \hat{x}(t_k^-)& \hspace{-4pt} \text{\cref{case_2,case_5,case_6}} \\ \hspace{-0pt}
					[\begin{matrix}
						\textbf{I} &\textbf{0}
					\end{matrix}
					] e^{ \mathcal{A}_{{\sigma}(t_k)\hat \sigma(t_k) } n \tau_s} \left[\begin{matrix}
						\textbf{I} \\
						\textbf{I}
					\end{matrix}
					\right]  \hat x(\ulcorner t_s\urcorner)&\hspace{-4pt} \text{\cref{case_7,case_8}} 	
				\end{array}
			\end{cases}\label{equ_xke}
	\end{align}}where $\Gamma_p  = \| e^{A_p \tau_s} \|$, $\Gamma_{pq}^{1} =\max_{\overline{t}\in [0,\tau_s)} \left \|e^{\mathcal{A}_{pq} \overline{t}}  e^{\mathcal{A}_{qq} ( \tau_s -\overline{t})} \right\| +\Gamma_{pq}^{2} \frac{N}{N-1}$, $\Gamma_{pq}^{2}  =  \max_{\overline{t}\in [0,\tau_s)} \left \|e^{\mathcal{A}_{pq} \overline{t}}  e^{\mathcal{A}_{qq} ( \tau_s -\overline{t})}-e^{\hat{\mathcal{A}}_{q} \tau_s } \right \| $,   $\Gamma_{pq}^3 = \|e^{ \mathcal{A}_{pq}\tau_s}\| $ and  $n = \frac{\lceil t_{s} \rceil - \ulcorner t_s \urcorner }{\tau_s}$
	$(p,q\in \mathcal{M})$. 	  
	Then 
	\begin{equation}\label{equ_bound1}
		\|x(t_k)-x_k^{e*}\| \leq E_k^e,~ k\geq 0.
	\end{equation}
\end{lemma}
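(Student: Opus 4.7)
The plan is to prove \Cref{lemma1} by strong induction on $k$. The base case $k=0$ follows from \Cref{initial}: taking $x_0^{e*}=\mathbf{0}$ and $E_0^e=E_0$ gives $\|x(0)-x_0^{e*}\|\le E_0$. For the inductive step, I would assume $\|x(t_k)-x_k^{e*}\|\le E_k^e$ and then verify the bound at $t_{k+1}$ separately for each of the eight cases in \Cref{Table:case}, in each case expressing $x(t_{k+1})$ via the appropriate closed-loop formula (\eqref{equ_xtk1_ds} in synchronous cases and \eqref{equ_closed_asy} in asynchronous cases), subtracting the matching branch of \eqref{equ_xke}, and bounding the resulting residual with the triangle inequality, the inductive hypothesis, and the decoder error bound $\|x(t_k)-c_k\|\le E_k^e/N$ from \eqref{equ_error_bound}.

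For \cref{case_1,case_2} (synchronous over $[t_k,t_{k+1})$), the design of $x_{k+1}^{e*}$ is exactly the predictor state at $t_{k+1}$: with ACK, $\hat{x}(t_k^+)=c_k$, so $x_{k+1}^{e*}=e^{\overline{A}_{pp}\tau_s}c_k=\hat{x}(t_{k+1})$ and $x(t_{k+1})-x_{k+1}^{e*}=e(t_{k+1})=e^{A_p\tau_s}(x(t_k)-c_k)$, whose norm is at most $\Gamma_p E_k^e/N$; without ACK, $\hat{x}(t_k^+)=\hat{x}(t_k^-)=x_k^{e*}$ and the same propagation gives the bound $\Gamma_p E_k^e$. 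These match the two synchronous branches of \eqref{equ_Eke} directly.

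For \cref{case_3,case_4,case_5,case_6}, a switch occurs inside $[t_k,t_{k+1})$. Here $x(t_{k+1})=[\mathbf{I}~\mathbf{0}]e^{\widetilde{\mathcal{A}}_{pq}\overline{t}}e^{\overline{\mathcal{A}}_{q}(\tau_s-\overline{t})}[x(t_k)^T,e(t_k)^T]^T$, while $x_{k+1}^{e*}=e^{\overline{A}_{qq}\tau_s}v=[\mathbf{I}~\mathbf{0}]e^{\overline{\mathcal{A}}_{q}\tau_s}[v^T,\mathbf{0}^T]^T$ with $v=c_k$ or $\hat{x}(t_k^-)$ as appropriate. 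The key algebraic step is to add and subtract $[\mathbf{I}~\mathbf{0}]e^{\widetilde{\mathcal{A}}_{pq}\overline{t}}e^{\overline{\mathcal{A}}_{q}(\tau_s-\overline{t})}[v^T,v^T]^T$ so that the residual splits into (i) a quantization/prediction-error term multiplied by the matrix $e^{\widetilde{\mathcal{A}}_{pq}\overline{t}}e^{\overline{\mathcal{A}}_{q}(\tau_s-\overline{t})}$, which produces the $\Gamma_{pq}^1$ coefficient after invoking $\|x(t_k)-v\|\le E_k^e/N$ or $\le E_k^e$ from the induction hypothesis, and (ii) a state-dependent ``switch mismatch'' term whose norm is controlled by $\|e^{\widetilde{\mathcal{A}}_{pq}\overline{t}}e^{\overline{\mathcal{A}}_{q}(\tau_s-\overline{t})}-e^{\widehat{\mathcal{A}}_{q}\tau_s}\|\cdot\|v\|$; maximizing over $\overline{t}\in[0,\tau_s)$ yields exactly the definitions of $\Gamma_{pq}^1$ and $\Gamma_{pq}^2$, and the factor $\tfrac{N}{N-1}$ appearing in $\Gamma_{pq}^1$ comes from re-expressing $\|v\|$ in terms of $\|x_k^{e*}\|$ and the quantization radius. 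Finally, for \cref{case_7,case_8}, the system stays asynchronous across several sampling periods, so one iterates the asynchronous map $e^{\widetilde{\mathcal{A}}_{pq}\tau_s}$ $n$ times starting from the predictor value at $\ulcorner t_s\urcorner$; the center \eqref{equ_xke} is precisely $[\mathbf{I}~\mathbf{0}]e^{\widetilde{\mathcal{A}}_{pq}n\tau_s}[\mathbf{I};\mathbf{I}]\hat{x}(\ulcorner t_s\urcorner)$, so the residual is the propagated error of norm at most $\Gamma_{pq}^3 E_k^e$.

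The main obstacle will be the bookkeeping in \cref{case_3,case_4,case_5,case_6}: one must choose the correct value of $v$ (depending on $ACK_{t_k}$), correctly identify $p,q$ as the post-switch and controller modes in $\widetilde{\mathcal{A}}_{pq}$ versus $\sigma(t_k),\hat\sigma(t_k)$ in the update law, and verify that the supremum over the switch location $\overline{t}$ is absorbed into $\Gamma_{pq}^1$ and $\Gamma_{pq}^2$. No deep analytical difficulty is expected beyond this careful case decomposition and matrix-exponential manipulation; the result then follows by taking the cases together and concluding the induction.
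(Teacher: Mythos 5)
Your proposal follows essentially the same route as the paper's proof: a case-by-case propagation of the prediction error $e(t)$ through the relevant matrix exponentials (giving $\Gamma_p E_k^e/N$ and $\Gamma_p E_k^e$ in \cref{case_1,case_2}), the add-and-subtract comparison of the switched flow against the nominal flow $e^{\hat{\mathcal{A}}_{q}\tau_s}$ applied to the duplicated predictor value --- which is exactly how the paper produces $\Gamma_{pq}^{1}$ and $\Gamma_{pq}^{2}$ together with the bound $\|\hat x(t_k^+)\|\le \frac{N-1}{N}E_k^e+\|x_k^{e*}\|$ --- and iteration of the one-step asynchronous map for \cref{case_7,case_8}. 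The only cosmetic difference is that the paper carries out the asynchronous cases in the $(x,\hat x)$ augmented coordinates (matrices $\mathcal{A}_{pq}$, $\mathcal{A}_{qq}$), which is what makes the stated constants come out verbatim, whereas you initially write the dynamics in the $(x,e)$ coordinates ($\widetilde{\mathcal{A}}_{pq}$, $\overline{\mathcal{A}}_{p}$); switching to the former removes that mismatch but does not change the argument.
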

\begin{proof}
	According to  \eqref{equ_error_bound}, \eqref{equ_pre_state} and \eqref{equ_error}, the error when the quantization signal is successfully transmitted  turns to be
	\begin{equation}\label{equ_error_bound2}
		\|	e(t_k^+)\| = \| x(t_k) - c_k^*\|  \leq \frac{E_k^e}{N}.
	\end{equation}
	On the other hand, if the transmission of quantization signal  fails, one has $\hat{x}(t_k^+) = \hat{x}(t_k^-)$
	and	
	\begin{equation}\label{equ_etk}
		\| e(t_k^+) \| = \| e(t_k^-) \| \leq E_k^e.
	\end{equation}
	
	\textbf{\cref{case_1}: $ACK_{t_k} = 1$ and $SY_{t_{k+1}}=1$.}
	
	Let the system mode be $p$ during $[t_k,t_{k+1})$. 
	Combining \eqref{equ_etsol} with  \eqref{equ_error_bound2},  the error at $t_{k+1}$ admits
	\begin{equation}\label{equ_Eke1}
		\|	e(t_{k+1}^-)\|  =\|e^{A_p \tau_s} e(t_k^+) \|  \leq \Gamma_p \frac{E_k^e}{N}  =:E^{e}_{k+1}
	\end{equation}
	where $\Gamma_p  = \| e^{A_p \tau_s} \|$.
	Let 
		$x^{e*}_{k+1}  :=	 \hat{x}(t_{k+1}^-) = e^{(A_p+B_pK_p)\tau_s} c_k$
	Based on \eqref{equ_Eke1} and the  fact
	\begin{equation}\label{equ_lem1_1}
		\|	e(t_{k+1}^-)\|  = \|x(t_{k+1}^-)-\hat x(t_{k+1}^-)\| =  \|x(t_{k+1}^-)-x^{e*}_{k+1} \|, 
	\end{equation}
	formula \eqref{equ_bound1} follows directly.

	\textbf{\cref{case_2}: $ACK_{t_k} = 0$ and $SY_{t_{k+1}}= 1$.}

	From \eqref{equ_edot} and \eqref{equ_etk}, we have 
	\begin{equation}\label{equ_Eke2}
		\|	e(t_{k+1}^-)\|  =\|e^{A_p \tau_s} e(t_k^+) \|  \leq \Gamma_p {E_k^e}  =: E^{e}_{k+1}
	\end{equation}
	and 
	$\hat{x}(t_{k+1}^-) = e^{(A_p+B_pK_p)\tau_s} \hat{x}(t_k^-) =:	x^{e*}_{k+1} $.
	Similar to \eqref{equ_lem1_1}, we can get  \eqref{equ_bound1}.
	
	\textbf{\cref{case_3,case_4}:  $ACK_{t_k} = 0$ and $SY_{t_{k+1}}= 0$.}
	
	Suppose that the system mode is $p$, i.e.,  $\sigma(t_{k+1}) = p$ and the predictor mode is $q$, i.e., $\hat{\sigma}(t_{k+1}) = q$. 
	By introducing a new variable $z(t) = \left[\begin{matrix}
		x^T(t) &
		\hat x^T(t)
	\end{matrix}
	\right]^T$, we know that   for $t\in [  t_s ,t_{k+1})$
	\begin{equation}\label{equ_asy_error_asy}
		\dot z(t) = \left[\begin{array}{cc}
			A_p & B_pK_q \\
			\textbf{0}& A_q+B_qK_q
		\end{array}
		\right]z(t) = \mathcal{A}_{pq} z(t)
	\end{equation}
	and   for $t\in [ t_k, t_s)$, 
	$	\dot z(t) = \left[\begin{array}{cc}
		A_q & B_qK_q \\
		\textbf{0}& A_q+B_qK_q
	\end{array}
	\right]z(t) = \mathcal{A}_{qq} z(t).$
	The solution of the above differential equation at instant $t_{k+1}$ is
	\begin{equation}\label{equ_equ_asy_error_asy}
		z(t_{k+1}) = e^{\mathcal{A}_{pq} \overline{t}}  e^{\mathcal{A}_{qq} ( \tau_s -\overline{t})} z(t_k^+)
	\end{equation}
	where $ \overline{t} = t_{k+1}-t_s$.
	Furthermore, define the auxiliary system as
	$	\dot{\hat{z}}(t)  =  \hat{\mathcal{A}}_{q} \hat z(t),~ \forall t\in [  t_k  ,t_{k+1})$
	with $\hat z(  t_k)=\left[\begin{array}{c}
		\hat x(  t_k^+ ) \\
		\hat x(  t_k^+ )
	\end{array}
	\right]$ and $ \hat{\mathcal{A}}_{q} =\left[\begin{array}{cc}
		\overline{A}_{qq}&\textbf{0}  \\
		\textbf{	0}& \overline{A}_{qq}
	\end{array}
	\right] $.
	Thus, one has
	\begin{align}
		&\| z(t_{k+1}) - \hat  z(t_{k+1})\|\label{equ_Eke3_asy}   \\
		\leq ~&\big  \|e^{\mathcal{A}_{pq} \overline{t}}  e^{\mathcal{A}_{qq} ( \tau_s -\overline{t})} \big\| \big\| z(t_{k}^+)-\hat z(t_{k}) \big\| \notag  \\&
		+ \big\|	e^{\mathcal{A}_{pq} \overline{t}}  e^{\mathcal{A}_{qq} ( \tau_s -\overline{t})}-e^{\hat{\mathcal{A}}_{q} \tau_s }  \big\| \big\|  \hat z (t_{k})\big \|   \notag \\
		\overset{(a)}{\leq}  ~& \hat  \Gamma_{pq}^{1}E_k^e+\Gamma_{pq}^{2}\big (\frac{N-1}{N} E_k^e +\| x_k^e\| \big)    \notag  \\
		\leq ~&\big (\hat  \Gamma_{pq}^{1}+\Gamma_{pq}^{2}\frac{N-1}{N}  \big)E_k^e + \Gamma_{pq}^{2} \| x_k^{e*}\| =:  \Gamma_{pq}^{1}E_k^e + \Gamma_{pq}^{2} \| x_k^{e*}\| \notag
	\end{align}
	where  $\hat \Gamma_{pq}^{1}  = \max_{\overline{t}\in [0,\tau_s)} \left \|e^{\mathcal{A}_{pq} \overline{t}}  e^{\mathcal{A}_{qq} ( \tau_s -\overline{t})} \right\| $,  $\Gamma_{pq}^{2}  =  \max \limits_{\overline{t}\in [0,\tau_s)} \left \|e^{\mathcal{A}_{pq} \overline{t}}  e^{\mathcal{A}_{qq} ( \tau_s -\overline{t})}-e^{\hat{\mathcal{A}}_{q} \tau_s } \right \| $ and $\Gamma_{pq}^{1} = \hat \Gamma_{pq}^{1}+\Gamma_{pq}^{2} \frac{N}{N-1}$.
	Inequality (a) can be derived in terms of $\left \| \left[\begin{array}{c}
		x(t_k) - 	\hat x(t_k^+) \\
		\hat x(t_k^+ )	-\hat x(t_k^+)
	\end{array}
	\right]   \right\| = \|x(t_k) - 	\hat x(t_k^+)\| =\|e(t_k^+)\| \leq E_k^e$ and $\| \hat{z}(t_k)\| = \| \hat{x}(t_k^+)\| \leq \frac{N-1}{N} E_k^e +\| x_k^{e*}\|$.
	The quantization center is 
	$	x_{k+1}^{e*}: = 	e^{(A_q+B_q K_q )\tau_s} c_k ,$
	then \eqref{equ_bound1} is satisfied.
	
	\textbf{\cref{case_5,case_6}:  $SY_{t_{k}}= 1$, $SY_{t_{k+1}}= 0$ and $ACK_{t_k} = 0$.}

	Due to $ACK_{t_k}=0$, it is obvious that $\hat{x} (t_k^-) = \hat{x} (t_k^+) = x^{e*}_k$. 
	Therefore, repeating the process in \eqref{equ_Eke3_asy}, one has 
	\begin{align}
		&\| z(t_{k+1}) - \hat  z(t_{k+1})\|\\
		\leq~& \big \|e^{\mathcal{A}_{pq} \overline{t}}  e^{\mathcal{A}_{qq} ( \tau_s -\overline{t})} \big\|  E_k^e+\big \|e^{\mathcal{A}_{pq} \overline{t}}  e^{\mathcal{A}_{qq} ( \tau_s -\overline{t})}-e^{\hat{\mathcal{A}}_{q} \tau_s } \big \| \| x_k^{e*}\|  \notag \\
		\leq ~&  \hat \Gamma_{pq}^{1} E_k^e + \Gamma_{pq}^{2} \| x_k^{e*}\|
		\leq \Gamma_{pq}^{1}E_k^e + \Gamma_{pq}^{2} \| x_k^{e*}\|.\notag
	\end{align}

	The main difference is that  the quantization center is calculated based on the predicted state instead of the value from decoded signal since a DoS attack occurs at instant $t_k$, that is, 
	$	x_{k+1}^{e*}: = 	e^{(A_q+B_q K_q )\tau_s} \hat{x}(t_k^+).$
	It is obvious that \eqref{equ_bound1} holds.

	\textbf{\cref{case_7,case_8}:  $SY_{t_{k}}= 0$, $SY_{t_{k+1}}= 0$ and $ACK_{t_k} = 0$.}
	
	From \eqref{equ_asy_error_asy}, the solution of the above differential equation admits
	$
	z(t_{k+1}) = e^{\mathcal{A}_{pq} n\tau_s} z( \ulcorner t_s \urcorner )
	$
	where $ n \tau_s =t_{k+1}-\ulcorner t_s \urcorner$.
	
	Define the auxiliary system as
	\begin{equation}\label{equ_pre_error2}
		\dot{\tilde{z}}(t)  = \mathcal{A}_{pq} \tilde z(t),~ \forall t\in [\ulcorner  t_s \urcorner ,t_{k+1})
	\end{equation}
	where $\tilde z(\ulcorner  t_s \urcorner) =\left[\begin{array}{c}
		\hat x( \ulcorner  t_s^+ \urcorner ) \\
		\hat x( \ulcorner  t_s^+ \urcorner )
	\end{array}
	\right]$.
	Thus, one has
	\begin{align*}
		&	\| z(t_{k+1}) - \tilde  z(t_{k+1})\|
		\leq  \|e^{ \mathcal{A}_{pq}{n\tau_s}}\| \|z(\ulcorner  t_s \urcorner) - \tilde z (\ulcorner  t_s \urcorner)\| \notag \\
		\overset{(b)}{\leq} ~&\|e^{ \mathcal{A}_{pq}\tau_s}\| ^nE_{k+1-n}^e
		= (\Gamma_{pq}^3)^nE_{k+1-n}^e
	\end{align*}
	where $\Gamma_{pq}^3 = \|e^{ \mathcal{A}_{pq}\tau_s}\| $. Inequality (b) follows from \eqref{equ_Eke3_asy} and the update of $x_{k+1}^{e*}$.
	Correspondingly, \eqref{equ_pre_error2} derives 
	$ [\begin{array}{cc}
		\textbf{I} &\textbf{0}
	\end{array} ]
	\tilde	z(t_{k+1}^-)  
	=[\begin{array}{cc}
		\textbf{I} &\textbf{0}
	\end{array}
	] e^{ \mathcal{A}_{pq}n\tau_s}  \left[\begin{array}{c}
		\textbf{I} \\
		\textbf{I}
	\end{array}
	\right] \hat x(\ulcorner t_s\urcorner) =:x^{e*}_{k+1} $,
	then \eqref{equ_bound1} is satisfied.
	
	To sum up, inequality \eqref{equ_bound1} is always true in each case.
\end{proof}
\color{black}
\begin{remark}
	In the encoder's update law presented in \Cref{lemma1}, it is obvious that \cref{case_3,case_4}, \cref{case_5,case_6}, and \cref{case_7,case_8} are identical. However, for the decoder, the update laws among these three pairs are different. The encoder will update its parameters according to the system's mode, while the decoder fails due to the presence of DoS attack. The asynchronous duration induced by DoS attack exceeds one sampling period. As a result, within the asynchronous interval, the decoder's parameters will only be changed in terms of the system's historical mode, i.e., controller mode.
	The switching signal is transmitted to the controller at the first sampling instant immediately after the DoS attack ends. At that instant, the decoder updates the quantization parameters according to the new mode. In \cite{quan_2014}, only \cref{case_3} is considered, which becomes unsuitable for this paper. Adopting the same approach would cause inconsistencies between the rules of the encoder and decoder, potentially leading to instability.
	To address this issue, the switching asynchronous interval $n$ is transmitted. Moreover, 	
	the method mentioned in \cite{quan_2014} is used to reduce the complexity for \cref{case_3,case_4,case_5,case_6}. This treatment simplifies the calculation of the quantizer center, reducing overall complexity.
	
\end{remark}

In what follows, the update law for the decoder is provided. Additionally, the quantization parameters of both encoder and decoder should be the same during attack-free intervals, as this is crucial for ensuring correct transmission.
When a DoS attack occurs, no information can be successfully transmitted. The property of uniformity is not required during DoS attack intervals. 

\begin{lemma}[Uniformity]\label{lemma2}
	For the switched system \eqref{equ_state}  with \textbf{\textit{Strategy 1}}, if the update law of the quantizer obeys	
	
	{\scriptsize \vspace{-1em}	 \begin{align}
			E_{k+1}^d = ~&
			\begin{cases}
				\frac{	\Gamma_{\hat{\sigma}(t_k)}}{N} E_k^d &\text{\cref{case_1}} \\
				\Gamma_{\hat \sigma(t_k)} {E_k^d}  &\text{\cref{case_2,case_4,case_5,case_7}}\\
				\Gamma_{{\sigma}(t_k) \hat\sigma(t_k)}^{1} E_k^d	
				+ \Gamma_{{\sigma}(t_k) \hat\sigma(t_k)}^{2} \| x_k^{d*}\| &\text{{\cref{case_3,case_6}} }\\
				\hspace{-0.23cm}
				\begin{array}{l}
					\Gamma_{{\sigma}(t_k)\hat{\sigma}(t_k)}^4(n)E_{{k-n}}^d \\
					+\Gamma_{{\sigma}(t_k)\hat{\sigma}(t_k)}^5 (n)\| x_{k-n}^{d*}\|	\end{array}  &\text{\cref{case_8}  } \\
			\end{cases}\label{equ_Ekd}\\
			x_{k+1}^{d*}=  ~&
			\begin{cases}
				e^{\big(A_{\hat{\sigma}(t_k)}+B_{\hat{\sigma}(t_k)}K_{\hat{\sigma}(t_k)}\big)\tau_s} c_k   &\text{ \cref{case_1,case_3,case_4}} 	\\
				e^{\big(A_{\hat{\sigma}(t_k)}+B_{\hat{\sigma}(t_k)}K_{\hat{\sigma}(t_k)}\big)\tau_s} \hat{x}(t_k^-)&\text{ \cref{case_2,case_5,case_7}} \\
				[\begin{matrix}
					\textbf{I} &\textbf{0}
				\end{matrix}
				] e^{ \mathcal{A}_{{\hat{\sigma}(t_k){\sigma}(t_k)}}n \tau_s}  \left[\begin{matrix}
					\textbf{I} \\
					\textbf{I}
				\end{matrix}
				\right] 	 \hat x(\ulcorner t_s\urcorner)&\text{\cref{case_6,case_8}}
			\end{cases}\label{equ_xkd}
	\end{align} }
	where $\Gamma_p  = \| e^{A_p \tau_s} \|$, $\Gamma_{pq}^{1} =\max_{\overline{t}\in [0,\tau_s)} \left \|e^{\mathcal{A}_{pq} \overline{t}}  e^{\mathcal{A}_{qq} ( \tau_s -\overline{t})} \right\| +\Gamma_{pq}^{2} \frac{N}{N-1}$, $\Gamma_{pq}^{2}  =  \max\limits_{\overline{t}\in [0,\tau_s)} \left \|e^{\mathcal{A}_{pq} \overline{t}}  e^{\mathcal{A}_{qq} ( \tau_s -\overline{t})}-e^{\hat{\mathcal{A}}_{q} \tau_s } \right \| $,   $\Gamma_{pq}^3 = \|e^{ \mathcal{A}_{pq}\tau_s}\| $,  $	\Gamma_{pq}^4(n) =( \Gamma_{pq}^3)^n \Gamma_{pq}^1 $,  $	\Gamma_{pq}^5(n) = (\Gamma_{pq}^3)^n \Gamma_{pq}^2 $
	$(p,q\in \mathcal{M})$ and $n = \frac{\lceil t_{s} \rceil - \ulcorner t_s \urcorner }{\tau_s}$. Then the quantization parameters in encoder and decoder are the same at the DoS-free instants.  
	
\end{lemma}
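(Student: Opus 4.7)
The plan is to establish uniformity by induction on the sampling index, showing $E_k^e=E_k^d$ and $x_k^{e*}=x_k^{d*}$ at every DoS-free instant. The base case is immediate once encoder and decoder are initialized identically from the bound $E_0$ of Assumption \ref{initial}. For the inductive step, I fix $t_{k+1}$ with $ACK_{t_{k+1}}=1$ and assume the equality at every DoS-free $t_j$ strictly before $t_{k+1}$. Inspecting Table \ref{Table:case}, only \cref{case_1,case_2,case_3,case_6,case_8} are relevant, since the remaining rows have $ACK_{t_{k+1}}=0$.

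For the synchronous rows \cref{case_1,case_2} I exploit $SY_{t_{k+1}}=1$, which enforces $\sigma(t_k)=\hat\sigma(t_k)$, so the single-mode subscripts of $\Gamma$ and of the propagator $e^{(A+BK)\tau_s}$ in \eqref{equ_Eke}--\eqref{equ_xke} coincide with those in \eqref{equ_Ekd}--\eqref{equ_xkd}. For \cref{case_1} ($ACK_{t_k}=1$) the remaining operands $E_k^e=E_k^d$ and the common decoded state $c_k$ are delivered by the induction hypothesis at $t_k$. For \cref{case_2} ($ACK_{t_k}=0$) I trace backwards through the continuously attacked but synchronous window from the previous DoS-free instant up to $t_k$; on that window both sides stay in single-mode rows with $\Gamma_\sigma=\Gamma_{\hat\sigma}$ and compute $\hat x(t_k^-)$ from the same initial condition using the same predictor \eqref{equ_pre_state}, so the two parameter streams remain equal throughout and in particular at $t_k$.

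For the asynchronous rows \cref{case_3,case_6,case_8} the decoder must evaluate the two-mode constants $\Gamma^1,\Gamma^2,\Gamma^3,\Gamma^4(n),\Gamma^5(n)$ and reproduce $\hat x(\ulcorner t_s\urcorner)$, all of which depend on the actual system mode during the DoS interval. This information is available because, as stated in Contribution~4 of the introduction, the switching instant $t_s$ and its switching signal are bundled with the sensor signal at the first successful transmission after a DoS episode. Equipped with this, at $t_{k+1}$ the decoder reads off $\sigma(t_k)$ and the asynchronous length $n=(\lceil t_s\rceil-\ulcorner t_s\urcorner)/\tau_s$, and reproduces $\hat x(\ulcorner t_s\urcorner)$ by running its own copy of the predictor \eqref{equ_pre_state} from the most recent shared decoded state. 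The subscripts in the encoder and decoder formulas then align, and the operands ($E_k^e, x_k^{e*}$ in \cref{case_3,case_6}; $E_{k-n}^e, x_{k-n}^{e*}$ in \cref{case_8}) agree with their decoder counterparts by the induction hypothesis at the appropriate prior DoS-free instant.

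The principal obstacle is reconciling \cref{case_8}: the encoder iterates the one-step rule $E_{k+1}^e=\Gamma^3_{\sigma(t_k)\hat\sigma(t_k)}E_k^e$ through the DoS window, whereas the decoder leaps directly via $E_{k+1}^d=\Gamma^4(n)E_{k-n}^d+\Gamma^5(n)\|x_{k-n}^{d*}\|$. To close this gap I would unroll the encoder's iterated \cref{case_7,case_8} updates back to the entry sample $\ulcorner t_s\urcorner$, which is itself governed by \cref{case_4} or \cref{case_5}, and verify that the telescope collapses to $(\Gamma^3)^n\Gamma^1 E_{k-n}^e+(\Gamma^3)^n\Gamma^2\|x_{k-n}^{e*}\|$; the identities $\Gamma^4(n)=(\Gamma^3)^n\Gamma^1$ and $\Gamma^5(n)=(\Gamma^3)^n\Gamma^2$ then match the decoder's leap exactly. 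The same telescoping argument applied to \eqref{equ_xke}--\eqref{equ_xkd} reconciles the quantization centers, completing the inductive step and hence the lemma.
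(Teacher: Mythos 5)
Your proof is correct, and it is essentially the argument the paper relies on but never writes out: the paper's own ``proof'' of \Cref{lemma2} is a single sentence declaring the claim obvious in \cref{case_1,case_2,case_3}, \textbf{\ref{case_7}}, \textbf{\ref{case_8}} and deferring the remaining cases to a remark. Your case-by-case matching --- in particular unrolling the encoder's iterated one-step $\Gamma_{pq}^3$ updates across the asynchronous DoS window so that they collapse into the decoder's one-shot leap via $\Gamma_{pq}^4(n)=(\Gamma_{pq}^3)^n\Gamma_{pq}^1$ and $\Gamma_{pq}^5(n)=(\Gamma_{pq}^3)^n\Gamma_{pq}^2$ in \cref{case_8} --- supplies exactly the details the paper omits, and the transmission of the switching instant and of $n$ (Contribution~4) is indeed what makes the decoder's evaluation of the two-mode constants possible. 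One small imprecision: your stated inductive invariant covers only DoS-free instants, yet \cref{case_2,case_6} and a \cref{case_5}-type entry into the \cref{case_8} telescope require $E_j^e=E_j^d$ and $x_j^{e*}=x_j^{d*}$ at certain attacked-but-synchronous instants $t_j$; your backward-tracing through the synchronous attacked window does establish this, so the invariant should simply be stated to include those instants rather than only the DoS-free ones.
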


The proof is obvious in \cref{case_1,case_2,case_3}, \textbf{\ref{case_7}}, \textbf{\ref{case_8}} and it is omitted here.

\begin{remark}
	When an attack occurs and the switching instant falls within this DoS attack interval, i.e., \cref{case_4,case_5,case_6,case_7,case_8}, the real-time values \( E_{k+1}^e \), \( E_{k+1}^d \), \( x_{k+1}^{e*} \), and \( x_{k+1}^{d*} \) are constantly changing. Since the system is suffering from DoS attack, no signal will be decoded. In other words, these differences do not cause incorrect decoding. At the end of the DoS attack, i.e., \cref{case_6,case_8}, the quantization parameters are synchronized according to \eqref{equ_Eke}--\eqref{equ_xke} and \eqref{equ_Ekd}--\eqref{equ_xkd}. Therefore, the signal after decoding is accurate.
\end{remark}

Now we have derived the update laws for the quantizer. However, the quantization level, switching signal, and DoS attack constraints have not yet been addressed. For non-switched systems without attacks, selecting an appropriate quantization level would assure the convergence of the quantization parameters. However, when accounting for the effects of DoS attack and switching, additional conditions on the DoS attack and switching signal are required to guarantee the convergence of quantization parameters throughout the entire process. Furthermore, it is necessary to investigate whether the state will converge to the origin over time.

\begin{theorem}[Stability]\label{thm1}
	For  the switched system \eqref{equ_state}  with  \textbf{\textit{Strategy 1}}, if the quantizer updates its parameters according to \eqref{equ_Eke}, \eqref{equ_xke}, \eqref{equ_Ekd} and \eqref{equ_xkd}, the  quantization level $N> \Gamma$ is odd,  there exists scalar $\rho_{p}  \| {B}_{p}^d \| +b<1$ such that  DoS attack and switching signal satisfy
	\begin{equation}\label{equ_con1}
		\frac{\tau_s}{\tau_d}\log \frac{\overline{\Gamma}}{\Gamma^{N_{\max}}}+ \left(\frac{1}{{T}}+\frac{\tau_s}{\tau_D}-1\right ) \log N+\log {\Gamma}\leq \log b
	\end{equation}
	and 
	\begin{equation}\label{equ_con2}
		\tau_d\geq \max_{p, q\in \mathcal{M}}  \frac{(N_{\max}-1)\log \frac{\nu_p }{\mu_{pq}^3}-\log \frac{\hat{\nu} _p\mu_{pq}^2\mu_{pq}^1}{\mu_{pq}^3 } }{\log\nu_p }\tau_s
	\end{equation}
	where  $\Gamma = \max_{p\in\mathcal{M}}  \Gamma_{p}$, $\overline{\Gamma}=  \max\limits_{\substack{p,q\in\mathcal{M},p\neq q, \\m\in [0,N_{\max}-1]}}(\Gamma_{pq}^3)^{m-1}\Gamma_{pq}^{1}$, 	  $\nu_p = \max\{ \lambda_p , \rho_{p}  \| {B}_{p}^d \| +b \}$, 
	$ \hat{\nu}_p= \max\{ \rho_p\lambda_p , \rho_{p}  \| {B}_{p}^d \| +b\}$, $ \mu_{pq}^1 = \max\{\mathbb{A}_{pq}^d  +a\overline{ \Gamma}_2,\mathbb{B}_{pq}^d +b+a\overline{ \Gamma}_2 \}$, $\mathbb{A}_{pq}^d  = \max\limits_{\overline{t} \in [0,\tau_s]} \| \tilde{A}_{pq}^d (\overline{t})\|$, 
	$\mathbb{B}_{pq}^d  = \max_{\overline{t} \in [0,\tau_s]} \| \tilde{B}_{pq}^d (\overline{t})\|$,	$\mu_{pq}^2 = \max \{ \xi_{pq} \eta_{pq} , \xi_{pq}  \| \tilde{B}_{pq}^d (\tau_s)\| +b\}$, 	$\mu_{pq}^3 = \max \{ \eta_{pq}, \xi_{pq}  \| \tilde{B}_{pq}^d (\tau_s)\|+b \}$, and $N_{\max}$ is the maximal number of asynchronous interval,	then the switched system \eqref{equ_state} is stable under DoS attack.
\end{theorem}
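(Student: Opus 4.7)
The plan is to establish stability by a two-track argument: first prove that the quantization range $E_k^e$ converges geometrically to zero along the sampling sequence, then use the closed-loop recursions \eqref{equ_xtk1_ds} and \eqref{equ_closed_asy} together with $\|x(t_k)\| \leq \|x_k^{e*}\| + E_k^e$ to conclude that the state itself vanishes. The overall idea is to count, over an arbitrary interval $[t_0,t)$, how many sampling steps fall into each of the regimes identified in Lemma 1 (attack-free synchronous, attack-only, asynchronous with or without switching) and then multiply together the per-step growth/contraction factors from \eqref{equ_Eke}.

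First I would classify sampling steps. Using \Cref{DoSF,DoSD} together with the extended bounds $|\overline{\Xi}(t,t_0)|\le \overline\kappa + (t-t_0)/\overline T$ and $|\overline\Theta(t,t_0)|=(t-t_0)-|\overline\Xi(t,t_0)|$ derived before the theorem, the number of attack-free sampling periods in $[t_0,t)$ is at least $|\overline\Theta(t,t_0)|/\tau_s$ and the number of attack-affected steps is at most $|\overline\Xi(t,t_0)|/\tau_s$. Each attack-free synchronous step contracts $E_k^e$ by factor $\Gamma/N$ (Case 1), each attack-only synchronous step expands by at most $\Gamma$ (Case 2), and each asynchronous step is absorbed into the factor $\overline\Gamma$, which dominates the product $(\Gamma_{pq}^3)^{m-1}\Gamma_{pq}^1$ appearing in Cases 3--8 for any admissible asynchronous length $m\le N_{\max}$. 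The dwell-time constraint provides a ceiling $n(t,t_0)\le (t-t_0)/\tau_d + 1$ on the number of switches, hence a ceiling on the number of asynchronous intervals encountered. Putting these counts together gives a bound of the form
\begin{equation*}
E_k^e \;\le\; C\left(\frac{\Gamma}{N}\right)^{|\overline\Theta(t_k,t_0)|/\tau_s}\Gamma^{|\overline\Xi(t_k,t_0)|/\tau_s}\overline\Gamma^{\,n(t_k,t_0)}E_0^e,
\end{equation*}
and taking logarithms reduces this to a per-time-unit exponent which, after substituting the frequency/duration bounds, matches exactly the left-hand side of condition \eqref{equ_con1}. Hence $E_k^e \le C\,b^{(t_k-t_0)/\tau_s}$, establishing geometric decay.

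Next, to control $\|x(t_k)\|$ I would iterate the closed-loop recursions. During a run of $\ell$ consecutive synchronous attack-free steps under mode $p$, \eqref{equ_xtk1_ds} yields $\|x(t_k)\|\le \rho_p\lambda_p^{\ell}\|x(t_{k-\ell})\|+\rho_p\|B_p^d\|\sum_{j}\lambda_p^{j}E_{k-j}^e$; since $\lambda_p<1$ and $E_k^e$ decays at rate $b$, the convolution in the second term is $O(\max(\lambda_p,b)^{\ell})$ and the scalars $\nu_p,\hat\nu_p$ in the theorem are precisely the worst-case contraction rates combining these two effects. On asynchronous intervals \eqref{equ_closed_asy} together with $\|(A_{pq}^d)^k\|\le \xi_{pq}\eta_{pq}^k$ gives the analogous estimate governed by $\mu_{pq}^{1,2,3}$. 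The dwell-time condition \eqref{equ_con2} is then chosen exactly so that, in the worst case where an asynchronous block of length up to $N_{\max}$ is followed by a synchronous block, the combined contraction factor over the mode's lifetime stays below one; rearranging $\nu_p^{\tau_d/\tau_s-(N_{\max}-1)}\cdot \mu_{pq}^1\mu_{pq}^2/\mu_{pq}^3 \le \hat\nu_p$ reproduces \eqref{equ_con2}. Combining the two decay estimates across successive modes by a telescoping/product argument then shows $\|x(t_k)\|\to 0$, and finally the uniform boundedness of $u(t)$ between samples, together with the bounded magnitude of the matrix exponentials $e^{A_p\tau_s}$, lifts convergence from the sampling grid to continuous time.

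The main obstacle I expect is the bookkeeping in the first step: because Cases 3--8 have distinct update formulas and each asynchronous block may contain an internal switching instant at an arbitrary fraction $\overline t/\tau_s$ of a sampling period, one must verify that the crude envelope $\overline\Gamma$ really dominates the product of per-step factors incurred along any admissible asynchronous block of length up to $N_{\max}$, and that the $\|x_k^{e*}\|$ contributions in \eqref{equ_Eke} (Cases 3--6) and in \eqref{equ_Ekd} (Case 8) do not spoil the geometric contraction of $E_k^e$. This is handled by observing that $\|x_k^{e*}\|\le \|x(t_k)\|+E_k^e$, so that the two-track arguments feed back into each other; the induction has to be set up jointly for $(E_k^e,\|x(t_k)\|)$ rather than sequentially, with condition $\rho_p\|B_p^d\|+b<1$ ensuring well-posedness of the coupled recursion.
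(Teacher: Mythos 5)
Your proposal is correct and follows essentially the same route as the paper: a counting argument over attack/attack-free and synchronous/asynchronous sampling steps yields the geometric decay of $E_k^e$ under \eqref{equ_con1} (the paper's \Cref{lemme3}), and the coupling through $\|x_k^{e*}\|\le\|x(t_k)\|+E_k^e$ is resolved exactly as you describe by a joint recursion on $(\|x(t_k)\|,E_k^e)$ — the paper packages this as the Lyapunov function $V(k)=\|x(t_k)\|+E_k^e$ with per-step factors $\nu_p,\hat\nu_p,\mu_{pq}^{1},\mu_{pq}^{2},\mu_{pq}^{3}$ and the dwell-time condition \eqref{equ_con2} forcing contraction over each mode's lifetime. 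The only caveat is cosmetic: your initial statement of the $E_k^e$ bound omits the $\|x_i^{e*}\|$ residual that you later (correctly) reinstate, and the exact exponent bookkeeping in the rearrangement of \eqref{equ_con2} should be checked against the worst-case block structure (one $\mu_{pq}^1$ step, one $\mu_{pq}^2$ step, $N_{\max}-2$ steps at $\mu_{pq}^3$, one at $\hat\nu_p$, the rest at $\nu_p$).
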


Before proving Theorem \ref{thm1}, we first give a lemma to show the convergence of the quantizer error. Define the switching instant sequence as $\overline{\mathcal{T}}_s$ and the corresponding sampling number as $\mathcal{T}_s$.
Suppose that sampling instants $t_k,$ $t_{k+1},$ $\cdots,$ $t_{k+m-1}$ belong to the $n$-th DoS attack interval, i.e., $t_{k-1}< h_n\leq t_k< t_{k+m-1}<h_n+\tau_n \leq t_{k+m}$.
For convenience, let $\lfloor {h}_n \rfloor$ be the last sampling instant before the $n$-th DoS attack, i.e., $t_{k-1}$. From the definition of $\lceil\cdot \rceil$, one has $\lceil h_n+\tau_n\rceil = t_{k+m}$.
At instant $\lfloor {h}_i \rfloor$, $E_{\lfloor {h}_i \rfloor/\tau_s}^e$ is rewritten as $E_{\lfloor \hat{h}_i \rfloor }^e$ for simplicity. 

\color{black}
\begin{lemma}[Convergence of  encoder parameter]\label{lemme3}
	If switching signal and DoS attack constraints satisfy \eqref{equ_con1},
	then the encoder parameter $E_k^e$ satisfies
	\begin{align}\label{equ_eke}
		E_{k}^e \leq ~& ab^{\frac{t_k-t_0}{\tau_s}} E_{0}+ \sum _{i\in \mathcal{T}_s} ab^{\frac{t_k-t_i}{\tau_s}}  \overline{\Gamma}_2\| x_{i}^{e*}\| 
	\end{align}
	where $a = N^{\frac{\overline{\kappa}}{\tau_s}}$, $b = \left(\frac{\overline{\Gamma}}{\Gamma^{N_{\max}}  }\right)^{\frac{\tau_s}{\tau_d}} N^{\frac{1}{\overline{T}}}\frac{\Gamma}{N}$, $\Gamma = \max\limits_{p\in\mathcal{M}}\{\Gamma_{p}\}$, $\overline{\Gamma} =  \max\limits_{p,q\in\mathcal{M},p\neq q, m\in [0,N_{\max}-1]}(\Gamma_{pq}^3)^{m-1}\Gamma_{pq}^{1}$ and $\overline{\Gamma}_2 = \max \limits_{p,q\in \mathcal{M}}\Gamma_{pq}^{2}$.
\end{lemma}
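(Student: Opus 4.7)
The plan is to iterate the encoder update rules of Lemma \ref{lemma1}, classify each step by the situation that applies, and use the DoS frequency/duration assumptions together with the dwell-time bound to collect the resulting product into an exponential in $(t_k-t_0)/\tau_s$. The additive contributions $\Gamma_{pq}^{2}\|x_k^{e*}\|$ that appear only at the transition steps will be propagated through the remaining iterations by linearity, giving rise to the sum over $\mathcal{T}_s$.

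First, I would look at the \emph{homogeneous} recursion obtained by dropping every $\overline{\Gamma}_2\|x_k^{e*}\|$ contribution. A sampling step from $t_j$ to $t_{j+1}$ then multiplies $E^e$ by one of the following factors: $\Gamma_p/N$ in \Cref{case_1}, $\Gamma_p$ in \Cref{case_2}, $\Gamma_{pq}^{1}$ at a switch-triggered transition (\Cref{case_3,case_4,case_5,case_6}), and $\Gamma_{pq}^{3}$ in a continuing asynchronous step (\Cref{case_7,case_8}). Between two consecutive switches with asynchronous span of $m\le N_{\max}$ samples, the multiplicative effect is therefore at most $(\Gamma_{pq}^{3})^{m-1}\Gamma_{pq}^{1}\le\overline{\Gamma}$, which we can rewrite as $\Gamma^{m}\cdot(\overline{\Gamma}/\Gamma^{N_{\max}})$ by overestimating each of the $m$ ordinary factors by $\Gamma$ and absorbing the surplus. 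After this normalization every sampling period carries exactly one factor $\Gamma/N$ (attack-free, sync), $\Gamma$ (attack), or $\Gamma$ (async), plus, per switch, an extra multiplicative overhead $\overline{\Gamma}/\Gamma^{N_{\max}}$.

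Second, I would count these factors over $[t_0,t_k)$. The number of sampling periods is $(t_k-t_0)/\tau_s$; by \Cref{DoSF,DoSD} and the equivalent bound $|\overline{\Xi}(t_k,t_0)|\le\overline{\kappa}+(t_k-t_0)/\overline{T}$, at most $\overline{\kappa}/\tau_s+(t_k-t_0)/(\tau_s\overline{T})$ of them are attack periods, and by the dwell-time constraint the number of switches is at most $(t_k-t_0)/\tau_d$. Multiplying the factors,
\begin{equation*}
\prod(\text{factors})\le \Bigl(\tfrac{\Gamma}{N}\Bigr)^{(t_k-t_0)/\tau_s}\! N^{\overline{\kappa}/\tau_s+(t_k-t_0)/(\tau_s\overline{T})}\!\Bigl(\tfrac{\overline{\Gamma}}{\Gamma^{N_{\max}}}\Bigr)^{(t_k-t_0)/\tau_d}\!=a\,b^{(t_k-t_0)/\tau_s},
\end{equation*}
with $a$ and $b$ as defined in the statement; condition \eqref{equ_con1} is exactly $\log b\le\log b<0$ after rearrangement, so $b\in(0,1)$. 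This produces the first term $ab^{(t_k-t_0)/\tau_s}E_0$.

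Finally, I would unfold the full affine recursion by superposition: every switching sampling instant $t_i$ ($i\in\mathcal{T}_s$) injects a forcing term $\Gamma_{\sigma(t_i)\hat\sigma(t_i)}^{2}\|x_i^{e*}\|\le\overline{\Gamma}_2\|x_i^{e*}\|$ that thereafter obeys the same homogeneous recursion as above, but starting at $t_i$ rather than $t_0$. Applying the count from the previous paragraph to the subinterval $[t_i,t_k)$ bounds its contribution by $ab^{(t_k-t_i)/\tau_s}\overline{\Gamma}_2\|x_i^{e*}\|$; summing over $i\in\mathcal{T}_s$ gives the second term of \eqref{equ_eke}. The main technical obstacle is the bookkeeping in the second step: I have to show that the absorption $(\Gamma_{pq}^{3})^{m-1}\Gamma_{pq}^{1}=\Gamma^{m}(\overline{\Gamma}/\Gamma^{N_{\max}})\cdot(\text{stuff}\le 1)$ holds for every admissible $m\le N_{\max}$, so that attributing at most one overhead $\overline{\Gamma}/\Gamma^{N_{\max}}$ per switch is valid regardless of where within a DoS interval the switch falls; the definition of $\overline{\Gamma}$ as the maximum over $m\in[0,N_{\max}-1]$ and the dwell-time condition \eqref{equ_con2} are what make this consistent.
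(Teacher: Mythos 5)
Your proposal follows essentially the same route as the paper's proof: the paper likewise classifies each sampling step into synchronous/asynchronous and attack/attack-free scenarios, bounds each switch-induced asynchronous block by $(\Gamma_{pq}^3)^{m-1}\Gamma_{pq}^{1}\le\overline{\Gamma}$, absorbs the per-switch overhead as $(\overline{\Gamma}/\Gamma^{N_{\max}})^{N_{\sigma}(t_k,t_i)}$ using $G(t_k,t_i)\le N_{\max}N_{\sigma}(t_k,t_i)$, counts attack periods via $|\overline{\Xi}|\le\overline{\kappa}+(t_k-t_i)/\overline{T}$ and switches via the dwell time, and propagates each forcing term $\Gamma_{pq}^{2}\|x_i^{e*}\|$ forward with the same homogeneous factor from $t_i$. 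The only cosmetic difference is that the paper packages the combined per-interval factor as $\varPhi(i)$ directly rather than phrasing the argument as a homogeneous recursion plus superposition; your flagged absorption step is exactly the paper's worst-case treatment of switches falling inside DoS intervals.
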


\begin{proof}
	
	First, we divide the update law of quantization parameter into four scenarios during $[t_k,t_{k+m})$.

	{\textit{Scenario i:}} The controller mode is the same as the system mode within DoS attack interval.
	Define $p = \sigma(t_k)$, then the system mode is $p$ during $[t_k,t_{k+m})$.
	From \eqref{equ_Eke}, we have 
	\begin{equation}\label{equ_ekm1}
		E_{k+m}^e \leq \Gamma_{p} E_{k+m-1}^e\leq  \Gamma_{p}^m E_{k}^e.
	\end{equation}
	
	{\textit{Scenario ii}:} The controller mode differs from the system mode during DoS attack intervals.
	In such a scenario, the switching instant $t_s$ belongs to $[\lfloor h_n \rfloor, \lceil h_n+\tau_n \rceil)$. 
	Suppose that $\llcorner t_s \lrcorner = t_{k+\overline{m}} $, $\lceil h_n+\tau_n\rceil = t_{k+m}(\overline{m} \leq m)$, $q = \sigma(t_s^-)$ and $p = \sigma(t_s^+)$.	Hence, the asynchronous behavior occurs during $[t_s,t_{k+m})$, then  one gets
	\begin{equation}\label{equ_ekm2}
		\begin{aligned}
			E_{k+m}^e \leq ~& \Gamma_{pq}^3E_{k+m-1}^e\\
			\leq ~& (\Gamma_{pq}^3)^{m-\overline{m}-1} \left(\Gamma_{pq}^{1}\Gamma_{q}^{\overline{m}}E_{k}^e+ \Gamma_{pq}^{2} \| x_{k+\overline{m}}^{e*}\| \right).
		\end{aligned}
	\end{equation}

	{\textit{Scenario iii:}} The controller mode is the same as the system mode for DoS attack-free case.
	For interval $[t_k,t_{k+m})$, we have
	\begin{equation}\label{equ_ekm11}
		E_{k+m}^e \leq \frac{\Gamma_{p} E_{k+m-1}^e}{N} \leq \frac{ \Gamma_{p}^m E_{k}^e} {N^m}.
	\end{equation}
	
	{\textit{Scenario iv:}} The controller mode differs from the system mode for DoS attack-free case. 
	Suppose that $\llcorner t_s \lrcorner = t_{k+\overline{m}} $. Similar to the analysis in \textit{Scenario ii}, the quantization parameter at $t_{k+m}$ is 
	\begin{equation}
		E_{k+m}^e \leq \Big (\frac{\Gamma_p}{N} \Big )^{m-\overline{m}-1}\left(\Gamma_{pq}^{1}\Big (\frac{\Gamma_q}{N}\Big)^{\overline{m}}E_{k}^e + \Gamma_{pq}^{2} \| x_{k+\overline{m}}^{e*}\|\right).
	\end{equation}

	Next, we discuss all possible dynamics in $[t_0,t_k)$. 
	Define $\Gamma = \max\limits_{p\in\mathcal{M}}\{\Gamma_{p}\}$ and $\overline{\Gamma} =  \max\limits_{p,q\in\mathcal{M},p\neq q, m\in [0,N_{\max}-1]}(\Gamma_{pq}^3)^{m-1}\Gamma_{pq}^{1} $. Note that a switch occurs during DoS attack intervals, the asynchronous interval can be larger than one sampling interval but less than $N_{\max}$ periods. To ensure the convergence of quantization parameter, the worst case where all the switches occur during DoS attack intervals will be handled.

	From the definition of $E_k^e$ and \eqref{equ_ekm1}--\eqref{equ_ekm11}, we have 
		$	E_{k}^e \leq  \varPhi(0)E_{0}
		+ \sum _{i\in \mathcal{T}_s} \varPhi(i)   \Gamma_{pq}^{2}\| x_{i}^{e*}\|$
	where $\varPhi(i) = \overline{\Gamma}^{N_{\sigma}(t_k,t_i)}  \Gamma^{\overline{\Xi}(t_k,t_i)/\tau_s-G(t_k,t_i)} (\frac{\Gamma}{N})^{\overline{\Theta}(t_k,t_i)/\tau_s}$ with  $G(t_k,t_0)$ being the total asynchronous interval over $[t_0,t_k)$.
	By taking into account the average dwell time and the maximum asynchronous interval constraints, we have 
		$	{G(t_k,t_0)} \leq N_{\max} N_{\sigma}(t_k,t_0)$,
	then
	$	\varPhi(i) 
	\leq 	\Big(\frac{\overline{\Gamma}}{\Gamma^{N_{\max}}}\Big)^{N_{\sigma}(t_k,t_i)}  N^{\overline{\Xi}(t_k,t_i)/\tau_s} (\frac{\Gamma}{N})^{(t_k-t_i)/\tau_s} 
	= ~a b^{\frac{t_k-t_i}{\tau_s}} $
	where $a = N^{\frac{\overline{\kappa}}{\tau_s}}$ and $b = \left(\frac{\overline{\Gamma}}{\Gamma^{N_{\max}}  }\right)^{\frac{\tau_s}{\tau_d}} N^{\frac{1}{\overline{T}}}\frac{\Gamma}{N}$. 	Define $\overline{\Gamma}_2 = \max _{p,q\in \mathcal{M}}\Gamma_{pq}^{2}$, therefore \eqref{equ_eke} holds.
\end{proof}
From  \Cref{lemme3}, the average upper bound of quantization parameter $E_{k}^e$ is given.  It can be found that $E_k^e$ jumps at switching instants. 
Note that  $E_{\llcorner k_s\lrcorner }^e \geq E_{\llcorner k_{s+1}\lrcorner } ^e$ if the dwell time is sufficiently large.
In addition, due to $\| x_{i}^{e*}\| \leq \| x(t_i)\| + E_i^e $ and  (\ref{equ_eke}), we have 
\begin{equation}\label{equ_Eks1}
	E_{\llcorner t_s\lrcorner+1} ^{e-}\leq bE_{\llcorner t_s \lrcorner}^e + a\overline{\Gamma}_2( \|x(t_s)\| + E_{\llcorner t_s\lrcorner } ^e)
\end{equation}
and for  other instants, the quantization parameter admits
$	E_{k+1}^e \leq bE_{k}^e .$
Based on the  equivalent dynamics of $E_k^e$, we are now ready to prove Theorem \ref{thm1}.


\begin{proof}[Proof of Theorem~\ref{thm1}]
	Define the candidate Lyapunov function 
	$V(k) = 	\| 	{x}(t_k) \| + E_k^e$ where $E_k^e$ is defined in \eqref{equ_Eke}.
	First, we consider the case where the system switches at instant $t_s$ during DoS attack intervals $[h_n,h_n+\tau_n)$. Let $t_{k-1}= \llcorner t_s \lrcorner$. 		
	For the first asynchronous interval, it can be derived from \eqref{equ_closed_asy} that 
	\begin{align} \label{equ_asy1}
		\tilde{x}(t_{k})  
		= ~& \tilde{A}_{pq}^d (\overline{t}) 	\tilde{x}(t_{k-1})  + \tilde{B}_{pq}^d (\overline{t}) \tilde{ e}(t_{k-1})
	\end{align}
	where $ \tilde{A}_{pq}^d (\overline{t})  = \left[\begin{array}{c}
		A_{pq}^d(\overline{t})	\\
		\textbf{0}
	\end{array}\right]$ and $ \tilde{B}_{pq}^d (\overline{t})  = \left[\begin{array}{c}
		B_{pq}^d(\overline{t})	\\
		\textbf{0}
	\end{array}\right]$.
	The system dynamics for the remaining asynchronous interval is
	$	\tilde{x}(t_{k+n}) = ( \tilde{A}_{pq}^d (\tau_s) )^{{n}} \tilde{x}(t_k) + \sum_{i=0}^{{n}-1} ( \tilde{A}_{pq}^d (\tau_s)  )^{ n-1-i}    \tilde{B}_{pq}^d (\tau_s) \tilde{ e}(t_{k+i})$
	where $t_{k+n} \leq \lceil h_n+\tau_n\rceil$. Due to definition of infinity norm, we have $	\| 	{x}(t_{k+n}) \| = 		\| 	\tilde{x}(t_{k+n}) \|$. 
	Then the norm of state satisfies
	\begin{align*}
		&	\| 	{x}(t_{k+n}) \| \\
		\leq  ~& \|  ( \tilde{A}_{pq}^d (\tau_s) )^{{n}}\| \| \tilde{x}(t_k)\|
		+ \sum_{i=0}^{{n}-1} \| ( \tilde{A}_{pq}^d (\tau_s)  )^{ n-1-i} \|  \| \tilde{B}_{pq}^d (\tau_s)\|  E_{k+i}^e \\
		\leq ~& \xi_{pq} \eta_{pq}^n\| \tilde{x}(t_k)\|+ \sum_{i=0}^{{n}-1}  \xi_{pq} \eta_{pq}^{ n-1-i}  \| \tilde{B}_{pq}^d (\tau_s)\|  E_{k+i}^e .
	\end{align*}
	Moreover,  the above formula can be rewritten as 
	\begin{align}\label{equ_clo_asy_1}
		\| 	{x}(t_{k+1}) \|\leq  ~& \xi_{pq} \eta_{pq}\| 	{x}(t_{k}) \| + \xi_{pq}  \| \tilde{B}_{pq}^d (\tau_s)\|  E_{k}^e.
	\end{align}
	For other instants, it turns to be
	\begin{align}\label{equ_clo_asy_2}
		\| 	{x}(t_{k+i+1}) \|\leq  ~& \eta_{pq}\| 	{x}(t_{k+i}) \| + \xi_{pq}  \| \tilde{B}_{pq}^d (\tau_s)\|  E_{k+i}^e .
	\end{align}
	Based on the above analysis, it yields from \eqref{equ_Eks1} and  \eqref{equ_asy1} that 
	$	V(k+1)   
	= \|\mathbb{A}_{pq}^d \| \|{x}(t_k) \| +\|\mathbb{B}_{pq}^d\| E_k ^e+ bE_{k} ^e+ a\overline{\Gamma}_2 (\|x(t_k)\| + E_{k} ^e)
	\leq  \mu_{pq}^1V(k) $
	where  $\mathbb{A}_{pq}^d  = \max_{\overline{t} \in [0,\tau_s]} \tilde{A}_{pq}^d (\overline{t})$, $\mathbb{B}_{pq}^d  = \max_{\overline{t} \in [0,\tau_s]} \tilde{B}_{pq}^d (\overline{t})$, and $ \mu_{pq}^1 = \max\{\|\mathbb{A}_{pq}^d \| + a\overline{\Gamma}_2,\|\mathbb{B}_{pq}^d\| +b+ a\overline{\Gamma}_2 \}$.
	
	For the next asynchronous instant, we have 
	$$	V(k+1) 
	\leq \xi_{pq} \eta_{pq} \| {x}(t_{k}) \| + \xi_{pq}  \| \tilde{B}_{pq}^d (\tau_s)\| E_{k}^e +b E_k^e\leq  \mu_{pq}^2V(k) $$
	where $\mu_{pq}^2 = \max \{ \xi_{pq} \eta_{pq} , \xi_{pq}  \| \tilde{B}_{pq}^d (\tau_s)\| +b\}$.
	Similarly, for the other asynchronous instants, one has
	$$	V(k+1)  
	\leq  \eta_{pq} \| 	{x}(t_{k}) \| + \xi_{pq}  \| \tilde{B}_{pq}^d (\tau_s)\| E_{k}^e +b E_k^e
	\leq   \mu_{pq}^3V(k) $$
	where $\mu_{pq}^3 = \max \{ \eta_{pq}, \xi_{pq}  \| \tilde{B}_{pq}^d (\tau_s)\| +b\}$.
	
	The system dynamics at the first instant of synchronous stage is
	\begin{align}\label{equ_clo_sy_1}
		\| 	{x}(t_{k+1}) \|\leq  ~& \rho_{p} \lambda_{p}\| 	{x}(t_{k}) \| + \rho_{p}  \| {B}_{p}^d \|  E_{k} ^e
	\end{align}
	and for other instants, it becomes 
	\begin{align}\label{equ_clo_sy_2}
		\| 	{x}(t_{k+i+1}) \|\leq  ~& \lambda_{p} \| 	{x}(t_{k+i}) \| +\rho_{p}  \| {B}_{p}^d \| E_{k+i}^e .
	\end{align}
	For the first instant during synchronous stage, one gets 
	$V(k+1)  = \rho_p\lambda_p   \|{x}(t_k) \| +\rho_{p}  \| {B}_{p}^d \| E_{k} ^e+bE_k^e	\leq  \hat{\nu}_p V(k)$
	where $ \hat{\nu}_p= \max\{ \rho_p\lambda_p , \rho_{p}  \| {B}_{p}^d \| +b\}$.
	For other instants, we can see
	$	V(k+1)  =\lambda_p   \|{x}(t_k) \| +\rho_{p}  \| {B}_{p}^d \| E_{k} ^e+b E_k^e\leq  \nu_p V(k)$
	where $\nu_p = \max\{ \lambda_p , \rho_{p}  \| {B}_{p}^d \| +b\}$.
	Considering the value of Lyapunov function at switching instants, we have
	$V(\llcorner t_{s+1}\lrcorner) =  \nu_p^{\tau_d/\tau_s-N_{\max}-1}\hat{\nu} _p(\mu_{pq}^3)^{N_{\max}-2}\mu_{pq}^2\mu_{pq}^1V(\llcorner t_{s}\lrcorner) $.
	Based on \eqref{equ_con2}, it is obvious that $V(\llcorner t_{s+1}\lrcorner) \leq V(\llcorner t_{s}\lrcorner) $. Hence,  $\lim\limits_{t\rightarrow \infty} V(t) \rightarrow 0$, which implies the convergence of the state. 
\end{proof}
The analysis of \Cref{thm1} consists of two parts: the convergence of the quantization parameter $ E_k^e $ and the convergence of the system state. However, the quantization parameter $E_k^e $ jumps at switching instants, as a switching occurs between two consecutive sampling instants. To guarantee stability, a sufficient but somewhat conservative condition is derived based on a Lyapunov function, due to the coupling between the quantization parameters \( E_k^e \) and \( x_k^{e*} \). Notably, if a switch occurs at sampling instant, \( E_k^e \) is independent of \( x_k^{e*} \).
In what follows, a well-designed switching signal is proposed to ensure that the system switches at sampling instants. Consequently, the cases listed in \Cref{tab1} could be simplified, which means that \cref{case_3,case_4,case_5,case_6} can be omitted.
\begin{corollary}\label{coro1}
	For  the switched system \eqref{equ_state}  with  \textbf{\textit{Strategy 1}}, if the quantizer updates its parameters according to 
	
	{\scriptsize \vspace{-1em}	 	\begin{align*}
			E_{k+1}^e =& 
			\begin{cases}
				\begin{array}{ll}
					\frac{\Gamma_{\sigma(t_k)}}{N} {E_k^e} &\text{\cref{case_1}} \\
					\Gamma_{\sigma(t_k)} {E_k^e}  &\text{\cref{case_2}}\\
					\Gamma_{{\sigma}(t_k) \hat\sigma(t_k) }^3E_{k}^e&\text{\cref{case_7,case_8}}
				\end{array}
			\end{cases}\\
			x_{k+1}^{e*} = &
			\begin{cases}
				\begin{array}{ll}
					e^{\big(A_{ \sigma(t_k)} +B_{ \sigma(t_k)} K_{ \sigma(t_k)} \big)\tau_s} \hat{x}(t_k^+)&\text{\cref{case_1,case_2}} \\
					[\begin{matrix}
						\textbf{I} &\textbf{0}
					\end{matrix}
					] e^{ \mathcal{A}_{{\sigma}(t_k)\hat \sigma(t_k) } n \tau_s} \left[\begin{matrix}
						\textbf{I} \\
						\textbf{I}
					\end{matrix}
					\right]  \hat x( t_s)&\text{\cref{case_7,case_8}} 	
				\end{array}
			\end{cases}\\
			E_{k+1}^d = &
			\begin{cases}
				\frac{	\Gamma_{\hat{\sigma}(t_k)}}{N} E_k^d &\text{\cref{case_1}} \\
				\Gamma_{\hat \sigma(t_k)} {E_k^d}  &\text{\cref{case_2,case_7}}\\
				\Big(	\Gamma_{{\sigma}(t_k) \hat\sigma(t_k) }^3\Big)^n	E_{{k-n}}^d &\text{\cref{case_8}  } \\
			\end{cases}\\
			x_{k+1}^{d*}=  &
			\begin{cases}
				e^{\big(A_{\hat{\sigma}(t_k)}+B_{\hat{\sigma}(t_k)}K_{\hat{\sigma}(t_k)}\big)\tau_s} \hat{x}(t_k^+)&\text{\cref{case_1,case_2,case_7}} \\
				[\begin{matrix}
					\textbf{I} &\textbf{0}
				\end{matrix}
				] e^{ \mathcal{A}_{{\hat{\sigma}(t_k){\sigma}(t_k)}}n \tau_s}  \left[\begin{matrix}
					\textbf{I} \\
					\textbf{I}
				\end{matrix}
				\right] 	 \hat x( t_s)&\text{\cref{case_8}}  \\	
			\end{cases}\notag
	\end{align*} }
	the quantization level $N> \Gamma$ is odd,  DoS attack and dwell time satisfy
	\begin{equation}\label{equ_con1_1}
		\frac{N_{\max}\tau_s}{\tau_d}\log \frac{\hat{\Gamma}}{\Gamma}+ \left(\frac{1}{{T}}+\frac{\tau_s}{\tau_D}-1\right ) \log N+\log {\Gamma}\leq 0
	\end{equation}
	\begin{equation}\label{equ_con2_1}
		\tau_d \geq \max_{p,q \in \mathcal{M}}\left\{\frac{\log(\rho_p\xi_{pq})+ N_{\max} \log \frac{\eta_{pq}}{\lambda_p}}{-\log(\lambda_p)}
		\right\}\tau_s
	\end{equation}
	where $\Gamma_p  = \| e^{A_p \tau_s} \|$,  $\Gamma_{pq}^3 = \|e^{ \mathcal{A}_{pq}\tau_s}\| $,
	$\Gamma = \max_{p\in\mathcal{M}} \{ \Gamma_{p}\}$, $\hat{\Gamma} =\max_{p,q \in \mathcal{M} }\{\Gamma_{pq}^3 \}$,  $n = \frac{t_k-t_s}{\tau_s}$, $N_{\max}$ is the maximal number of asynchronous interval, and switching instant satisfies
	$	t_{s+1} = \left\{ k\tau_s>(t_s + \tau_d ) \right\},$
	then the switched system \eqref{equ_state} is stable under DoS attack.
\end{corollary}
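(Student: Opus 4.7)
The plan is to mirror the two-part argument used for \Cref{thm1} (convergence of the encoder parameter $E_k^e$, then convergence of the state via a Lyapunov-style bound), but to exploit the fact that the engineered switching law $t_{s+1}\in\{k\tau_s:k\tau_s>t_s+\tau_d\}$ forces every switch to coincide with a sampling instant. Consequently $\lfloor t_s\rfloor=t_s$ and \cref{case_3,case_4,case_5,case_6} never occur; only \cref{case_1,case_2,case_7,case_8} appear. The key structural gain is that $E_{k+1}^e$ no longer couples to $\|x_k^{e*}\|$ through a $\Gamma_{pq}^{2}$ term, so the recursion for $E_k^e$ becomes purely multiplicative.

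First I would re-derive the analogue of \Cref{lemme3} under these simplified update rules. Over any window $[t_0,t_k)$, classifying each sampling step as (i) synchronous DoS-free with factor $\Gamma/N$, (ii) synchronous DoS with factor $\Gamma$, or (iii) asynchronous (under DoS, lasting at most $N_{\max}$ periods per switch) with factor bounded by $\hat\Gamma=\max_{p,q}\Gamma_{pq}^3$, yields
\begin{equation*}
E_k^e\le N^{\overline\kappa/\tau_s}\,\Bigl(\tfrac{\hat\Gamma}{\Gamma}\Bigr)^{N_{\max} N_\sigma(t_k,t_0)}\,N^{\overline\Xi(t_k,t_0)/\tau_s}\,\Bigl(\tfrac{\Gamma}{N}\Bigr)^{(t_k-t_0)/\tau_s}E_0^e.
\end{equation*}
Applying the dwell-time bound $N_\sigma(t_k,t_0)\le (t_k-t_0)/\tau_d$ and the DoS duration/frequency bounds from \Cref{DoSF,DoSD} to replace $N^{\overline\Xi/\tau_s}$ by $N^{(t_k-t_0)/\overline T}$ (up to the constant $N^{\overline\kappa/\tau_s}$), condition \eqref{equ_con1_1} is exactly the requirement that the resulting per-step factor $b:=(\hat\Gamma/\Gamma)^{N_{\max}\tau_s/\tau_d}N^{1/\overline T}\Gamma/N$ be strictly less than $1$. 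Hence $E_k^e\to 0$ exponentially.

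Next I would bound the state using the discrete propagation identities already established in Section~\ref{sec_active}: \eqref{equ_xtk1_ds} for synchronous steps and \eqref{equ_closed_asy} (with $\overline{t}=\tau_s$) for asynchronous steps. In a synchronous DoS-free step under mode $p$, $\|x(t_{k+1})\|\le \lambda_p\|x(t_k)\|+\rho_p\|B_p^d\|E_k^e/N$; in a synchronous DoS step the gain on $E_k^e$ merely rescales, leaving $\lambda_p$ unchanged on $\|x(t_k)\|$; in an asynchronous step the gain is $\eta_{pq}$ with a jump factor $\xi_{pq}$ absorbed into the first asynchronous instant. Composing these bounds over one inter-switching window of length $\tau_d/\tau_s$ sampling periods, of which at most $N_{\max}$ are asynchronous, gives a compound contraction factor on $\|x\|$ of at most $\rho_p\xi_{pq}\,\eta_{pq}^{N_{\max}}\,\lambda_p^{\tau_d/\tau_s-N_{\max}}$, and condition \eqref{equ_con2_1} is precisely what forces this to be $\le 1$. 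Combined with $E_k^e\to 0$, this yields $\|x(t_k)\|\to 0$.

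The main obstacle I anticipate is bookkeeping rather than any new mathematical difficulty: one must align three clocks (sampling, switching, DoS on/off) and verify that, because switches are synchronised to sampling, the ``asynchronous + switching + DoS'' stretches concatenate cleanly, with each switching event contributing at most $N_{\max}$ asynchronous periods before the next ACK restores the decoder. Once this accounting is in place, the two inequalities \eqref{equ_con1_1} and \eqref{equ_con2_1} give exponential decay of $E_k^e$ and $\|x(t_k)\|$ respectively, and stability of \eqref{equ_state} under DoS attack follows.
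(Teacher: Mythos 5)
Your proposal is correct and follows essentially the same two-part route as the paper: first establish $E_k^e\to 0$ by repeating the multiplicative decomposition of \Cref{lemme3} with the simplified (decoupled) update law so that \eqref{equ_con1_1} makes the per-step factor less than one, and then compose the synchronous/asynchronous state propagation bounds over one inter-switching window so that \eqref{equ_con2_1} forces the compound factor $\rho_p\xi_{pq}\eta_{pq}^{N_{\max}}\lambda_p^{\tau_d/\tau_s-N_{\max}}\le 1$, which together with the vanishing quantization terms gives $\|x(t_k)\|\to 0$. Your observation that synchronizing switches to sampling instants eliminates \cref{case_3,case_4,case_5,case_6} and removes the $\Gamma_{pq}^2\|x_k^{e*}\|$ coupling is exactly the structural point the paper exploits.
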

	\begin{proof}
			Based on \Cref{lemma1,lemma2}, \eqref{equ_bound} and  \eqref{equ_error_bound} are satisfied. The quantization parameters are the same when there is no DoS attack. By repeating the proof for \Cref{lemme3}, it can be seen that $E_k^e \rightarrow 0 $ as time goes by. The convergence of  quantization parameters implies that the quantization error will also be convergent due to $\|e(t_k^+)\| \leq \frac{E_k^e}{N} $. When the state information  is successfully transmitted, the predicted state $\hat{x}(t_k^+) = e(t_k^+) + x(t_k) $ will approach the actual one. Once the error becomes sufficiently small, the system's performance will resemble that of a state-feedback system without a quantizer.
			The system will keep stable if \( A_{p} + B_{p}K_{p} \) is Hurwitz and the switching signal is appropriately designed. Next, we will show that \eqref{equ_con2_1} is a sufficient condition to ensure stability.
			
	
	From \eqref{equ_clo_asy_1}--\eqref{equ_clo_sy_2}, the system evolution during two consecutive switching instants is given by:
	$
	\| \tilde{x}( t_{s+1} )\| \leq \rho_p \lambda_p^{\tilde{n} - \hat{n}} \xi_{pq} \eta_{pq}^{\hat{n}} \| \tilde{x}( t_s ) \| + \sum_{i=0}^{\tilde{n} - \hat{n}-1} \rho_p \lambda_p^{\tilde{n} - \hat{n}-1-i} \| \tilde{B}_p^d\| E_{\lceil \hat{\overline{h}}_n\rceil+i}^e
	+ \sum_{i=0}^{\hat{n}-1} \rho_p \lambda_p^{\tilde{n} - \hat{n}} \xi_{pq} \eta_{pq}^{\hat{n}-1-i} \| \tilde{B}_{pq}^d(\tau_s) \| E_{ \hat{t}_s +i}^e
	$
	where $ \hat{n} $ represents the duration of asynchronous behavior and $ \tilde{n} - \hat{n} $ stands for that of synchronous behavior. Since the maximum asynchronous interval is $ N_{\max} $, we have $ \hat{n} \leq N_{\max} $. Given the definition of dwell time $ \tau_d $, it follows that $ \tilde{n} \geq \tau_d/\tau_s $.
	From \eqref{equ_con2_1}, we know that $ \rho_p \lambda_p^{\tilde{n} - \hat{n}} \xi_{pq} \eta_{pq}^{\hat{n}} \leq 1 $. Combining this with the fact that $ \lim_{k\rightarrow \infty} E_k^e \rightarrow 0 $ as well as the boundedness of  $ \sum\lim_{i=0}^{\hat{n}-1} \rho_p \lambda_p^{\tilde{n} - \hat{n}} \xi_{pq} \eta_{pq}^{\hat{n}-1-i} \| \tilde{B}_{pq}^d(\tau_s) \| $ and $ \sum_{i=0}^{\tilde{n} - \hat{n}-1} \rho_p \lambda_p^{\tilde{n} - \hat{n}-1-i} \| \tilde{B}_p^d \| $, we can get $ \| \tilde{x}( t_{s+1} ) \| \leq \| \tilde{x}( t_s ) \| $. Therefore, the system is stable as $ \lim_{t\rightarrow \infty} \| \tilde{x}( t ) \| \rightarrow 0 $.
	\end{proof}
	
	\color{blue}
	\begin{remark}
		The dwell time condition \eqref{equ_con2}  in \Cref{thm1} is strictly more restrictive than condition \eqref{equ_con2_1} in \Cref{coro1}. To see this, rewrite \eqref{equ_con2} as 
		$	\tau_d\geq \max_{p, q\in \mathcal{M}}   \bigg(\frac{ N_{\max}  \log \frac{\eta_{pq}}{\lambda_p}+\log( \rho_p \xi_{pq})+\log \frac{\mu_{pq}^1  \lambda_p }{\eta_{pq}}}{-\log \lambda_p} \bigg) \tau_s  $ with the definitions of $\mu_{pq}^1$, $\mu_{pq}^2$, $\mu_{pq}^3$, $\nu_p$ and $\hat \nu_p$ yielding  $\log \mu_{pq}^1 - \log \eta_{pq}+\log\lambda_p > 0$ because the effective increase rate $\log \mu_{pq}^1$ when switch occurs during two sampling instants  is larger than the increase rate $\log \frac{\eta_{pq}}{\lambda_p}$ when the system switches at some sampling instant.  Consequently, the lower bound on the dwell time implied by  \eqref{equ_con2} is larger than that required by \eqref{equ_con2_1}. Next, observer that $\Gamma_{pq}^1\geq \Gamma_{pq}^3$ implies $\overline{\Gamma} \geq \hat {\Gamma}^{N_{\max}}$. Moreover, the presence of $b$ defined in \eqref{equ_con1} means that DoS attack constraints can be relaxed in  \Cref{coro1} for the same dwell time. 
		Finally, \Cref{thm1} and \Cref{coro1} treat the worst-case scenario in which a switch occurs within the DoS intervals. Because an attacker cannot know the switching instants in advance, this scenario is unlikely in practice. In contrast to \Cref{thm1}, asynchronous behavior does not occur for attack-free intervals. Thus, \Cref{coro1} demonstrates that by designing an appropriate switching signal, the number of asynchronous behavior can be reduced.  This yields improved closed-loop performance and a faster convergence rate of the quantization range, since it will not jump at switching instants during the attack-free intervals.
	\end{remark}
	\color{black}

\subsection{Strategy 2}

In Subsection \ref{sec_quan1}, a quantizer with its center at the predicted state is designed. This quantizer calculates not only the quantization range but also the center, which requires higher computational resources. A quantizer centered at the origin, as studied in the existing literature, reduces computational burden and is more suitable for distributed sensors, as pointed out in \cite{output2}. However, this cannot be achieved by merely setting $x_k^{e*}$ to the origin. To this end, a new updating rule for the quantization parameters need be re-designed.
Similar to \Cref{lemma1,lemma2}, the update laws for the encoder and decoder are designed independently. In this strategy, a more detailed discussion is required to reduce conservatism. Specifically, \cref{case_1} is divided into two sub-cases: \cref{case_1}-\textbf{a} and \cref{case_1}-\textbf{b}.
\cref{case_1}-\textbf{a} (\cref{case_1}-\textbf{b}) represents the case where $ACK_{t_{k-1}}=0$ or $SY_{t_k}=1~ \& ~SY_{t_{k+1}} = 0$ ($ACK_{t_{k-1}}=1$). Here, an upper bound on DoS attack is given below.
\begin{assumption}[Maximum DoS attack interval]
	The duration for the $n$-th DoS attack  has an upper bound, i.e., $\tau_n \leq n_{\max}\tau_s$.
\end{assumption}
\begin{lemma}[update law of quantizer]\label{lemma4}
	For the switched system \eqref{equ_state}  with \textbf{\textit{Strategy 2}}, if the   update laws for quantizer obey
	
	{\scriptsize \vspace{-1em}	\begin{align}
			E_{k+1}^e =~& 
			\begin{cases}
				\begin{array}{ll}
					{\Lambda}_{\sigma(t_k)}^1  E_k^e &\text{\cref{case_1}-\textbf{a}} \\
					{\Lambda}_{\sigma(t_k)}^2  E_k^e &\text{\cref{case_1}-\textbf{b}~}\\
					{\Lambda}_{\sigma(t_k)}^3 E_k^e&\text{\cref{case_2}} \\
					{\Lambda}_{{\sigma}(t_k) \hat{\sigma}(t_k) }^4 E_{k}^e &\text{\cref{case_3}}\\
					{\Lambda}_{{\sigma}(t_k) \hat{\sigma}(t_k) }^5 E_k^e&\text{\cref{case_4,case_5,case_6}}\\
					{\Lambda}_{{\sigma}(t_k) \hat{\sigma}(t_k) }^6 E_{{k}}^e&\text{\cref{case_7,case_8}}\\
				\end{array}
			\end{cases}\label{equ_Eke_origin}\\
			E_{k+1}^d = ~&
			\begin{cases}
				\begin{array}{ll}
					{\Lambda}_{\sigma(t_k)}^1  E_k^d &\text{\cref{case_1}-\textbf{a}~}\\
					{\Lambda}_{\sigma(t_k)}^2  E_k^d&\text{\cref{case_1}-\textbf{b}~}\\
					{\Lambda}_{\sigma(t_k)}^3 {E_k^d}  &\text{\cref{case_2}} \\
					{\Lambda}_{{\sigma}(t_k) \hat{\sigma}(t_k) }^4 E_{k}^d &\text{\cref{case_3}}\\
					{\Lambda}_{\sigma(t_k)}^3 {E_k^d}  &\text{\cref{case_4,case_5,case_7}} \\
					{\Lambda}_{{\sigma}(t_k) \hat{\sigma}(t_k) }^5	\Big(	{\Lambda}_{{\sigma}(t_k) \hat{\sigma}(t_k) }^6\Big)^n  E_{{k-n}}^d	&\text{\cref{case_3,case_6,case_8}} \\
				\end{array}
			\end{cases}\label{equ_Ekd_origin}
	\end{align}}
	where   ${\Lambda}_p^1  = \rho_p \lambda_p + \frac{\rho_p\| B_p^d\| }{N} $, 
	${\Lambda}_p^2  =  \lambda_p + \frac{\rho_p\| B_p^d \| }{N} $,   
	$ {\Lambda}_p^3 = (\overline{\Psi}_p)^{\frac{1}{n_{\max}}}$, 
	$	\overline{\Psi}_p =	\max\limits_{\ell \in [0,n_{max}]} \Big \{\rho_{p} \lambda_p^\ell + \sum_{j=0}^{\ell -1} \rho_p \lambda_p^{\ell-j-1} \| B_p^d \| \frac{\| e^{{A}_p \tau_s} \| ^j }{N} \Big \}$, 
	$ {\Lambda}_{pq}^4 $ $= \max\limits_{\overline{t}\in [0,\tau_s)}\|e^{\mathcal{A}_{pq}   \overline{t}}  e^{\mathcal{A}_{p}(\tau_s - \overline{t})}\|$,
	$ {\Lambda}_{pq}^5 = \max \big \{ \frac{ 1- \lambda_p\underline{\Psi}_p\overline{\Psi}_p^{-1} }{\rho_p \| B_p^d\| },1 \big\}  {\Lambda}_{pq}^4  $, 
	$ {\Lambda}_{pq}^6 = 	 \| e^{\widetilde{\mathcal{A}}_{pq}\tau_s} \|$ and $n = \frac{\lceil t_{s} \rceil - \ulcorner t_s \urcorner }{\tau_s}$. 
	Then 
		$	\|x(t_k)\| \leq E_k^e ( k\geq 0)$
	and $E_k^d = E_k^e$ for $t_k \in \Theta(\infty,t_0)$. 
\end{lemma}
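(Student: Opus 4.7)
The plan is to establish both assertions by induction on $k$, treating the eight cases of Table~\ref{Table:case} (with Case~1 split into subcases 1-\textbf{a} and 1-\textbf{b} according to $ACK_{t_{k-1}}$) separately, analogously to the proofs of \Cref{lemma1,lemma2}, but adapted to the origin-centered quantizer so that the inductive hypothesis becomes $\|x(t_k)\|\le E_k^e$ in place of $\|x(t_k)-x_k^{e\ast}\|\le E_k^e$.

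For each case I would first determine the quantization-error bound at $t_k^+$: $\|e(t_k^+)\|\le E_k^e/N$ when $ACK_{t_k}=1$ (predictor reset to $c_k$) and $\|e(t_k^+)\|=\|e(t_k^-)\|$ otherwise. I then apply the appropriate closed-loop dynamics---formula \eqref{equ_xtk1_ds} in the synchronous case, \eqref{equ_closed_asy} when a switch occurs inside $[t_k,t_{k+1})$, and the iterated form built from $e^{\widetilde{\mathcal{A}}_{pq}\tau_s}$ for subsequent fully asynchronous steps---and combine the resulting expression with the growth estimates $\|(A_p^d)^k\|\le\rho_p\lambda_p^k$ and the analogous asynchronous bounds involving $\xi_{pq},\eta_{pq}$, together with the inductive hypothesis $\|x(t_k)\|\le E_k^e$. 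The coefficients obtained in this way are exactly the $\Lambda_p^i$ and $\Lambda_{pq}^i$ appearing in \eqref{equ_Eke_origin}, so that $\|x(t_{k+1})\|\le E_{k+1}^e$ holds case by case. The Case~1-\textbf{a}/\textbf{b} split simply accounts for whether the $\rho_p$ overshoot factor must be carried on the drift term $\|A_p^d\,x(t_k)\|$.

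The subtle step is Case~2, where the attack may persist for up to $n_{\max}$ consecutive periods without a predictor refresh. Iterating the one-step bound with $\|e(t_{k+j}^+)\|\le\|e^{A_p\tau_s}\|^j E_k^e/N$ yields $\|x(t_{k+\ell})\|\le\bigl(\rho_p\lambda_p^\ell+\sum_{j=0}^{\ell-1}\rho_p\lambda_p^{\ell-j-1}\|B_p^d\|\|e^{A_p\tau_s}\|^j/N\bigr)E_k^e$, whose supremum over $\ell\in[0,n_{\max}]$ is $\overline{\Psi}_p$. To encode this as a geometric per-step rate so that $E_{k+\ell}^e$ dominates the worst compound bound for every intermediate $\ell\le n_{\max}$, one sets $\Lambda_p^3=\overline{\Psi}_p^{1/n_{\max}}$. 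Case~5 is treated analogously, and the max-with-one factor in $\Lambda_{pq}^5$ is designed precisely to absorb the one-step asynchronous transition contribution on top of the $\Lambda_{pq}^4$ evolution.

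For the uniformity claim, during any DoS-free step Cases~1-\textbf{a}, 1-\textbf{b}, 2, and 3 use identical rules on both sides of \eqref{equ_Eke_origin} and \eqref{equ_Ekd_origin}, so $E_k^d=E_k^e$ is preserved by induction. Within an attack interval the two sequences may diverge because the encoder observes switching instants hidden from the decoder; at the first successful transmission after the attack the decoder compensates via $\Lambda_{pq}^5(\Lambda_{pq}^6)^n E_{k-n}^d$, using the switching instant piggybacked on the recovered packet, and a direct telescoping against the encoder's iterated $\Lambda_{pq}^6$ updates followed by $\Lambda_{pq}^5$ at the transition step shows that both reconstructions agree. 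The main obstacle I expect is the bookkeeping in Cases~5--8, where the attack interval and the switching instant overlap and the encoder/decoder update laws are deliberately different during the attack: carefully verifying that the post-attack re-synchronization produces matching values---and that the state bound $\|x(t_k)\|\le E_k^e$ survives the full asynchronous iteration---requires tracking $\sigma$ and $\hat\sigma$ simultaneously across the whole interval $[\lfloor h_n\rfloor,\lceil h_n+\tau_n\rceil)$.
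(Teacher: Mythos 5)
Your proposal follows essentially the same route as the paper's proof: a case-by-case induction with hypothesis $\|x(t_k)\|\le E_k^e$, using $\|e(t_k^+)\|\le E_k^e/N$ under ACK and $\|e(t_k^+)\|=\|e(t_k^-)\|$ otherwise, iterating the synchronous/asynchronous closed-loop dynamics to read off the $\Lambda$ coefficients, with the $n_{\max}$-th-root device $\Lambda_p^3=\overline{\Psi}_p^{1/n_{\max}}$ for sustained attacks and the $\max\{\cdot,1\}$ factor in $\Lambda_{pq}^5$ absorbing the attack-inflated error norm in the augmented state, and uniformity recovered by telescoping the decoder's deferred $(\Lambda_{pq}^6)^n$ correction against the encoder's per-step updates. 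This matches the paper's argument (which in fact omits the uniformity details you sketch), so no substantive discrepancy to report.
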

\begin{proof}
	Similar to the error norm in \Cref{lemma1},  we have 
		$	\|	e(t_k^+)\|\leq \frac{E_k^e}{N}$ for attack-free instants
	and
		$	\| e(t_k^+) \| = \| e(t_k^-) \| $ for attack instants.   

	\textbf{\cref{case_1}: $ACK_{t_k} = 1$ and $SY_{t_{k+1}}=1$.}
	
	From \eqref{equ_xtk1_ds}, we have 
	\begin{equation}\label{equ_xkl}
		x(t_{k+\ell}) = (A_p^d)^\ell x(t_k) + \sum_{j=0}^{\ell-1}  (A_p^d)^{\ell-j-1} B_p^d K_p e(t_{k+j}).
	\end{equation}
	It follows from  \eqref{equ_error_bound} and \eqref{equ_error} that 
	$	\| x(t_{k+\ell})\|  =\| (A_p^d)^\ell \| E_k^e+ \sum_{j=0}^{\ell-1} \| (A_p^d)^{\ell-j-1}\|  \| B_p^d \| \frac{E_{k+j}^e}{N} 
	\leq  \rho_{p} \lambda_p^\ell E_k^e+ \sum_{j=0}^{\ell-1} \rho_p \lambda_p^{\ell-j-1} \| B_p^d \| \frac{E_{k+j}^e}{N}. $
	From the first  and second formulas  in \eqref{equ_Eke_origin}, we have 
	$E_{k+\ell}^e = \rho_{p} \lambda_p ^{\ell} E_{k}^e+ \Delta \sum_{j=0}^{\ell-1}  \lambda_p^{\ell-j-1} E_{k+j}^e$
	where $\Delta = \frac{\rho_p \| B_p^d \| }{N}.$
	Therefore, we get 	$\| x(t_{k+\ell})\|  \leq E_{k+\ell}^e $.

	\textbf{\cref{case_2}:  $ACK_{t_k} = 0$ and $SY_{t_{k+1}}= 1$.}
	
	When a DoS attack occurs, the error $e(t)$ will not be updated at sampling instants. Based on \eqref{equ_etsol}, one has 
		$	\| e(t_{k+\ell})\|  \leq   \| e^{{A}_p \tau_s} \| ^\ell E_k^e$
	where $E_{k}^e$ is the quantization parameter at the sampling instant $t_{k}$ just before the DoS attack beginning instant, i.e.,  $ACK_{t_{k}} = 1$ and $ACK_{t_{k+1}} = 0$. 
	From \eqref{equ_xkl}, we know 
	\begin{align*} 
		\| x(t_{k+\ell})\|  =  ~& \| (A_p^d)^\ell \| E_k^e+ \sum_{j=0}^{\ell-1} \| (A_p^d)^{\ell-j-1}\|  \| B_p^d \| \frac{\| e^{{A}_p \tau_s} \| ^jE_k ^e}{N} \\
		\leq ~&  \rho_{p} \lambda_p^\ell E_k^e+ \sum_{j=0}^{\ell-1} \rho_p \lambda_p^{\ell-j-1} \| B_p^d \| \frac{\| e^{{A}_p \tau_s} \| ^jE_k^e }{N} \\
		=:~& \Psi_p(\ell) E_k^e.
	\end{align*}
	Combining it with the maximum DoS attack duration, the maximum increase of $E_k^e$ is 
	$E_{k+n_{\max}} ^e \leq \overline{\Psi}_p E_k^e$ with 
		$\overline{\Psi}_p =	\max_{\ell\in [0,n_{max}]} \big \{\rho_{p} \lambda_p^\ell + \sum_{j=0}^{\ell-1} \rho_p \lambda_p^{\ell-j-1} \| B_p^d \| \frac{\| e^{{A}_p \tau_s} \| ^j }{N} ,1 \big \}$.
	Since $\lambda_p <1$, $ \Psi_p(\ell) $ may decrease and then increase as $\ell$ increases, or it may increase monotonously. It can be obtained that 
		$	E_{k+\ell}^e \leq (\overline{\Psi}_p)^{\frac{\ell}{n_{\max}}}E_k^e$.
	Hence, $\| x(t_{k+\ell})\|  \leq E_{k+\ell}^e $.

	\textbf{\cref{case_3}:  $ACK_{t_k} = 1$, $ACK_{t_{k+1}} = 1$ and $SY_{t_{k+1}}= 0$.}
	
	Suppose that the system mode is $q$, i.e.,  $\sigma(t) = q$ and the predictor mode is $p$, $\hat{\sigma}(t) = p$. Define an auxiliary vector $z(k)$ as that in \Cref{lemma1}. From \eqref{equ_equ_asy_error_asy},  one gets
	\begin{equation*}
		\begin{aligned}
			\| z(t_{k+1}) \| \leq ~&\|e^{\mathcal{A}_{pq}  \overline{t}}  e^{\mathcal{A}_{p} (\tau_s - \overline{t})} \| \| z(t_k^+)\|  \\
			\leq~ & \max_{\overline{t}\in [0,\tau_s)}\|e^{\mathcal{A}_{pq}  \overline{t}}  e^{\mathcal{A}_{p} (\tau_s - \overline{t})} \| \| z(t_k^+)\| \\
			\leq ~ &{\Lambda}_{pq}^4   E_{k}^e  =: E_{k+1}^e 
		\end{aligned}
	\end{equation*}
	where ${\Lambda}_{pq}^4  = \max_{\overline{t}\in [0,\tau_s)}\|e^{\mathcal{A}_{pq} \overline{t} }  e^{\mathcal{A}_{p} (\tau_s - \overline{t})} \|$. The third inequality is derived according to  $\|x(t_k)\| \leq E_k^e$ and $\| \hat x(t_k^+)\| = \|c_k\| \leq E_k^e$.

	\textbf{\cref{case_4,case_5,case_6}: $SY_{t_{k}}= 1$, $SY_{t_{k+1}}= 0$ and $ACK_{t_k} \times ACK_{t_{k+1}} = 0$.}
	
	From \eqref{equ_ez}, the state norm admits
		$	\| 	e_z(t_{k+1}) \| \leq  \| e^{\tilde{\mathcal{A}}_{pq} \overline{t}} e^{\overline{A}_{qq} (\tau_s-\overline{t})} \|  \| e_z(t_k)\|$.
	Due to the definition of infinity norm, it is true that $\| e_z(t_k)\| \leq \max\{ \| x(t_k)\| , \| e(t_k)\| \}$ and $\| x(t_{k+1})\| \leq \| e_z(t_{k+1})\| $. Next we calculate the norm of $e(t_k)$.   
	
	Let $E_{k-\ell}^e$ be the quantization parameter at the sampling instant $t_{k-\ell}$ just before the DoS attack starting instant, i.e.,  $ACK_{t_{k-\ell}} = 1$ and $ACK_{t_{k-\ell+1}} = 0$. Thus, $\| e(t_k) \| \leq \frac{\| e^{A_p\tau_s} \| ^{\ell} }{N} E_{k-\ell}^e$.
	
	Using the definition of $\Psi_p(\ell)$, we have 
		$	\Psi_p(\ell +1) - \lambda_p \Psi_p(\ell) = \rho_p \| B_p^d\| \frac{\| e^{A_p \tau_s}\|^{\ell}}{N}$.
	Let the minimum of $\Psi_p(\ell)$ as $\underline{\Psi}_p:=	\min\limits_ {l\in [0,n_{max}]} \Big \{\rho_{p} \lambda_p^l + \sum_{j=0}^{l-1} \rho_p \lambda_p^{l-j-1} \| B_p^d \| \frac{\| e^{{A}_p \tau_s} \| ^j }{N} \Big \}.$
	It yields 
		$	\rho_p \| B_p^d\| \frac{\| e^{A_p \tau_s}\|^{\ell}}{N} \leq  \overline{\Psi}_p^{\frac{l}{n_{\max}}} -\lambda_p\underline{\Psi}_p$.
	Then the upper bound of error becomes
		$	\| e(t_k) \| 
		\leq {\frac{1}{\rho_p \| B_p^d\| }  E_{k}^e-\frac{1 }{\rho_p \| B_p^d\| }  \lambda_p\underline{\Psi}_p} \overline{\Psi}_p^{\frac{-\ell}{n_{\max}}}E_{k}^e.$
		Since $\overline{\Psi}_p \geq 1$ and $\ell \in [0, n_{\max}]$, we have $\overline{\Psi}_p^{-1} \leq  \overline{\Psi}_p^{\frac{-\ell}{n_{\max}}} \leq 1$. Hence, one has 
		$		\| e(t_k) \| 
		\leq {\frac{1}{\rho_p \| B_p^d\| }  E_{k}^e-\frac{1 }{\rho_p \| B_p^d\| }  \lambda_p\underline{\Psi}_p} \overline{\Psi}_p^{-1}E_{k}^e$.
		In terms of the fact  $\| x(t_k)\| \leq E_k^e$, we know 
		\begin{align}
			~&	\| x(t_{k+1}) \| \leq \| e_z(t_{k+1})\|\label{equ_xtk1} \\
			\leq~&  \| e^{\tilde{\mathcal{A}}_{pq} \overline{t}} e^{\overline{A}_{qq} (\tau_s-\overline{t})} \| \max \Big \{ \frac{ 1- \lambda_p\underline{\Psi}_p\overline{\Psi}_p^{-1} }{\rho_p \| B_p^d\| },1 \Big \} E_k^e \notag \\
			=:~& {\Lambda}_{pq }^5E_k^e  = E_{k+1}. \notag
		\end{align}
		This further deduces $\|x (t_k)\| \leq E_{k}^e$. 
		
		\textbf{\cref{case_7,case_8}: $SY_{t_{k}}= 0$, $SY_{t_{k+1}}= 0$ and $ACK_{t_k} = 0$.} 
		
		Similar to \cref{case_5}, the norm of the state is 
		\begin{equation}
			\| e_z(t_{k+1})\| \leq \| e^{\widetilde{\mathcal{A}}_{pq}\tau_s} \| \| e_z(t_k)\| \leq \| e^{\widetilde{\mathcal{A}}_{pq}\tau_s} \|E_{k}^e=: {\Lambda}_{pq }^6 E_k^e
		\end{equation}
		which can be derived according to $\| e_z(t_k)\| \leq E_k^e$ from  \eqref{equ_xtk1}.
		
		With the above analysis, it is clear that $\|x(t_k)\| \leq E_k^e ( k\geq 0)$.
		The analysis of uniformity between $E_k^e$ and $E_k^d$ can be performed in a similar way thus it is omitted for brevity.  
	\end{proof}
	
	In this strategy, we focus on the status of DoS attack at instant $t_{k-1}$. Such a treatment is motivated by the presence of an additional constant, $\rho_p$, that arises at the first instant after a DoS attack due to the inequality $\| (A_p^d)^\ell \| \leq \rho_p \lambda_p^\ell$. Additionally, when a system is suffering from a DoS attack, the error between the actual state and the predicted state becomes larger since the predictor cannot be updated. However, if the error is sufficiently small, the state may still be convergent. Specifically, the quantization parameter $E_k^e$ may decrease during a DoS attack, as noted in \cite{Rui3}. If the value at the instant just before the DoS attack is recorded, the update law for $E_k^e$ can incorporate $\Psi_p(\ell)$, which reduces conservatism at the cost of greater complexity.

	\begin{theorem}[Stability]\label{thm1_ori}
		For  the switched system \eqref{equ_state}  with \textbf{\textit{Strategy 2}}, if the quantizer updates its parameters according to \eqref{equ_Eke_origin} and \eqref{equ_Ekd_origin}, the  quantization level $N \geq \max\limits_{p \in \mathcal{M}} \left \{ \frac{\rho_p \| B_p^d\|}{1-\lambda_p}\right\}$ is odd,  DoS attack and the switching signal satisfy
		\begin{equation}\label{equ_con1_origin}
			{\footnotesize \begin{aligned}
					& \frac{1}{\tau_d}\Big ( \log \frac{\overline{\Lambda}^5}{\overline{\Lambda}^6}+ N_{\max} \log \frac{\overline{\Lambda}^6}{\overline{\Lambda}^3}
					\Big ) + \frac{1}{\tau_D} \log \frac{\overline{\Lambda}^1}{\overline{\Lambda}^2}\\
					&+ \big(\frac{1}{{T\tau_s}}+\frac{1}{\tau_D}\big)\log \frac{\overline{\Lambda}^3}{\overline{\Lambda}^2} + \frac{1}{\tau_s} \log \overline{\Lambda}^2 \leq 0
			\end{aligned}}
		\end{equation}
		\begin{equation}\label{equ_con2_origin}
			{\tiny 	\tau_d \geq \max_{p,q \in \mathcal{M}}\Big\{\frac{\log(\rho_p\xi_{pq} \mathbb{A}_{pq}^d)+ N_{\max} (\log \eta_{pq}-\log\lambda_p)}{-\log(\lambda_p)}
				\Big\}\tau_s}
		\end{equation}
		where $\overline{\Lambda}^{i} = \max_{p \in \mathcal{M}} \overline{\Lambda}^{i}_p (i\in \{1,2,3\})$, $\overline{\Lambda}^{i} = \max\limits_{p,q \in \mathcal{M}, p\neq q} \overline{\Lambda}^{i}_{pq} (i\in \{5,6\})$, 
		$\mathbb{A}_{pq}^d  = \max_{\overline{t} \in [0,\tau_s]} \| \tilde{A}_{pq}^d (\overline{t})\|$,  and $N_{\max}$ is the maximal number of asynchronous interval,	then the switched system \eqref{equ_state} is stable under DoS attack.
	\end{theorem}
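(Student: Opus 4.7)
The plan is to leverage \Cref{lemma4}, which already guarantees $\|x(t_k)\| \leq E_k^e$ together with uniformity $E_k^d = E_k^e$ on attack-free sampling instants. Hence proving asymptotic stability reduces entirely to establishing $E_k^e \to 0$ as $k \to \infty$. I would adopt the same strategy as in \Cref{lemme3} but with the larger family of growth factors $\{\Lambda_p^1,\Lambda_p^2,\Lambda_p^3,\Lambda_{pq}^4,\Lambda_{pq}^5,\Lambda_{pq}^6\}$ induced by \textbf{\textit{Strategy 2}}. The basic idea is to partition the time axis $[t_0,t_k)$ into attack subintervals and attack-free subintervals, overlay the switching instants, and assign each sampling step to exactly one of the update-law categories listed in \eqref{equ_Eke_origin}.

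Concretely, within a typical dwell window $[\llcorner t_s\lrcorner,\llcorner t_{s+1}\lrcorner)$ I would label each sample with the factor it contributes: $\Lambda_p^1$ exactly once per DoS-free return (the first sample after an attack, carrying the extra $\rho_p$ overhead from $\|(A_p^d)^\ell\|\leq \rho_p\lambda_p^\ell$), $\Lambda_p^2$ for every subsequent synchronous attack-free sample, $\Lambda_p^3$ for each attack-blocked sample, and $\Lambda_{pq}^4,\Lambda_{pq}^5,\Lambda_{pq}^6$ for the (at most $N_{\max}$) asynchronous samples accompanying each switch. Taking logarithms, the total growth of $E_k^e$ over $[t_0,t_k)$ decomposes additively into contributions that scale with $n(t_k,t_0)$, $|\overline\Xi(t_k,t_0)|/\tau_s$, $|\overline\Theta(t_k,t_0)|/\tau_s$, and $N_\sigma(t_k,t_0)$.

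Substituting the bounds from \Cref{DoSF,DoSD} together with $N_\sigma(t_k,t_0)\leq 1+(t_k-t_0)/\tau_d$, the aggregate exponent is bounded above by $(t_k-t_0)$ times the left-hand side of \eqref{equ_con1_origin} plus a constant. Condition \eqref{equ_con1_origin} therefore forces geometric decay of $E_k^e$, provided $N\geq \rho_p\|B_p^d\|/(1-\lambda_p)$ so that $\Lambda_p^2<1$ (giving the quantizer a genuine contraction during synchronous attack-free operation). The dwell-time bound \eqref{equ_con2_origin}, which mirrors \eqref{equ_con2_1} but with $\mathbb{A}_{pq}^d$ replacing the identity in the asynchronous amplification, guarantees that the worst-case per-switch jump $\rho_p\xi_{pq}(\eta_{pq}/\lambda_p)^{N_{\max}}\mathbb{A}_{pq}^d$ is absorbed by $\lambda_p^{\tau_d/\tau_s}$, so that $E_k^e$ does not rebound across individual switches. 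Combining both conditions yields $\lim_{k\to\infty}E_k^e=0$, and hence $\lim_{k\to\infty}\|x(t_k)\|=0$ via \Cref{lemma4}.

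The main obstacle is the careful bookkeeping of when $\Lambda_p^1$ rather than $\Lambda_p^2$ applies: the $\rho_p$ overhead is incurred \emph{once per DoS attack}, not once per attack-free sample, which is precisely why \eqref{equ_con1_origin} carries a $\log(\overline\Lambda^1/\overline\Lambda^2)$ term weighted by $1/\tau_D$ (the attack-frequency rate) rather than by $1/\tau_s$. Similarly, the asynchronous factors $\Lambda_{pq}^5$ and $\Lambda_{pq}^6$ occur at most $N_{\max}$ times per switch, so their exponents scale with $N_\sigma(t_k,t_0)$ and need to be split from the synchronous contributions. I would formalise this by an automaton-style accounting, assigning each sample to exactly one edge so that no factor is double-counted. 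Once this categorisation is in place, the remainder of the argument is the same telescoping computation used in \Cref{lemme3} and the final switching-time induction used in the proof of \Cref{thm1}.
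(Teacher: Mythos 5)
Your proposal is correct and follows essentially the same route as the paper: reduce stability to $E_k^e\to 0$ via \Cref{lemma4} (since the quantizer is origin-centered, $\|x(t_k)\|\le E_k^e$), then repeat the counting argument of \Cref{lemme3} with the \textbf{\textit{Strategy 2}} growth factors, attributing $\log(\overline\Lambda^1/\overline\Lambda^2)$ to the attack frequency, $\log(\overline\Lambda^3/\overline\Lambda^2)$ to the attack duration, the switch-induced surcharge $\log(\overline\Lambda^5/\overline\Lambda^6)+N_{\max}\log(\overline\Lambda^6/\overline\Lambda^3)$ to the dwell-time rate, and $\log\overline\Lambda^2$ to the baseline; this is exactly how \eqref{equ_con1_origin} is assembled. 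The one point where you diverge from the paper is the role of \eqref{equ_con2_origin}: you present it as preventing a rebound of $E_k^e$ across switches, but the per-switch jump of $E_k^e$ is governed by $\Lambda_{pq}^4,\Lambda_{pq}^5,\Lambda_{pq}^6$ and is already absorbed into \eqref{equ_con1_origin}; in the paper \eqref{equ_con2_origin} is instead the dwell-time condition for the \emph{state} recursion (the analogue of the argument in \Cref{coro1}, with the extra factor $\mathbb{A}_{pq}^d$ accounting for a switch between two sampling instants), ensuring $\|\tilde x(t_{s+1})\|\le\|\tilde x(t_s)\|$ modulo vanishing $E_k^e$ terms. Since for \textbf{\textit{Strategy 2}} the bound $\|x(t_k)\|\le E_k^e$ already makes state convergence follow from $E_k^e\to 0$ alone, your shortcut does not invalidate the proof—an unused (or redundantly used) hypothesis is harmless—but the quantitative claim you attach to \eqref{equ_con2_origin} is not the mechanism the paper intends.
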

	
	The proof of this theorem is similar to that of \Cref{thm1}. First, the convergence of the quantization parameter $E_k^e$ is guaranteed by \eqref{equ_con1_origin}, specifically addressing the worst-case scenario where a switch occurs between two consecutive sampling instants during DoS attack intervals. In this case, the asynchronous duration reaches its maximum value $N_{\max}$. The analysis of the first instant after DoS attack is inspired by \cite{output2}, although, unlike \cite{output2}, this situation also arises at switching instants without DoS attack.
	However, because of $\Lambda_{pq}^4 \leq \Lambda_{pq}^5$, a switch occurring during a DoS attack causes a larger increase of quantization parameter $E_k^e$, representing the worst case. To maintain stability, the number of \textit{zooming-out}, using the first equation in \eqref{equ_Eke_origin}, matches the number of DoS attack, $n(t_k,t_0)$. Repeating the analysis in \Cref{lemma1}, \eqref{equ_con1_origin} ensures that $\lim\limits_{k\rightarrow \infty} E_k^e \rightarrow 0$.
	The convergence of the state can then be obtained using a similar approach to \Cref{coro1}, with the main difference being the case where a switch occurs between two consecutive sampling instants. This issue can be addressed following the analysis in \Cref{coro1}. The detailed proof is omitted here.
	
	\begin{remark}
		In \textbf{\textit{Strategy 2}}, the quantization center is set to be the origin. Once the quantization parameter $E_k^e$ converges to zero, the system achieves stability, as indicated by $\| x(t_k)\| \leq E_{k}^e$. However, \eqref{equ_con1_origin} characterizes a coupling relationship among DoS attack constraints,  switching signal constraint, and quantization parameters. In contrast, \eqref{equ_con2_origin} is solely related to the system matrices. This sufficient condition is derived from the convergence of the quantization parameter $E_k^e$, which poses a stricter dwell time constraint.
		Here, we propose a method to calculate the switching signal and DoS attack constraints. First, by utilizing \eqref{equ_con2_origin}, we can determine the dwell time. Then, by applying the quantization level condition and taking into account  the physical mechanism, we can get the quantization level $N$. In this scenario, DoS attack constraints may become overly conservative or even infeasible. Therefore, increasing $N$ or enlarging the dwell time could help relax DoS attack constraints.
	\end{remark}

	\begin{remark}
		The first sufficient condition will boil down to that for non-switched systems, specifically, when $\tau_d \rightarrow \infty$. Compared to \cite{output1}, the condition in \eqref{equ_con1_origin} is similar. More specifically, the output feedback control with an observer was employed in \cite{output1}. After matrix augmentation, the closed-loop system under DoS attack is represented in zero-input form, which simplifies the analysis.
		However, in this paper, the control input is not equal to zero when a DoS attack occurs. The update law of quantization parameters during DoS attack poses a challenging task. An active controller could mitigate state divergence during DoS attack, as it may not lead to an increase of $E_k^e$.
		Furthermore, the asynchronous behavior in \cref{case_5,case_6} brings certain additional difficulty. The augmentation matrix is utilized to characterize the closed-loop system, but the evolution of the quantization error $e(t_k)$ is included when calculating state norm. In the absence of DoS attack, we have $\| e(t_k)\| \leq \frac{E_k^e}{N}$. However, this condition does not hold during DoS attack intervals. The value of $\|e(t_k)\|$ becomes dependent on the duration of DoS attack. To address this issue, an upper bound is computed that is solely related to the latest value of $E_k^e$.
	\end{remark}

\section{Passive quantized control strategy}\label{sec_passive}
In the aforementioned control strategies, the controller continuously generates control signal based on the predicted state. However, in certain applications, ZOH mechanism is utilized to generate the control signal during the intervals between sampling instants. Furthermore, if DoS attack occurs, the sensor signal becomes zero, resulting in a zero control signal, see \cite{Rui2}. This approach belongs to the passive control framework, as the defenders do not actively respond to attacks.
In the sequel, \textbf{\textit{Strategy 3}} is devised similar to \textbf{\textit{Strategy 2}}. Then based on the simplest \textbf{\textit{Strategy 3}}, two quantized control strategies without ACK signal are investigated in terms of time-triggered and event-triggered approaches. 

	\subsection{Closed-loop System with Passive Controller}
The closed-loop system for \textit{passive controller} becomes simpler than that for \textit{active controller}. 
Unlike closed-loop dynamics with \textit{active controller}, the asynchronous interval is shorter than one period during DoS attack intervals, as the control signal becomes zero. Moreover, the system dynamics between the attack and the attack-free scenarios is different.

$\bullet$ Synchronous stage for DoS attack-free case:
The state solution for DoS attack-free case turns to be
\begin{equation}\label{equ_close_1_zero}
	{x} (t_{k+1}) = \hat{A}_p^d x(t_k)  + \hat{B}_p^d u(t_k)
\end{equation}
where $\hat{A}_p^d = e^{A_p \tau_s}$, $\hat{B}_p^d = \int_0^{\tau_s} e^{A_ps}B_pds$ and $\sigma(t_k) = p$.
According to $u(t_k) = K_pc_k =K_p(x(t_k) - e(t_k))  $, we have 
$	x(t_{k+1})  = \hat{A}_p^{cl}x (t_k) + \hat B_p^dK_pe(t_k)$
where $\hat{A}_p^{cl} = \hat{A}_p^d + \hat B_p^dK_p$.

$\bullet$ Synchronous stage for DoS attack case:
For this case, the state is
$	x(t_{k+1}) =  \hat{A}_p^d x(t_k) .$

$\bullet$ Asynchronous stage for DoS attack-free case:
Suppose that $p= \sigma(t),~q= \hat{\sigma}(t)$ for all $t\in [t_s,t_{k+1})$.
The state solution for attack-free case obeys
$x(t_{k+1}) 
= \hat{A}_{pq}^d(\overline{t}) x(t_k) + \hat{B}_{pq}^d (\overline{t}) u(t_k) $
where  $ \overline{t} = t_{k+1}-t_s$, $\hat{A}_{pq}^d(\overline{t}) =e^{A_p \overline{t} +A_q( \tau_s-\overline{t} )} $ and $\hat{B}_{pq}^d (\overline{t})  = e^{A_p \overline{t}} \int_{0}^{\tau_s-\overline{t}} e^{A_q s } B_q ds+\int_{0}^{\overline{t}} e^{A_p s } B_p ds $. Then the closed-loop system can be rewritten as
\begin{equation}\label{equ_xtk_asy_zero_1}
	x(t_{k+1}) =  \hat{A}_{pq}^{cl}(\overline{t}) x(t_k) + \hat{B}_{pq}^d (\overline{t}) K_{q} e(t_k)
\end{equation}
where $\hat{A}_{pq}^{cl} (\overline{t}) =   \hat{A}_{pq}^{d}(\overline{t}) +  \hat{B}_{pq}^d (\overline{t}) K_{q}$.

$\bullet$ Asynchronous stage for DoS attack case:
The system dynamics  admits
\begin{equation}\label{equ_clo_asy_2_ori}
	x(t_{k+1}) = e^{A_p \overline{t}+A_q (\tau_s-\overline{t})} x(t_k) =\hat {A}_{pq}^d (\overline{t}) x(t_k)
\end{equation}

Similarly, there exist constants $0<\hat{\lambda}_p <1$, $\hat \eta_{p} ,~\hat \xi_{p}>0$ and $\hat{\rho}_p>0$ such that 	$ \| (\hat{A}_p^{cl})^k\| \leq \hat {\rho}_{p} \hat \lambda_p^k$ and $ \| (\hat{A}_p^d)^k\| \leq \hat \xi_{p}  \hat\eta_{p}^k$  for all mode.

\subsection{Strategy 3}
Similar to the analysis in Section \ref{sec_active}, we propose the update laws for quantization parameters to ensure that the state stays within the quantization range. Subsequently, we derive a sufficient condition related to DoS attack constraints and switching signal to guarantee the stability. Following the treatment of \cref{case_1}-\textbf{a} and \cref{case_1}-\textbf{b}, we perform \cref{case_2}-\textbf{a} and \cref{case_2}-\textbf{b}. 
Here, \cref{case_2}-\textbf{a} corresponds to the scenario where \(ACK_{t_{k-1}}=1\), while \cref{case_2}-\textbf{b} deals with the scenario where \(ACK_{t_{k-1}}=0\).

\begin{lemma}[update law of quantizer]\label{lemma5}
	For the switched system \eqref{equ_state}  with \textbf{\textit{Strategy 3}}, if the quantizer updates according to 
	{\scriptsize \begin{align}
			E_{k+1}^e = &
			\begin{cases}
				\begin{array}{ll}
					{\Upsilon}_{\sigma(t_k)}^1  E_k^e &\text{\cref{case_1}-\textbf{a}} \\
					{\Upsilon}_{\sigma(t_k)}^2  E_k^e &\text{\cref{case_1}-\textbf{b}~}\\
					\hat{\xi}_{\sigma(t_k)}  \hat \eta_{\sigma(t_k)}	E_{k}^e & \text{\cref{case_2}-\textbf{a}} \\
					\hat \eta_{\sigma(t_k)}	E_{k}^e & \text{\textbf{Cases \ref{case_2}}-\textbf{b}, \textbf{\ref{case_7}}, \textbf{\ref{case_8}}} \\
					\Upsilon _{{\sigma}(t_k) \hat{\sigma}(t_k) }^3E_{k}^e& \text{\caref{case_3} }\\
					\widehat {\Upsilon }_{{\sigma}(t_k) \hat{\sigma}(t_k) } \tilde{\xi}_{{\sigma}(t_k)} E_{k}^e& \text{\cref{case_4,case_5}}\\
					\Upsilon _{{\sigma}(t_k) \hat{\sigma}(t_k) }^4	E_{k}^e &\text{\cref{case_6}}\\
				\end{array}
			\end{cases}\label{equ_Eke_zero}\\
			E_{k+1}^d = &
			\begin{cases}
				\begin{array}{ll}
					{\Upsilon}_{\sigma(t_k)}^1  E_k^d &\text{\cref{case_1}-\textbf{a}} \\
					{\Upsilon}_{\sigma(t_k)}^2  E_k^d &\text{\cref{case_1}-\textbf{b}~}\\
					\hat {\xi}_{\sigma(t_k)}  \hat \eta_{\sigma(t_k)}	E_{k}^d & \text{\textbf{Cases \ref{case_2}}-\textbf{a} and  \textbf{\ref{case_4}} } \\
					\hat \eta_{\sigma(t_k)}	E_{k}^d & \text{\textbf{Cases \ref{case_2}}-\textbf{b}, \textbf{\ref{case_5}}, \textbf{\ref{case_7}}} \\
					\Upsilon _{{\sigma}(t_k) \hat{\sigma}(t_k)}^3 E_{k}^d &  \text{\cref{case_3} }\\
					\Upsilon _{{\sigma}(t_k) \hat{\sigma}(t_k) }^4 	E_{k}^d &\text{\cref{case_6}}\\
					\widehat {\Upsilon }_{{\sigma}(t_k) \hat{\sigma}(t_k) } \tilde{\xi}_{{\sigma}(t_k)}  \hat \eta_{\sigma(t_k)}^ {n} E_{{k-n}}^d &\text{\cref{case_8}}\\
				\end{array}
			\end{cases}\label{equ_Ekd_zero}
	\end{align}}
	where $\tilde{\xi}_{p}  = \max\{\hat{\xi}_{p} ,1\}$,
	${\Upsilon}_p^1  = \hat \rho_p \hat \lambda_p + \frac{\hat \rho_p\| 
		\hat {B}_p^dK_p\| }{N} $, ${\Upsilon}_p^2  =  \hat \lambda_p + \frac{\hat\rho_p\| \hat B_p^d K_p\| }{N} $
	and $\Upsilon _{pq}^3 = \max _{\overline{t} \in [0,\tau_s]} \left( \| \hat{A}_{pq}^d(\overline{t})+ \hat{B}_{pq}^d (\overline{t}) K_{q}\frac{  N-1}{N}  \| \right) $, $ {\Upsilon}_{pq}^4 = \max_{\overline{t} \in [0,\tau_s]} \| \tilde{A}_{pq}^d(\overline{t})\|$ and $	\widehat {\Upsilon }_{pq}   =\max \{	\Upsilon _{pq }^3, 	\Upsilon _{pq }^4\}$. Then one gets  
	$
	\|x(t_k)\| \leq E_k^e,~ k\geq 0
	$ and $E_k^d = E_k^e$ for $t_k \in \Xi(\infty,t_0)$. 
\end{lemma}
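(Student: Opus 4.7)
The plan is to mirror the case-by-case induction used in the proof of \Cref{lemma4}, but now invoking the passive closed-loop dynamics \eqref{equ_close_1_zero}--\eqref{equ_clo_asy_2_ori}. I will prove by induction on $k$ that $\|x(t_k)\|\le E_k^e$, using the base case from \Cref{initial} together with an appropriately chosen $E_0^e\ge E_0$. For the induction step, the two universal tools are: (i) when a sample is transmitted successfully, $\|e(t_k^+)\|\le E_k^e/N$ by \eqref{equ_error_bound}, \eqref{equ_pre_state}, and the induction hypothesis; (ii) when the sample is blocked, $\hat x(t_k^+)=\hat x(t_k^-)$ so that $\|e(t_k^+)\|=\|e(t_k^-)\|$ can simply be propagated.

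The synchronous branches are the simplest. For \cref{case_1}-\textbf{a} and \textbf{b}, I use $x(t_{k+1})=\hat A_p^{cl}x(t_k)+\hat B_p^d K_p e(t_k)$; bounding $\|\hat A_p^{cl}\|$ by $\hat\rho_p\hat\lambda_p$ at the first step after an attack/asynchronous window (giving $\Upsilon_p^1$), and by the refined recursion $x(t_{k+\ell})=(\hat A_p^{cl})^\ell x(t_{k_0})+\sum_{j}(\hat A_p^{cl})^{\ell-j-1}\hat B_p^d K_p e(t_{k_0+j})$ with $\|(\hat A_p^{cl})^\ell\|\le \hat\rho_p\hat\lambda_p^\ell$ for subsequent steps (giving $\Upsilon_p^2$). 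A telescoping verification (as in the analysis for \Cref{lemma4}) confirms $\Upsilon_p^1(\Upsilon_p^2)^{\ell-1}$ majorises the direct multi-step bound. For \cref{case_2}-\textbf{a} and \textbf{b}, the control is zero so $x(t_{k+1})=\hat A_p^d x(t_k)$, and the same splitting between the first attacked sample (using $\|\hat A_p^d\|\le \hat\xi_p\hat\eta_p$) and subsequent attacked samples (using $\hat\eta_p$, amortised across $\|(\hat A_p^d)^k\|\le \hat\xi_p\hat\eta_p^k$) yields the coefficients in \eqref{equ_Eke_zero}. \Cref{case_7} and \textbf{\ref{case_8}} are handled identically because, under passive control, asynchrony during an attack merely changes which system matrix acts, giving $\|x(t_{k+1})\|\le \hat\eta_{\sigma(t_k)}E_k^e$.

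The genuinely asynchronous attack-free branches \cref{case_3,case_4,case_5,case_6} require more care. For \cref{case_3}, I plug $c_k=x(t_k)-e(t_k)$ into \eqref{equ_xtk_asy_zero_1} to write $x(t_{k+1})=\hat A_{pq}^d(\bar t)x(t_k)+\hat B_{pq}^d(\bar t)K_q c_k + \hat B_{pq}^d(\bar t)K_q e(t_k)$ and then rearrange so that the factor $\hat A_{pq}^d(\bar t)+\hat B_{pq}^d(\bar t)K_q\frac{N-1}{N}$ appears, producing $\Upsilon_{pq}^3$. For \cref{case_4,case_5}, because the preceding sample was attacked, $e(t_k)$ is no longer controlled by $E_k^e/N$ but rather by the compounded free evolution $\|e^{A_p s}\|$ over the attack; I bound it by running the argument from \cref{case_2} backwards to the last successful transmission, which accounts for the extra factor $\tilde\xi_{\sigma(t_k)}$ in $\widehat\Upsilon_{pq}\tilde\xi_{\sigma(t_k)}$. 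For \cref{case_6}, the control input vanishes on $[t_k,t_{k+1})$ so only $\tilde A_{pq}^d(\bar t)$ enters, yielding $\Upsilon_{pq}^4$.

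Uniformity is then the easy part: on every attack-free sample the quantization signal is received, so comparing \eqref{equ_Eke_zero} with \eqref{equ_Ekd_zero} line by line on \cref{case_1,case_2,case_3,case_6} (where both sides coincide) confirms $E_k^d=E_k^e$, while the unmatched cases fall inside attack intervals $\Xi(\infty,t_0)$ where uniformity is not required. The main obstacle I anticipate is \cref{case_4,case_5}: the error $\|e(t_k)\|$ there is no longer bounded by $E_k^e/N$ and one must carefully trace it back to the last successfully received packet, propagating through the free open-loop dynamics $e^{A_{\sigma(\cdot)}\tau_s}$; keeping this bound small enough to absorb it into the combined coefficient $\widehat\Upsilon_{pq}\tilde\xi_{\sigma(t_k)}$ is the delicate bookkeeping step, and it is where the definition $\tilde\xi_p=\max\{\hat\xi_p,1\}$ earns its role.
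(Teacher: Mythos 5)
Your overall skeleton (induction on $k$, case-by-case use of the passive closed-loop dynamics, and the line-by-line comparison for uniformity) matches the paper, and your treatment of \cref{case_1}, \cref{case_2}, \cref{case_3}, \cref{case_6}, \cref{case_7,case_8} is essentially the paper's argument. The problem is your handling of \cref{case_4,case_5}, which rests on a misreading of the case table and of the passive controller. In \cref{case_4} one has $ACK_{t_k}=1$: the sample at $t_k$ \emph{is} received, so $e(t_k)$ is the fresh quantization error (equivalently $\|c_k\|\leq\frac{N-1}{N}E_k^e$), exactly as in \cref{case_3}; there is nothing to trace back through an attack. In \cref{case_5} one has $ACK_{t_k}=0$, so under the \emph{passive} controller $u\equiv 0$ on $[t_k,t_{k+1})$ and the relevant dynamics is the zero-input one \eqref{equ_clo_asy_2_ori}, $x(t_{k+1})=\hat{A}_{pq}^d(\overline{t})x(t_k)$ — there is no error term at all, so the whole difficulty you flag as "the delicate bookkeeping step" does not arise. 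The paper simply groups $\{$\textbf{3},\textbf{4}$\}$ (closed loop, bound $\Upsilon_{pq}^3$) and $\{$\textbf{5},\textbf{6}$\}$ (zero input, bound $\Upsilon_{pq}^4$), and then observes that the prescribed coefficient $\widehat{\Upsilon}_{pq}\tilde{\xi}_p=\max\{\Upsilon_{pq}^3,\Upsilon_{pq}^4\}\tilde{\xi}_p$ dominates both because $\tilde{\xi}_p\geq 1$.

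The mechanism you propose instead — back-propagating $\|e(t_k)\|$ through $\|e^{A_p s}\|$ to the last successful transmission — is the machinery of \Cref{lemma4} (\textbf{\textit{Strategy 2}}, active controller), where the predictor keeps injecting a stale estimate during the attack. If you actually ran that argument here, the resulting bound would contain a compounded factor of the form $\|e^{A_p\tau_s}\|^{\ell}/N$ times an older $E_{k-\ell}^e$, and nothing in the coefficient $\widehat{\Upsilon}_{pq}\tilde{\xi}_p$ of \eqref{equ_Eke_zero} is designed to absorb it (compare $\Lambda_{pq}^5$ in \Cref{lemma4}, which carries the explicit factor $\max\{\frac{1-\lambda_p\underline{\Psi}_p\overline{\Psi}_p^{-1}}{\rho_p\|B_p^d\|},1\}$ for precisely this purpose). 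So your proof of \cref{case_4,case_5} would not close against the stated update law. Relatedly, the role you assign to $\tilde{\xi}_p$ is wrong: it is not there to absorb accumulated quantization error, but to pre-pay the one-off constant $\hat{\xi}_p$ from $\|(\hat{A}_p^d)^k\|\leq\hat{\xi}_p\hat{\eta}_p^k$ at the first instant of the attacked/asynchronous block, so that the subsequent instants (\cref{case_7,case_8}) can use the bare rate $\hat{\eta}_p$.
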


\begin{proof}
	In what follows, we also discuss the cases listed in Table \ref{tab1} one by one. 
	
		\textbf{\cref{case_1}: Sampling instant $t_k$  is attack free and the system is synchronous, i. e., $ACK_{t_k} = 1$ and $SY_{t_{k+1}}=1$.}
		
		Similar to the \cref{case_1} in  \Cref{lemma4}, we can find $\| x(t_{k+l})\|  \leq E_{k+l}^e $.

	\textbf{\cref{case_2}: Sampling instant $t_k$  is under attack and the system is synchronous, i. e., $ACK_{t_k} = 0$ and  $SY_{t_{k+1}} = 0$.}
	
	Suppose that the system mode is $p$, i.e.,  $\sigma(t) = p$,  then the state satisfies 
	$	\| {x} (t_{k+n}) \| \leq  ~   \| (\hat A_p^d )^nx(t_k) \| $.
	It is easy to obtain $\| x(t_{k+l})\|  \leq E_{k+l}^e $ based on $ \| (\hat{A}_p^d)^k\| \leq \hat \xi_{p}  \hat \eta_{p}^k$.
	
		\textbf{\cref{case_3,case_4}:  Sampling instant $t_k$  is  without attack and the system is asynchronous, i.e., $ACK_{t_k} = 1$ and $SY_{t_{k+1}}= 0$.}

From (\ref{equ_xtk_asy_zero_1}), we have 
\begin{align*}
	\| x(t_{k+1}) \| =~& \|  \hat{A}_{pq}^{cl}(\overline{t}) \| \| x(t_k)\|  + \| \hat{B}_{pq}^d (\overline{t}) K_{q} \| \| e(t_k)\| \\
	\leq ~& \|  \hat{A}_{pq}^{cl}(\overline{t}) \|E_k^e + \| \hat{B}_{pq}^d (\overline{t}) K_{q} \| \frac{N-1}{N} E_k^e\\
		 \leq ~&  \Upsilon _{pq}^3E_k^e 
\end{align*}
	where $\Upsilon _{pq}^3 = \max _{\overline{t} \in [0,\tau_s]} \left(\| \hat{A}_{pq}^d(\overline{t})\|+\| \hat{B}_{pq}^d (\overline{t}) K_{q}\|\frac{  N-1}{N}  \right) $. Due to $\widehat {\Upsilon }_{pq}   =\max \{	\Upsilon _{pq }^3, 	\Upsilon _{pq }^4\}$, $\| x(t_{k+1})\| \leq E_{k+1}^e$ holds for two cases.

	\textbf{\cref{case_5,case_6}: The system switches during this interval and a DoS attack occurs, i. e., $SY_{t_{k}}= 1$, $SY_{t_{k+1}}= 0$ and  $ACK_{t_k} = 0$.}
	
	From (\ref{equ_clo_asy_2_ori}), it is obvious that 
$		\| x(t_{k+1})\| \leq  \| \tilde {A}_{pq}^d(\overline{t})\|  \| x(t_k)\| \leq {\Upsilon}_{\sigma(t_k)}^4 E_k^e$
	where $ {\Upsilon}_{\sigma(t_k)}^4 = \max_{\overline{t} \in [0,\tau_s]} \| \tilde{A}_{pq}^d(\overline{t})\|$. Due to $\tilde{\xi}_{p} \geq 1$, we have  $\| x(t_{k+1})\| \leq E_{k+1}^e$ in the two cases. 
	
	\textbf{\cref{case_7,case_8}: The system does not switch during this interval and a DoS attack occurs, i. e., $SY_{t_{k}}= 0$, $SY_{t_{k+1}}= 0$ and  $ACK_{t_k} = 0$.}
	
The analysis is the same as that for \caref{case_2}. The jump at the  first instant of DoS attack can be efficiently tackled at the switching interval.  

The analysis on uniformity of quantization parameters can be directly obtained and it is omitted here. The proof is complete.
\end{proof}

\color{black}
\begin{theorem}[Stability]\label{thm1_zero}
	For  the switched system \eqref{equ_state}  using \textbf{Strategy 3}, if the quantizer updates its parameters according to \eqref{equ_Eke_zero} and \eqref{equ_Ekd_zero}, the  quantization level $N \geq \max\limits_{p \in \mathcal{M}} \left \{ \frac{\hat \rho_p \| \hat B_p^dK_p\|}{1-\hat \lambda_p}\right\}$ is odd,  DoS attack and switching signal satisfy
	\begin{equation}\label{equ_con1_zero}
		\begin{aligned}
			\frac{1}{\tau_d} \log \frac{\tilde{\Upsilon}}{\tilde{\Upsilon}^2} + \frac{1}{\tau_D} \log \frac{\tilde{\Upsilon}^1\overline{\xi}}{\tilde{\Upsilon}^2} + \big(\frac{1}{{T\tau_s}}+\frac{1}{\tau_D}\big)  \log \frac{\overline{\eta}}{\tilde{\Upsilon}^2}+ \frac{\log \tilde{\Upsilon}^2}{\tau_s}  \leq 0
		\end{aligned}
	\end{equation}
	where $ \tilde{\Upsilon} = \max\limits_{p, q\in \mathcal{M}} \Big\{ \widehat {\Upsilon }_{pq} \tilde{\xi}_{p},\frac{\Upsilon_{pq}^3 \Upsilon_p^1}{\Upsilon_{p}^2}\Big\} $, $\tilde{\Upsilon}^i = \max\limits_{p\in\mathcal{M}}\{{\Upsilon}^i_p\} ~(i\in \{1,2\})$, $\overline{\xi}  = \max_{p \in \mathcal{M}} \tilde{\xi}_{p}$, $\overline{\eta}  = \max_{p\in \mathcal{M}} \hat{\eta}_{p}$ and $N_{\max}$ is the maximal number of asynchronous interval,	then the switched system \eqref{equ_state} is stable under DoS attack.
\end{theorem}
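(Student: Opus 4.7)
\textit{Proof sketch.} The argument mirrors the two-step strategy used for \Cref{thm1_ori}: first I would show that the encoder parameter $E_k^e$ converges to zero under conditions \eqref{equ_con1_zero} and the quantization level bound, and then invoke $\|x(t_k)\|\leq E_k^e$ from \Cref{lemma5} together with the uniformity $E_k^d = E_k^e$ on attack-free instants to conclude asymptotic stability of the closed-loop trajectory. The quantization level condition $N\geq \max_p \hat{\rho}_p\|\hat{B}_p^d K_p\|/(1-\hat{\lambda}_p)$ guarantees that the baseline multiplier $\Upsilon_p^2 = \hat{\lambda}_p + \hat{\rho}_p\|\hat{B}_p^d K_p\|/N$ is strictly less than one, which is the necessary contraction rate during attack-free synchronous operation.

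The plan is to partition the sampling instants in $[t_0,t_k)$ into disjoint regimes according to the case-split of \eqref{equ_Eke_zero}: attack-free synchronous samples with baseline multiplier $\tilde{\Upsilon}^2$, attack samples contributing a baseline multiplier $\overline{\eta}$, switching-induced asynchronous samples with multiplier bounded by $\tilde{\Upsilon}$, plus boundary samples at attack onset (one-step jump of size $\overline{\xi}$ from \textbf{Case~\ref{case_2}-a} versus the per-step $\hat{\eta}_p$) and at attack termination (one-step jump of ratio $\tilde{\Upsilon}^1/\tilde{\Upsilon}^2$ embodying the extra $\hat{\rho}_p$ that appears at the first sample after the attack in \textbf{Case~\ref{case_1}-a}). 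Iterating \eqref{equ_Eke_zero} across these regimes produces
\begin{equation*}
    E_k^e \leq a\,(\tilde{\Upsilon}^2)^{M_s/\tau_s}\!\left(\tfrac{\tilde{\Upsilon}}{\tilde{\Upsilon}^2}\right)^{\!M_{\mathrm{sw}}}\!\left(\tfrac{\tilde{\Upsilon}^1\overline{\xi}}{\tilde{\Upsilon}^2}\right)^{\!M_{\mathrm{on}}}\!\left(\tfrac{\overline{\eta}}{\tilde{\Upsilon}^2}\right)^{\!M_{\mathrm{att}}} E_0^e,
\end{equation*}
where $M_s = |\overline{\Theta}(t_k,t_0)|$, $M_{\mathrm{att}} = |\overline{\Xi}(t_k,t_0)|/\tau_s$, $M_{\mathrm{sw}}$ is the number of switches, $M_{\mathrm{on}}$ is the number of attack onsets, and $a$ collects the $n_0$- and $\overline{\kappa}$-dependent constants from \Cref{DoSF,DoSD}.

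Using the average dwell-time bound $M_{\mathrm{sw}}\leq (t_k-t_0)/\tau_d$, the DoS frequency bound $M_{\mathrm{on}}\leq n_0 + (t_k-t_0)/\tau_D$, and the DoS duration bound $M_{\mathrm{att}}\leq \overline{\kappa}/\tau_s + (t_k-t_0)/(T\tau_s) + (t_k-t_0)/\tau_D$ from the equivalent DoS expression in Section \ref{sec_pf}, I would take logarithms of the displayed bound, divide by $t_k-t_0$, and verify that the resulting per-unit-time log-growth rate is exactly the left-hand side of \eqref{equ_con1_zero}. Hence \eqref{equ_con1_zero} is equivalent to $E_k^e\leq a\,\beta^{(t_k-t_0)/\tau_s}$ for some $\beta<1$, giving $E_k^e\to 0$.

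The main obstacle is the careful bookkeeping of the one-step jumps at regime boundaries. The factors $\widehat{\Upsilon}_{pq}\tilde{\xi}_p$ from \textbf{Cases~\ref{case_4},\ref{case_5}} (a switch coinciding with an attack) and the $\hat{\rho}_p$-carrying factor $\Upsilon_p^1$ at the first post-attack synchronous sample must be attributed either to $M_{\mathrm{sw}}$ or to $M_{\mathrm{on}}$ and not double-counted, which is why the condition groups them through the ratios $\tilde{\Upsilon}/\tilde{\Upsilon}^2$ and $\tilde{\Upsilon}^1\overline{\xi}/\tilde{\Upsilon}^2$; overshoots produced when a switch happens precisely within a DoS interval are absorbed into $\tilde{\Upsilon} = \max\{\widehat{\Upsilon}_{pq}\tilde{\xi}_p,\; \Upsilon_{pq}^3\Upsilon_p^1/\Upsilon_p^2\}$. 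Once this accounting is made consistent and $E_k^e\to 0$ is obtained, the bound $\|x(t_k)\|\leq E_k^e$ of \Cref{lemma5} forces $x(t_k)\to 0$; inter-sample boundedness follows from the fact that each update in \eqref{equ_close_1_zero}--\eqref{equ_clo_asy_2_ori} has bounded matrix norm on an interval of length $\tau_s$, yielding convergence of the continuous trajectory and therefore stability.
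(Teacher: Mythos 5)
Your proposal follows essentially the same route as the paper's proof: identify the worst-case per-switch multiplier $\tilde{\Upsilon}$ by comparing a switch occurring inside a DoS interval with one occurring in an attack-free interval (where the asynchronous step is followed by the $\hat\rho_p$-carrying factor $\Upsilon_p^1$), form the multiplicative bound on $E_k^e$ over the regimes, substitute the dwell-time and DoS frequency/duration bounds, and conclude from $\|x(t_k)\|\le E_k^e$ in \Cref{lemma5}. The one slip is the baseline exponent in your displayed bound: since every sampling step (attacked or not) is normalized against $\tilde{\Upsilon}^2$ through the ratio factors, the baseline must be $(\tilde{\Upsilon}^2)^{(t_k-t_0)/\tau_s}$ rather than $(\tilde{\Upsilon}^2)^{M_s/\tau_s}$ with $M_s=|\overline{\Theta}(t_k,t_0)|$; as written your per-unit-time rate carries $\log\bigl(\overline{\eta}/(\tilde{\Upsilon}^2)^2\bigr)$ in the attack-duration term instead of $\log\bigl(\overline{\eta}/\tilde{\Upsilon}^2\bigr)$ and therefore does not reduce exactly to \eqref{equ_con1_zero}. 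With that exponent corrected, the computation reproduces the paper's bound $E_k^e\le c\,d^{\,t_k-t_0}E_0$ with $d\le 1$ verbatim.
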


\begin{proof}
	Since the effect of switching is just one sampling period for  attack and attack-free cases, the effect of each case should be discussed first. From the updating law of $E_k^e$, the increase of $E_k^e$ in \cref{case_4,case_5} is larger than \cref{case_7}.  We consider the switch occurring in  \cref{case_4,case_5} during DoS attack interval. Next, we compare \cref{case_3,case_5}. For \caref{case_3}, the switch will lead to \cref{case_1}-\textbf{a}.  Then the  increase of $E_k^e$ of switch for \caref{case_3} is $\frac{\Upsilon_{pq}^3 \Upsilon_p^1}{\Upsilon_{p}^2}$. Similarly, for \cref{case_5}, the  increase of $E_k^e$ is $\widehat {\Upsilon }_{pq} \tilde{\xi}_{p} $. Hence, the maximum  increase of $E_k^e$ caused by switching is $ \tilde{\Upsilon} = \max\limits_{p, q\in \mathcal{M}} \big\{ \widehat {\Upsilon }_{pq} \tilde{\xi}_{p},\frac{\Upsilon_{pq}^3 \Upsilon_p^1}{\Upsilon_{p}^2}\big\} $. 
	
	Similar to the proof of \Cref{lemma1}, the dynamics of $E_{k}^e$ obeys
	$	E_{k}^e \leq \tilde{\Upsilon} ^{N_{\sigma(t)}(t_k,t_0)} (\tilde{\Upsilon}^1)^{n(t_k,t_0)} (\overline{\xi} \overline{\eta})^{n(t_k,t_0)}  \overline{\eta}^{| \overline\Xi(t_k,t_0)| /\tau_s-n(t_k,t_0)} 
		 \cdot (\tilde{\Upsilon}^2)^{\frac{t_k-t_0}{\tau_s} - N_{\sigma(t)(t_k,t_0)}-n(t_k,t_0)-| \overline \Xi(t_k,t_0)| /\tau_s} 
		 = c d^{t_k-t_0}E_{0}$
	where $d = 	\left(\frac{\tilde{\Upsilon} }{\tilde{\Upsilon}^2}\right)^{\frac{1}{\tau_d}}\left(\frac{\tilde{\Upsilon}^1\overline{\xi} }{\tilde{\Upsilon}^2}\right)^{ \frac{1}{\tau_D}}  
	\left(\frac{\overline{\eta}}{\tilde{\Upsilon}^2}\right)^{\frac{1}{\tau_s}(\frac{1}{T}+\frac{\tau_s}{\tau_D})} (\tilde{\Upsilon}^2)^{\frac{1}{\tau_s} }$ and  $c = 	\left(\frac{\tilde{\Upsilon}^1\overline{\xi} }{\tilde{\Upsilon}^2}\right)^{n_0}\left(\frac{\overline{\eta}}{\tilde{\Upsilon}^2}\right)^{\overline{\kappa}}$. From (\ref{equ_con1_zero}), we have $d\leq 1$. Then, the quantization $E_k^e$ converges. Using the result in  \Cref{lemma5}, one has $\| x(t_k)\| \leq E_k^e$. Therefore $ \lim_{k\rightarrow \infty}\| x(t_k)\|  \rightarrow 0$.
\end{proof}

\color{black}
\begin{remark}
	Unlike the proof for \Cref{thm1_ori}, the dynamics of the state is not considered in the stability analysis here. It is evident that the dynamics with and without DoS attack are the same when switched systems are synchronous in \textbf{\textit{Strategy 2}}. Therefore, the sufficient conditions derived from the system state in \Cref{thm1_ori} are independent of DoS attack, which facilitates the calculation of the switching signal.
\end{remark}
\color{blue}
\begin{remark}
	As can be seen from \eqref{equ_con1}, \eqref{equ_con1_origin} and \eqref{equ_con1_zero}, DoS attacks directly influence the convergence behavior of the quantization range. More specific, the convergence rate during DoS attack-free periods and the divergence rate over DoS attack intervals are critical for closed-loop stability. On the one hand, this issue has close relationship with the quantization error during attack-free intervals, which depends on the quantization range and level. In contrast to \textbf{\textit{Strategies 2--3}}, the quantizer in \textbf{\textit{Strategy 1}} yields a smaller quantization error since the distance between a well-designed quantizer center and the actual state is shorter than that between the origin and the state. Moreover, the update law of quantizers reveal that the convergence of $E_k^e$ relies on $\frac{\Gamma}{N}$ in \textbf{\textit{Strategy 1}}, whereas it is governed by $ \lambda_p + \frac{\rho_p\|  B_p^d \| }{N}$  and $\hat \lambda_p + \frac{\hat\rho_p\| \hat B_p^d K_p\| }{N} $ in \textbf{\textit{Strategies 2--3}}, respectively. 
	Consequently,\textbf{\textit{ Strategy 1}} can relax DoS attack constraints via tuning $N$, while this flexibility is not explicitly transparent for \textbf{\textit{Strategies 2--3}}. 	However, this direct relationship does not hold for \textbf{\textit{Strategies 2--3}}. On the other hand,  \textbf{\textit{Strategy 1}} deploys a predictor-based controller that actively compensates for packet losses induced by DoS attacks. 
\end{remark}

\begin{remark}
	The quantization level $N$, the DoS attack constraints and the dwell time are tightly coupled. Raising $N$ reduces quantization errors, thereby accelerating the convergence of quantization parameters, permitting longer dwell time, and relaxing DoS attack bounds. 
	Under a fixed dwell time and attack intensity, a larger $N$ also yields a faster convergence rate. Conversely, while a larger dwell time enhances resilience to against severe DoS attacks, it lowers the switching frequency and may degrade steady-state performance.  To retain stability against high-frequency or protracted DoS attacks, both $N$ and dwell time must be enlarged. Nevertheless, longer attack intervals generally compromise overall performance.
\end{remark}
\subsection{ Strategy 4}
\color{black}
The above updating rules are relevant to the ACK signal, which is commonly implemented in applications such as TCP network. However, several networks that do not utilize the ACK technique, like UDP network, are also prevalent in engineering applications. In the absence of an ACK signal, the \textit{zooming-in} and \textit{zooming-out} schemes become inapplicable, which results in the state potentially falling outside the quantization range.
This issue has not been fully addressed yet. For instance, \cite{Liu2022} tackled it by setting the quantization level to an even number, ensuring that the quantized state cannot be zero. In their work, both sensor and control signals are set to zero if a DoS attack occurs, which serves a similar function as the ACK technique. Even in the absence of a DoS attack, the quantized signal does not vanish  when the state is zero.
\textcolor{blue}{When ACK signal is lacking, the status of DoS attack becomes unknown. Consequently, transmission failures become indistinguishable, leaving the encoder without the information needed to update the quantizer parameters and avert saturation.
	During such undetected attacks, any subsequent mode switch risks desynchronizing the decoder and controller from the true subsystem mode. Without a reliable indicator of a DoS attack cannot be detected, the duration of the asynchronous interval remains unknown. }
In this paper, we propose a novel quantizer with an odd quantization level for switched systems operating under DoS attacks and without ACK technique, which is inspired by the idea of time-triggered and event-triggered mechanisms.

As pointed out in \Cref{coro1}, the switching signal can be designed to reduce the asynchronous behavior. Then a more practical assumption on switching signal is proposed. 
\begin{assumption} [Switching instants constraint]
	The switch only occurs at the sampling instants. 
\end{assumption}

Note that the DoS attack constraints based on frequency and duration are not suitable for the setting since the modeling method of DoS attack in \textbf{\textit{Strategies 1--3}} has the average meaning.
It is fickle since the minimum attack sleeping duration could be less than the average value.
Once the status of DoS attack is unknown, the state may be outside the quantization range. 
A somewhat stronger assumption is proposed for DoS attack, which is also common \cite{DoS_state_switching_2024}. 
\begin{assumption}[Intermittent DoS attack\cite{DoS_state_switching_2024}]\label{ass_attacknum}
	The number of DoS attack sleeping periods is not less than $n_{\min}$ and the number of DoS attack periods is less than $n_{\max}$.
\end{assumption}
Set $n_0=\kappa = 0$ in \eqref{equ_dosF} and \eqref{equ_dosD}, we have $n(t_k,t_0)\leq \frac{t_k-t_0}{n_{\min}+n_{\max}}$ and $|\Xi(t_k,t_0) | \leq \frac{t_k-t_0}{n_{\min}+n_{\max}}n_{\max}$.
In what follows, two approaches are proposed to update system state, which are inspired by time-triggering and  event-triggering  approaches, respectively. 

\subsubsection{Time-triggering (TT) approach }\label{sectime}
Similar to the classification in Table \ref{tab1}, the update law of the quantization parameter without an ACK signal is divided into four scenarios. Before introducing these scenarios, we define ``virtue DoS attack" and ``virtue switch." A virtue DoS attack is periodic, characterized by intervals $n_{\min} + n_{\max}$. The intervals $[\mathcal{T}n, \mathcal{T}n + n_{\min} \tau_s)$ and $[\mathcal{T}n + n_{\min} \tau_s, (n + 1) \mathcal{T})$ (where $\mathcal{T} = (n_{\min} + n_{\max}) \tau_s$) denote the sleeping interval and the attacking interval of the virtue DoS attack, respectively. Similarly, we define the virtue switching signal as periodic with a period $\tau_d$.
\textbf{Scenario 1} refers to the first instant in virtue attack-free cases. \textbf{Scenario 2} pertains to the other instants in virtue attack-free cases. \textbf{Scenarios 3 and 4} represent the first and subsequent instants during the virtue attack interval.

\begin{theorem}\label{thm_time}
	For the switched system \eqref{equ_state}  with \textbf{\textit{Strategy 4}} and the quantizer updates parameters according to
	
	{\scriptsize \vspace{-0.5em}	\begin{equation}\label{equ_Eke_ack}	
			\begin{aligned}
				&	 E_{k+1} = 
				\begin{cases}
					\varphi_1 E_k &\textbf{Scenario ~1}\\
					\varphi_2  E_k  &\textbf{Scenario ~2}\\
					\varphi_3 E_k & \textbf{Scenario ~3}\\
					\varphi_4 E_k & \textbf{Scenario ~4}\\
				\end{cases}
			\end{aligned}
	\end{equation}}
	and $E_k^+ = \varphi_5 E_k^-$ at the virtue switching instants, 
	where  ${\varphi}_1=\max\limits_ {p \in \mathcal{M}}  \tilde \rho_p \hat \lambda_p + \frac{\hat \rho_p\| 
		\hat {B}_p^dK_p\| }{N} $, $\varphi_2 = \max\limits_ {p \in \mathcal{M}}   \hat \lambda_p + \frac{\hat \rho_p\| 
		\hat {B}_p^dK_p\| }{N}  $, $\varphi_3 = \max\limits_{p\in\mathcal{M}}\tilde{\xi}_{p} \hat{\eta}_p $, $\varphi_4 = \max\limits_{p\in\mathcal{M}}\hat{\eta}_p $, $\varphi_5 = \max\limits_{p\in\mathcal{M}}\{ \hat{\xi}_{p}, \big(\tilde \rho_p \hat \lambda_p + \frac{\hat \rho_p\| 
		\hat {B}_p^dK_p\| }{N} \big)/ \big(\hat \lambda_p + \frac{\hat \rho_p\| 
		\hat {B}_p^dK_p\| }{N}\big),1 \}$, $\tilde{\rho}_p= \max\limits_ {p \in \mathcal{M}} \{\hat{\rho}_p,1\}$, and $\tilde{\eta}_p= \max_ {p \in \mathcal{M}} \{\hat{\eta}_p,1\}$.
	And the DoS attack constraints satisfy
	\begin{equation}\label{equ_DoS}
		\log (\varphi_1 \varphi_3)+ (n_{\min}-1)\log \varphi_2 + (n_{\max}-1) \log \varphi_4 + \frac{1}{\tau_d} \log \varphi_5<0
	\end{equation}
	then 	$ 	\|x(t_k)\| \leq E_k,~ k\geq 0	$ and the switched system is stable.
\end{theorem}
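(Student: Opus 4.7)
The plan is to follow the same two-stage structure used for \Cref{thm1_ori,thm1_zero}: first establish by induction on $k$ the pointwise bound $\|x(t_k)\|\leq E_k$, and then show that the DoS condition \eqref{equ_DoS} forces $E_k\to 0$, whence stability follows. The new wrinkle compared with \textbf{\textit{Strategies 1--3}} is that without ACK the encoder cannot condition its update on the true attack/switching history, so the updates in \eqref{equ_Eke_ack} must instead be driven by the deterministic ``virtue'' schedule. The whole proof therefore rests on showing that the four $\varphi_i$ are chosen large enough to dominate the actual closed-loop growth simultaneously for every admissible realization of DoS attacks and switches consistent with \Cref{ass_attacknum} and the switching-at-sampling assumption.

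For the state bound, I would carry out a case analysis per sampling instant $t_{k+1}$ based on (i) which of the four virtue scenarios $t_{k+1}$ lies in and (ii) whether the actual network is under attack and whether a real switch occurred in $[t_k,t_{k+1})$. In each case I would reuse the passive closed-loop recursions from Section~\ref{sec_passive} (equations \eqref{equ_close_1_zero}--\eqref{equ_clo_asy_2_ori}) together with $\|e(t_k^+)\|\le E_k/N$ in attack-free cases and $\|e(t_k^+)\|=\|e(t_k^-)\|$ under attack, exactly as in \Cref{lemma5}. The constants $\tilde\rho_p=\max\{\hat\rho_p,1\}$ in $\varphi_1$ and $\tilde\xi_p=\max\{\hat\xi_p,1\}$ in $\varphi_3$ absorb the single-step penalty that arises at the \emph{first} instant following a real attack-to-free or free-to-attack transition, so that $\varphi_1,\varphi_3$ uniformly dominate the growth whether or not the virtue boundary coincides with a real one. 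Similarly, $\varphi_5$ is built as the maximum of $\hat\xi_p$ (accounting for the $\rho$/$\xi$ jump at a real switch even if the virtue switch is misaligned) and the ratio $\varphi_1/\varphi_2$ (so that a delayed-by-one-period virtue switch still dominates the real dynamics), together with $1$ as a safeguard. Verifying that these choices absorb \emph{every} misalignment between virtue and real schedules is the main bookkeeping task.

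For stability, I would compute the net multiplicative change of $E_k$ over one complete virtue period of $\mathcal T=(n_{\min}+n_{\max})\tau_s$: within this period there is exactly one ``first-attack-free'' instant (factor $\varphi_1$), $n_{\min}-1$ subsequent attack-free instants (factor $\varphi_2$ each), one ``first-attack'' instant (factor $\varphi_3$), and $n_{\max}-1$ subsequent attack instants (factor $\varphi_4$ each), giving the product $\varphi_1\varphi_2^{n_{\min}-1}\varphi_3\varphi_4^{n_{\max}-1}$. Virtue switches contribute a factor $\varphi_5$ every $\tau_d$ time units. Taking logarithms and dividing by the appropriate time scale, condition \eqref{equ_DoS} states precisely that the combined log-growth rate is strictly negative, so a telescoping argument identical to the one in the proof of \Cref{thm1_zero} gives $E_k\le c\,d^{t_k-t_0}E_0$ with $d<1$, hence $E_k\to 0$ and, by the already-established state bound, $\|x(t_k)\|\to 0$.

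The hard part will be the first stage: convincing oneself that the virtue-based update is genuinely an upper bound for \emph{all} attack patterns in \Cref{ass_attacknum} and not merely for the patterns that are synchronised with the virtue schedule. The delicate sub-case is when a real attack begins strictly inside a virtue attack-free interval (or, symmetrically, ends inside a virtue attack interval): one must check that the residual jumps caused by the $\hat\rho_p,\hat\xi_p$ factors at the real transition are still absorbed by the conservative choices of $\varphi_1$ and $\varphi_5$. Handling the simultaneous misalignment of the switching schedule with both the virtue and real attack schedules is where the explicit ratio inside $\varphi_5$ plays its role, and this is the step I expect to require the most care.
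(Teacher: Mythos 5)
Your second stage (multiplying the per-instant factors over one virtue period $\mathcal{T}=(n_{\min}+n_{\max})\tau_s$ and reading \eqref{equ_DoS} as a strictly negative log-growth rate, so that $E_k\to 0$ and stability follows from $\|x(t_k)\|\le E_k$) is exactly the paper's argument. The gap is in the first stage. The paper does \emph{not} prove $\|x(t_k)\|\le E_k$ by a per-step induction against the virtue schedule; it introduces an intermediate sequence $\hat E_k$, namely the quantizer that the ACK-aware update of \Cref{lemma5} (with its cases coarsened to the constants $\varphi_1,\dots,\varphi_5$) would generate along the \emph{actual} attack/switch schedule. \Cref{lemma5} yields $\|x(t_k)\|\le\hat E_k$, and the theorem's bound then follows from the cumulative comparison $E_k\ge\hat E_k$, which holds because the virtue schedule is a worst case under Assumption \ref{ass_attacknum}: virtue attacks are at least as frequent and as long as real ones, virtue switches at least as frequent, and $\varphi_1\ge\varphi_2$, $\varphi_3\ge\varphi_4$, $\varphi_5\ge1$.

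Your direct induction, by contrast, breaks at any instant where the virtue scenario underestimates the real one-step growth. For example, when a real attack is active, or has just ended, inside a virtue attack-free window, the update \eqref{equ_Eke_ack} multiplies $E_k$ only by $\varphi_2<1$, while the true state can grow by a factor of order $\varphi_3$ (open-loop growth) or $\varphi_1$ (the post-attack $\hat\rho_p$ penalty); the hypothesis $\|x(t_k)\|\le E_k$ alone then cannot deliver $\|x(t_{k+1})\|\le\varphi_2E_k=E_{k+1}$. Your proposed remedy --- that the $\tilde\rho_p$ and $\tilde\xi_p$ inflations in $\varphi_1,\varphi_3$ ``absorb'' the transition penalty regardless of alignment --- does not repair this, because $\varphi_1$ and $\varphi_3$ are applied only at virtue-scheduled instants, not at the real transition instants where the penalty actually occurs. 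The misalignment must be absorbed at the level of cumulative products over whole virtue periods, i.e.\ through the intermediate ACK-aware sequence $\hat E_k$ and the comparison $E_k\ge\hat E_k$; this is the missing device, and without it the claimed pointwise step $\|x(t_{k+1})\|\le E_{k+1}$ is not justified.
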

\begin{proof}
	
	Define the quantization parameter for systems with ACK technique as $\hat{E}_k$.
	In terms of the update law in  \Cref{lemma5}, we have 
	
	{\scriptsize \vspace{-0.5em}	 \begin{equation}\label{equ_Eke_ACK2}
			\hat{E}_{k+1} = 	
			\begin{cases}
				\varphi_1 \hat E_k &\text{\cref{case_1}-\textbf{a}}\\
				\varphi_2  \hat E_k  & \text{\cref{case_1}-\textbf{b}}\\
				\varphi_3 \hat E_k & \text{\cref{case_2}-\textbf{a}}\\
				\varphi_4  \hat E_k  & \text{\cref{case_2}-\textbf{b}}\\
			\end{cases}
	\end{equation}}
	and $\hat{E}_k^+ = \varphi_5 \hat{E}_k^-$ at switching instants.
	From the definition of $\varphi_5$, the jump at switching instants would be bounded. With the help of  \Cref{lemma5}, we have $\| x(t_k) \| \leq \hat{E}_k$. 
	From the definition of virtue DoS attack and Assumption \ref{ass_attacknum}, the frequency of virtue DoS attack and switch is higher than the actual ones. Moreover, the duration of virtue DoS attack is longer than the actual one. Based on $\varphi_1\geq \varphi_2$ and $\varphi_5 \geq 1$, we have $E_k \geq \hat E_k$.
	According to (\ref{equ_DoS}), one has 
	$\lim_{k\rightarrow \infty} E_k \rightarrow 0$.
	Then using the fact $\|  x(t_k) \| \leq E_k$, we know $\lim_{k\rightarrow \infty} \| x(t_k)\| \rightarrow 0$.
	This completes the proof.
\end{proof}

\subsubsection{Event-triggered (ET) approach}
In the subsection \ref{sectime}, a conservative result is derived to ensure that the state remains within the quantization range. The worst-case scenario is designed for periodic DoS attack and switching. The \textit{zooming-out} process is not dependent on the real-time state. 
In what follows, we give a result inspired by event-triggered mechanism: once the state exceeds the quantization range, the \textit{zooming-out} will be triggered. To simplify the analysis, we consider the situation that the switching signal is available for the decoder. The setting where the switching signal can be available instantly by the decoder is adopted. Such a setting is reasonable. 
In some engineering applications, the switching signal is transmitted via a local network that is secure from attacks, or it is pre-designed for the controller.

We define the number of periods from the latest switching (resp. \textit{zooming-out}) instant to the current instant as \( n_s \) (resp. \( n_z \)). Notably, the \textit{zooming-out} process may occur during DoS attack intervals, and the decoder may not be aware of it. A variable \( Flag \) is provided to record the status of \textit{zooming-out}: if \textit{zooming-out} occurs, then \( Flag = 1 \). Furthermore, the value of \( n_z \) is transmitted simultaneously to ensure the uniformity of the quantizer.

\begin{theorem} \label{thm1_ET}
	For the switched system \eqref{equ_state}  with \textbf{\textit{Strategy 4}} and the quantizer updates according to
	{\scriptsize 	\begin{equation}\label{equ_Eke_ack_et}	
			\begin{aligned}
				&	 E_{k+1}^e = 
				\begin{cases}
					\varphi^1 E_k^e   &  k=0 \text{ and switching instant}\\
					\varphi^1 E_k ^e+ \phi {E}_{ k-n_{\max}} ^e &\text{if } \| x(t_{k+1})\| >  \varphi_2  E_k \\
					\varphi^2 E_k^e  &  \text{otherwise} 
				\end{cases}
			\end{aligned}
		\end{equation}
		\begin{equation}\label{equ_Ekd_ack_et}	
			\begin{aligned}
				&	 E_{k+1}^d = 
				\begin{cases}
					\varphi^1 E_k^d   &  k=0 \text{ and switching instant}\\
					\varPhi(n_z,k) &Flag =1 \text{ and } n_s \leq n_z \\
					\varphi^1 E_k ^d+ \phi {E}_{ k-n_{\max}} ^d &Flag =1 \text{ and } n_s > n_z \\
					\varphi^2 E_k^d  &  \text{otherwise} 
				\end{cases}
			\end{aligned}
	\end{equation}}
	where  ${\varphi}^1= \max\limits_{p\in \mathcal{M}}\tilde \rho_p\hat  \lambda_p + \frac{\hat \rho_p\| 	\hat {B}_p^dK_p\| }{N} $, 
	$\varphi^2 =\max\limits_{p\in \mathcal{M}}  \hat \lambda_p + \frac{\hat \rho_p\| 
		\hat {B}_p^dK_p\| }{N}  $, 	$\varPhi(n_z,k) = (\varphi^2 )^{n_z-1}(	\varphi^1)^2 E_{k-n_z} ^d+\varphi^1(\varphi^2 )^{n_z-1}\phi {E}_{ k-n_z-n_{\max}} ^d$,  $\phi = 	 \max \limits_{\substack{p,q\in \mathcal{M}, \ell\in [0,n_{\max}], \overline{k}\in [0,\ell-1] }} \big( \| \varpi^x_{pq}(\ell,\overline{k})\| + \frac{ \overline{ \varpi}^e_{pq}(i,\overline{k})} {N}  \big)$ with $	\varpi^x_{pq}(\ell,\overline{k}) = (A_p^{cl})^{\ell-\overline{k}}( A_q^{cl} )^{\overline{k}}-  (\hat A_p^{d})^{\ell-\overline{k}}( \hat A_q^{d} )^{\overline{k}}$ and $ \overline{ \varpi}^e_{pq}(i,\overline{k}) = \| \hat B_p^d K_p\|  + \sum\limits_{i=0}^{\overline{k}}\|(A_p^{cl})^{\ell-\overline{k}}( A_q^{cl} )^{\overline{k}-i}\hat B_q^dK_q\| 
	+  \sum_{i=\overline{k}+1}^{\ell-1}\|(A_p^{cl})^{\ell-i}B_p^d K_p\|$.
	If there exist a odd quantization level $N \geq \max\limits_{p \in \mathcal{M}} \big \{ \frac{\hat \rho_p \| \hat B_p^dK_p\|}{1-\hat \lambda_p}\big\}$, constants $n_{\min}$,  $n_{\max}$ and $\tau_d$ such that  $n_{\max} < n_{\min}$ and 
	\begin{equation}\label{equ_DoS2}
		\begin{aligned}
			~&(\varphi^1)^{\frac{\mathcal{T}}{\tau_d}}\left ( (\varphi^2)^{\mathcal{T}(\frac{1}{\tau_s}-\frac{1}{\tau_d})-1}\varphi^1+ (\varphi^2)^{n_{\min}-1-\frac{\mathcal{T}}{\tau_d}}\phi\right) \leq 1 \\
		\end{aligned}
	\end{equation} with $\mathcal{T}= (n_{\min}+n_{\max})\tau_s$, 
	then 	$ 	\|x(t_k)\| \leq E_k^e,~ k\geq 0	$ and the system is stable.
\end{theorem}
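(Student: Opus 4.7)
The plan is to establish the theorem in three stages: (i) prove the state-in-box inclusion $\|x(t_k)\|\le E_k^e$ by exploiting the event-triggered rule at the encoder; (ii) verify that the decoder update \eqref{equ_Ekd_ack_et} reproduces the same $E_k^d=E_k^e$ at every sampling instant using the transmitted $Flag$ and $n_z$; and (iii) derive the contraction of $E_k^e$ over one virtual attack--switch period from \eqref{equ_DoS2}. Since the switching signal is assumed available to the decoder and switches only occur at sampling instants, the asynchronous behavior analyzed in \Cref{lemma5} disappears, so only \textbf{\cref{case_1}} and \textbf{\cref{case_2}} patterns need be tracked.

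For step (i), I would induct on $k$. In the absence of triggering, the state evolves by the synchronous attack-free recursion \eqref{equ_close_1_zero}, so the same bookkeeping as in \cref{case_1}-\textbf{b} of \Cref{lemma5} yields $\|x(t_{k+1})\|\le \varphi^2 E_k^e$, which matches the ``otherwise'' branch. When the triggering test $\|x(t_{k+1})\|>\varphi_2 E_k$ fires, this signals that a DoS attack has interrupted the update of $e(t)$ over the previous window, so the error bound $\|e(t_{k+1})\|$ must be re-derived from the last successful quantization at time $t_{k-n_{\max}}$. Substituting this propagated error into \eqref{equ_close_1_zero} and bounding $\|(A_p^{cl})^{\ell-\bar k}(A_q^{cl})^{\bar k}\|$ together with the residual forcing term gives an upper bound of the form $\varphi^1 E_k^e+\phi E_{k-n_{\max}}^e$ with $\phi$ as defined; this is precisely the zoom-out branch, so $\|x(t_{k+1})\|\le E_{k+1}^e$ holds by construction. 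The definitions of $\varpi^x_{pq}$ and $\overline{\varpi}^e_{pq}$ come from expanding the difference between the actual closed-loop matrix power and the predicted one and collecting the quantization-error contributions at each intermediate sample.

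For step (ii), the key subtlety is that the encoder learns of the trigger instantly (it tests the true state), whereas the decoder, lacking ACK, can only register it once the first post-attack packet arrives carrying $Flag=1$ and $n_z$. If $n_s\le n_z$, the most recent switch lies inside the attack window, so the decoder retroactively recomputes $E_{k+1}^d$ by composing the $n_z$ zoom-out factors of type $\varphi^2$ together with the two $\varphi^1$ corrections (one for the triggering instant, one for the current arrival) plus the $\phi$ residual, giving exactly $\varPhi(n_z,k)$. If $n_s>n_z$, no switch interferes and the decoder simply replays the single zoom-out $\varphi^1 E_k^d+\phi E_{k-n_{\max}}^d$. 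Matching term-by-term with \eqref{equ_Eke_ack_et} and invoking the inductive hypothesis $E_{k-j}^d=E_{k-j}^e$ for all prior $j$ confirms $E_{k+1}^d=E_{k+1}^e$ at every attack-free sampling instant.

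For step (iii), I would analyze one worst-case virtual period of length $\mathcal{T}=(n_{\min}+n_{\max})\tau_s$ containing at most $\lceil \mathcal{T}/\tau_d\rceil$ switches. Over this window the encoder applies the $\varphi^1$ branch at most once (at the end of the attack), the $\varphi^2$ branch during the remaining attack-free instants, and the $\phi$ correction once per attack; additional $\varphi^1$ factors accrue at switching instants. Collecting exponents and taking the worst allocation yields the period growth factor $(\varphi^1)^{\mathcal{T}/\tau_d}\bigl((\varphi^2)^{\mathcal{T}(1/\tau_s-1/\tau_d)-1}\varphi^1+(\varphi^2)^{n_{\min}-1-\mathcal{T}/\tau_d}\phi\bigr)$, whose boundedness by $1$ is precisely \eqref{equ_DoS2}. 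Iterating this contraction drives $E_k^e\to 0$, and combined with step (i) this forces $\|x(t_k)\|\to 0$. The main obstacle I anticipate is step (ii): certifying decoder uniformity without an ACK requires that the bookkeeping of $Flag$ and $n_z$ at the decoder reproduces exactly the encoder's triggered branch even when a switch is interleaved, and this compels the condition $n_{\max}<n_{\min}$ to guarantee that at most one triggering event per attack cycle needs to be reconciled.
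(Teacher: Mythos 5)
Your three-stage plan mirrors the paper's own proof almost exactly: your step (i) is the paper's Part~I (the inclusion $\|x(t_k)\|\le E_k^e$ obtained by decomposing the attacked trajectory into the attack-free trajectory plus the divergence term $\mathcal{E}(\ell,t_k)$, whose norm produces precisely the $\varpi^x_{pq}$ and $\overline{\varpi}^e_{pq}$ expressions and hence $\phi$), your step (iii) is the paper's Part~II (worst-case allocation over one virtual period $\mathcal{T}=(n_{\min}+n_{\max})\tau_s$ yielding the contraction factor in \eqref{equ_DoS2}), and your step (ii) corresponds to the paper's closing remark on encoder--decoder uniformity via the transmitted $Flag$ and $n_z$, which you actually treat in more detail than the paper does.

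One load-bearing claim in your step (iii) is asserted rather than proved, and your stated justification for it is not the right one. You allocate ``the $\phi$ correction once per attack'' in the worst case, and you attribute the at-most-one-trigger-per-attack-cycle property to the hypothesis $n_{\max}<n_{\min}$ (in the context of decoder reconciliation). The paper instead establishes this property dynamically: after the \textit{zooming-out} fires at $t_{k+1}$, it verifies by direct computation that
$\| x(t_{k+2})\| \le \phi E_{k-n_{\max}+1}^e + \varphi^1\varphi^2 E_k^e = \varphi^2\bigl(\phi E_{k-n_{\max}}^e+\varphi^1 E_k^e\bigr)=\varphi^2 E_{k+1}^e$,
so the trigger condition $\|x(t_{k+2})\|>\varphi_2 E_{k+1}$ cannot fire again at the next instant, and by iteration not during the remainder of that attack interval (with separate checks for the trigger occurring at the end of the attack or in the attack-free interval). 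Without this argument the worst-case growth factor in \eqref{equ_DoS2} could in principle contain several $\phi$ terms per period and the stated condition would not suffice; $n_{\max}<n_{\min}$ alone does not rule this out. You should add this post-trigger non-retriggering computation to close the argument; everything else in your sketch matches the paper's route.
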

\begin{proof}
	The proof is divided into two parts. Part I focuses on that  the \textit{zooming-out} of quantization parameters can capture the state within one step, that is, the state is always in the quantization range.  For a successive DoS attack interval and no-attack interval, zooming out occurs at most once.
	In Part II, DoS attack and switching constraint are derived in the worst case. 
	
	Part I: Suppose there is no DoS attack at the  sampling instant $t_k$  and the $n$-th DoS attack occurs at the following $\chi$ instants, i.e., $t_k\leq h_n < t_{k+1} < h_n+\tau_n \leq t_{k+\chi+1}$.
	The system dynamics (\ref{equ_close_1_zero}) for attack-free case can be rewritten as
	$	\tilde{x}(t_{k+\ell}) = \prod_{i=0}^{\ell-1} A^{cl}(i) \tilde{x}(t_k) + \sum_{i=0}^{\ell-2}  \prod_{j=i+1}^{\ell-1} A(j) B(i) K(i) \tilde{e}(t_{k+i}) + B(\ell-1)K(\ell-1) \tilde{e}(t_{k+\ell-1}) , \forall \ell \leq \chi \leq n_{\max}$
	where $A^{cl}(j) = A^{cl}_{\sigma(t_{k+j})} $, $B(j) = \hat B_{\sigma(t_{k+j})} ^d$ and $K(j) = K_{\sigma(t_{k+j})} $.
	$\tilde{x}(t)$ is the auxiliary state for DoS attack free case with $\tilde{x}(t_k) = x(t_k)$ at DoS attack-free instant $t_k$. $\tilde{e}(t)$ is the corresponding auxiliary error. 
	For DoS attack case, 
	$	x(t_{k+\ell}) =\prod_{i=0}^{\ell-1} A^{d}(i)x(t_{k}) $
	where $A^{d}(j) =  \hat A^{d}_{\sigma(t_{k+j})} $.

	According to the dynamics for system with/without DoS attack, the maximum error caused by consecutive $\ell$  DoS attack is
	$	\mathcal{E} (\ell,t_k)
	= \prod_{i=0}^{\ell-1} A^{cl}(i) {x}(t_k)+ \sum_{i=0}^{\ell-2}  \prod_{j=i+1}^{\ell-1} A(j) B(i) K(i) \tilde e(t_{k+i})  + B(\ell-1)K(\ell-1) \tilde e(t_{k+\ell-1}) -\prod_{i=0}^{\ell-1} A^{d}(i)x(t_{k})
	=:\varpi^x(\ell){x}(t_k)+ \sum_{i=0}^{\ell-1} \varpi^e(i) \tilde e(t_{k+i})$.
	Then, the norm of error admits
	$	\| 	\mathcal{E} (\ell,t_k)\|  
	\leq  \| \varpi^x(\ell)\| \| x(t_k)\|  +\sum_{i=0}^{\ell-1}   \left\| \varpi^e(i) \right\| \| \tilde e(t_{k+i-1-j})\| 
	\leq   \| \varpi^x(\ell)\| E_k^e +\sum_{i=0}^{\ell-1}   \left\| \varpi^e(i) \right\|  \hat E^{\max}_{k,i} /N$
	where $\hat E^{\max}_{k,i} = \max_{j \in [k, k+i-1]} \{ \hat E_{j}\}$ and $\hat E_{j}$ is the error bound of $\tilde e(t_j)$ for DoS attack free case. Moreover, we have $\hat E_k \geq \hat E_i,~\forall i\in [k,k+i-1] $ since $\hat E_k$ is decreasing during this interval. Therefore, one has $\hat E^{\max}_{k,i} =   \hat E_k = E_k^e$ as there is without attack at $t_k $. 
	Then the norm of error becomes
	$	\| 	\mathcal{E} (\ell,t_k)\|   \leq \| \varpi^x(\ell)\| E_k^e+\sum_{i=0}^{\ell-1}   \left\| \varpi^e(i) \right\| E_k^e /N $.

	The inequality $\tau_d > n_{\min}$ is fulfilled since the dwell time is large enough to compensate for the effect of switch and DoS attack.  Hence, during a DoS attack interval, there are at most two modes. Then $\varpi^x(\ell) $ and $\varpi^e(\ell) $ satisfy
	$\|	\varpi^x_{pq}(\ell,\overline{k}) \|= \| (A_p^{cl})^{\ell-\overline{k}}( A_q^{cl} )^{\overline{k}}-  (\hat A_p^{d})^{\ell-\overline{k}}( \hat A_q^{d} )^{\overline{k}}\|$
	and
	$\sum_{i=0}^{\ell-1}   \left\| \varpi^e_{pq}(i,\overline{k}) \right\| \leq  \| \hat B_p^d K_p\|  + \sum\limits_{i=0}^{\overline{k}}\|(A_p^{cl})^{\ell-\overline{k}}( A_q^{cl} )^{\overline{k}-i}\hat B_q^dK_q\| 
	+  \sum\limits_{i=\overline{k}+1}^{\ell-1}\|(A_p^{cl})^{\ell-i}B_p^d K_p\|$.
	Define
	$\phi = 	 \max \limits_{\substack{p,q\in \mathcal{M}, \ell\in [0,n_{\max}], \overline{k}\in [0,\ell-1] }} \big( \| \varpi^x_{pq}(\ell,\overline{k})\| +\sum\limits_{i=0}^{\ell-1}  \frac{ \left\| \varpi^e_{pq}(i,\overline{k}) \right\|}{N}  \big).$
	The \textit{zooming-out process} occurs during DoS attack or after DoS attack due to the effect of attack. Here, we discuss three situations: 1) the \textit{zooming-out} instant  belongs to DoS attack interval; 2) the \textit{zooming-out} happens at the ending instant of DoS attack; 3) the \textit{zooming-out} occurs over the attack-free interval. 
	Suppose that the \textit{zooming-out} is triggered at instant $t_{k+1}$. Then we have 
	$E_{k+1} ^e= \varphi^1 E_{k}^e + \phi E_{k-n_{\max}}^e$.
	From 
	$	\| x(t_{k+1})\|  =  \| \mathcal{E}({i},t_{k-i})-A^{cl}_p  x(t_{k}) -\hat  B_p^d K_p \tilde e(t_k)  \|
	\leq  \phi E_{k-n_{\max}} ^e+ \varphi^1 E_{k}^e,$
	one has $\|x(t_{k+1})\| \leq E_{k+1}^e$.
	For situation 1), if the DoS attack occurs at the next instant and the switch does not occur during this interval, we have 
	$	\| x(t_{k+2})\|  
	= \| \mathcal{E}({i+1},t_{k-i+1})-(A^{cl}_p )^2 x(t_{k})-\sum_{j=0}^{1} (\hat{A}_{p}^d) ^j\hat{B}_p^d K_p \tilde{e}(t_{k+1-j}) \|  
	\leq  \phi E_{k-n_{\max}+1}^e +  \varphi^1 \varphi^2 E_{k}^e
	= \varphi^2  ( \phi E_{k-n_{\max}} ^e+ \varphi^1 E_{k}^e) = \varphi^2  E_{k+1}^e$.
	Repeating this process, it is obvious that for a DoS attack interval, the\textit{ zooming-out} is  triggered at most one time. In addition, the jump of switch is involved when computing $\phi$. If the switch occurs during DoS attack intervals, the process of analysis is similar.
	
	For situation 2), \textit{zooming-out} occurs at the end instant of DoS attack, i.e., $ACK_{t_{k+2}}=1$. The increase of $\varphi^1$ at the first instant (i.e., \textbf{\cref{case_1}-a}) is considered in\textit{ zooming-out} process. Then the \textit{zooming-out} process will not happen until the next DoS attack interval. 
	
	Situation 3) may occur since \textit{zooming-out} of \textbf{\cref{case_1}-a} does not happen. In such a situation, we have $\| x(t_k)\| \leq E_k^e$ and $\|x(t_{k+1})\| > \varphi_2 E_k^e$ when the \textit{zooming-out} is triggered at instant $t_{k+1}$. Then the \textit{zooming-out} can capture the state. Moreover, it will not be triggered until the next DoS attack interval, which can be found from \Cref{lemma5}. 
	
	Part II: The worst case is that each DoS attack interval triggers a \textit{zooming-out}, the length of DoS attack interval becomes the maximum, and the length of DoS attack sleeping interval is the minimum. 
	Next, by analyzing the increase of $E_k$ when a switch occurs in different intervals one by one,  the case where  switch occurs during DoS attack intervals and after a zooming-out instant will cause the maximum increase, that is, 
	$	E_{\hat{k}\mathcal{N} }^e \leq  \varphi_1^{\frac{t_{{k}}-t_0}{\tau_d}}\Big ( (\varphi^2)^{\mathcal{N} (1-\frac{\tau_s}{\tau_d})-1}\varphi^1+ (\varphi^2)^{n_{\min}-1- \frac{\mathcal{N}\tau_s}{\tau_d}}\phi\Big)^{\hat{k}}E_0$
	where $\hat{k}$ is the number of period $\mathcal{N} = n_{\min}+n_{\max}$. Therefore, $t_k-t_0 =\hat{k}\mathcal{N} \tau_s$.
	From (\ref{equ_DoS2}), we have $\lim_{k\rightarrow \infty} E_k^e \rightarrow 0$. Due to $\|x(t_k)\| \leq E_k^e$, the system is stable. 
	
	The uniformity of encoder and decoder can be derived directly. The special case is that the switch occurs during DoS attack intervals and just after \textit{zooming-out} instant. In such a case, the decoder parameter should be obtained based on the historical data, but it is also obvious. 
	The proof is completed.
\end{proof}
\color{black}
\begin{remark}
	Since the system operates without the ACK signal, the encoder cannot ascertain the status of DoS attack. However, the decoder could detect the occurrence of a DoS attack. Once there is a DoS attack, the decoder records the quantization parameters and the switching instants. The encoder transmits the \textit{zooming-out} instant \( n_z \) and a flag to the decoder. Upon receiving this information, the decoder calculates the corresponding value at the current instant.
\end{remark}
\color{blue}
\begin{remark}
	The stability analysis for ACK-free case is nontrivial. Because the DoS attacks interval is unknown to the quantizer, the system alternates unpredictably between attack and attack-free regimes, and characterizing the resulting dynamics is a key challenge for achieving stabilization. Specifically, time-triggered designs simplify proofs but sacrifice performance.  The event-triggered policies cannot pre-determine jumps in quantization parameters before operation. Investigating the zooming-out behavior during DoS attacks is therefore essential.
	Furthermore, the switching signal must be constructed accordingly. 
\end{remark}
\begin{remark}
	The proposed quantization update laws in \eqref{equ_Eke_ack}, \eqref{equ_Eke_ack_et} and \eqref{equ_Ekd_ack_et}  reveal distinct design philosophies between event-triggered (ET) and time-triggered (TT) strategies under DoS attack-induced asynchronous behavior. The TT strategy prioritizes determinism using fixed update period, i.e., $\mathcal{T} = (n_{\min}+n_{\max})\tau_s$, which ensures stability under periodic DoS attacks as guaranteed by \Cref{thm_time}. This periodicity aligns with worst-case scenarios where quantization ranges are proactively expanded at each interval, simplifying encoder-decoder coordination through \Cref{thm1_ET}. However, this rigidity may lead to sub-optimal performance during non-periodic DoS attacks, as quantization ranges are expanded irrespective of actual state divergence.  
	In contrast, the ET strategy dynamically adjusts quantization parameters based on real-time state feedback, activating zoom-out mechanisms only when $\|x(t_k)\|> \varphi_2 E_k^e$. This adaptability minimizes redundancy during stable periods while accelerating convergence whenever attacks occur. ET strategy achieves a tighter error bound compared to TT one, especially when DoS is sporadic or low intensity. 
\end{remark}
\color{black}
	
	\section{simulations}\label{sec_sim}
\section{simulations}\label{sec_sim}
In this section, two examples are given. They will show the effectiveness of \textbf{\textit{Strategies 1--3}} and \textbf{\textit{Strategies 3--4}}, respectively. 
\subsection{Example A}
\color{blue}
Helicopters are indispensable across a wide range of domains, from military and civil aviation to precision agriculture. Among their key capabilities, vertical take-off and landing (VTOL) most directly governs overall dynamic performance; therefore, maintaining stable VTOL behavior under real-world operating conditions remains a critical research focus.
In \cite{}, the the VTOL helicopter model (VTOLHM) is described by $dx(t)=[A_{\sigma (t)}x(t)+B_{\sigma(t)}u(t)]dt$ , where the state vector $x(t)$ comprises horizontal velocity, vertical velocity, pitch rate, and pitch angle; while $\sigma(t)$ is the switching signal. 
From Table \ref{para}, the vertical take-off and landing helicopter operates at different speeds. Each speed induces unique aerodynamic characteristics that are represented by a dedicated subsystem with its own parameter set
\begin{align*}
	&A_{\sigma(t)} = \left[	\begin{array}{cccc}
		-0.0366& 0.0271&0.0188&-0.4555\\
		0.0482&-1.0100&0.0024&-4.0208\\
		0.1002&a_{32}(\sigma(t)) &-0.7070 &a_{34}(\sigma(t))\\
		0&0&1&0
	\end{array}\right]	\\
	&	B_{\sigma(t)} = \left[	\begin{array}{cc}
		0.4422&   0.1761\\
		b_{21}(\sigma(t))    &-7.5922\\
		-5.5200 & 4.4900\\
		0 & 0
	\end{array}\right].
\end{align*}
\begin{table}
	\caption{DoS attack constraints in each strategy}\label{para}
	\centering
	\renewcommand\arraystretch{1}
	\begin{tabular}{cccc}
		\toprule[1pt]
		Airspeeds(knots) &$a_{32}(\sigma(t)) $&$a_{34}(\sigma(t)) $ &$b_{21}(\sigma(t)) $ \\
		\midrule[0.7pt]
		135	&  0.3681 &1.4200	& 3.5446 \\
		170 & 0.5047 & 2.5460& 5.1120 \\
		\bottomrule[1pt]
	\end{tabular}
\end{table}
The controller gains are
\begin{align*}
	K_1= &\left[	\begin{array}{cccc}
		0.0077& 0.0742&0.2070&0.2516\\
		-0.0123&0.0620&0.1266&0.1317
	\end{array}\right]	,~\\
	K_2=& \left[	\begin{array}{cccc}
		-0.0030&-0.0055&0.2631&0.4130\\
		-0.0471&0.1000&0.2202&-0.3283
	\end{array}\right].
\end{align*}
Other parameters are set as follows: $\lambda_{1} = 0.9602$, $\lambda_{2} = 0.9609$, $\eta_1 = 1.0216$, $\eta_2 = 1.0922$, $\rho_{1} = 3.8927$, $\rho_{2} = 3.8620$, $\xi_{12} = 1.1687$, $\xi_{21} = 0.5751$, 
$\hat \lambda_{1} = 0.9822$, $\hat\lambda_{2} = 0.9607$, $\hat\eta_1 = 1.1004$, $\hat\eta_2 = 1.1029$, $\hat\rho_{1} = 3.1096$, $\hat\rho_{2} = 3.8799$, $\hat\xi_{1} = 1.0930$, $\hat\xi_{2} = 1.0599$ with sampling period $\tau_s = 0.05$ and the maximum asynchronous interval $N_{\max} = 2$.

\textbf{\textit{Strategy 1:}}
Using  \Cref{thm1}, the quantization level is required to meet  $N > 1.1541$. Based on the system definition and the corresponding matrices, we can get $b = 0.2951$, $\nu_1 = 0.9602$, $\nu_2 = 0.9609$, $\hat{\nu}_1 = 3.7377$, $\hat{\nu}_2 = 2.8416$, $\mu_{12}^1 = 2.9589$, $\mu_{21}^1 = 2.7118$, $\mu_{12}^2 = 1.1940$, $\mu_{21}^2 = 0.6281$, $\mu_{12}^3 = 1.0216$, and $\mu_{21}^3 = 1.0922$. Then the dwell time constraint satisfies $\tau_d \geq 2.5948$. In this simulation, we choose $N = 3$, $\tau_d = 5.9$, $\tau_D = 1.2$, and $T = 1.5$.  It is verified that conditions \eqref{equ_con1} and \eqref{equ_con2} are fulfilled.

Fig. \Cref{fig:switching} depicts a feasible switching signal, as well as  the system mode, controller mode, and the DoS attack intervals. We can see that the asynchronous behavior occurs due to the DoS attack. The corresponding state trajectories of switched systems under DoS attack and quantization with initial condition $x_0 = [-0.5,-1,1,-0.5]^T$, are presented in Fig. \ref{fig:state}. It is clear that the system is stable. 
Furthermore, Fig. \ref{fig:ek} plots the evolution of quantizer parameters $E_k^e$ and $E_k^d$ for the encoder and decoder, respectively. The curves over the intervals $[12,13]$ and $[18,18.5]$ are enlarged to accentuate the discrepancies between the encoder and decoder parameters during the DoS attack. It can be observed the trajectories of $E_k^e$ and $E_k^d$ are perfectly identical during attack-free intervals, which further corroborates the validity of  \Cref{lemma2}.

Finally, we verify the result in \Cref{coro1}. From \eqref{equ_con2_1}, the dwell time constraint admits $\tau_d \geq 2.0429$. With $N = 3$, \eqref{equ_con1_1} simplifies to $-0.8910 + \left(\frac{1}{T} + \frac{\tau_s}{\tau_D} - 1\right) 1.0986 \leq 0$. A feasible solution is  $T = 1.1$ and $\tau_D \geq 0.0420$. Compared to \Cref{thm1}, \Cref{coro1} markedly relaxes the DoS attack frequency and switching signal restrictions, yielding a substantially less conservative design.

\begin{figure*}
	\begin{subfigure}[b]{0.32\linewidth}
		\centering
		\includegraphics[width=\linewidth]{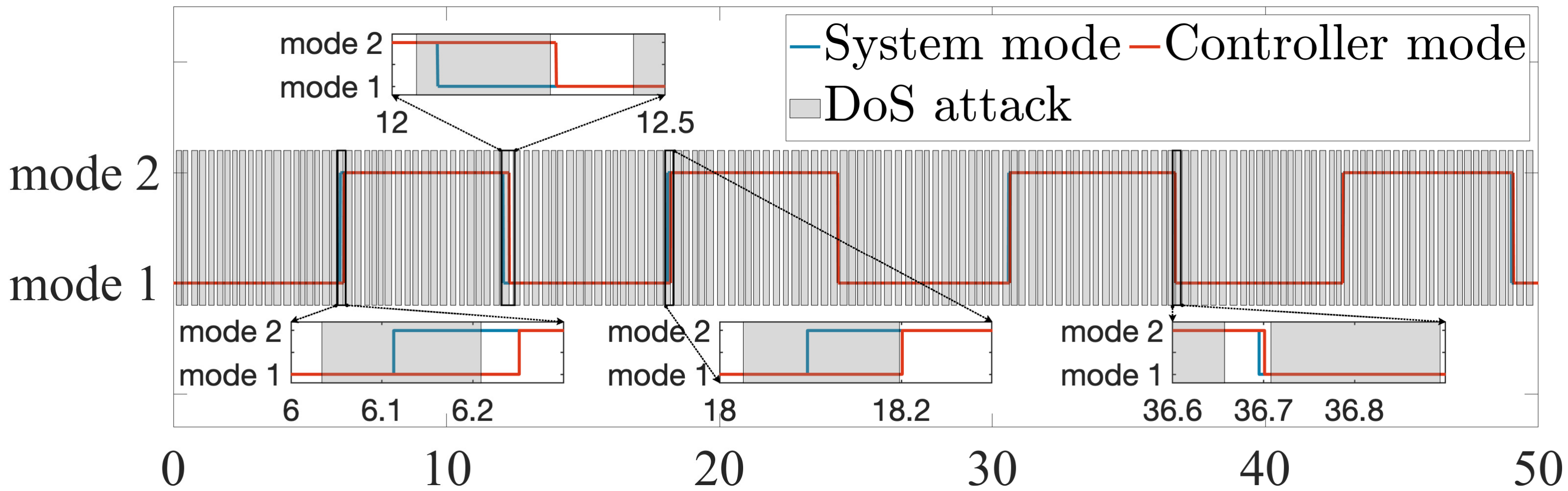}
		\caption{The switching signal and DoS attack} \label{fig:switching}
	\end{subfigure}
	\begin{subfigure}[b]{0.32\linewidth}
		\centering
		\includegraphics[width=\linewidth]{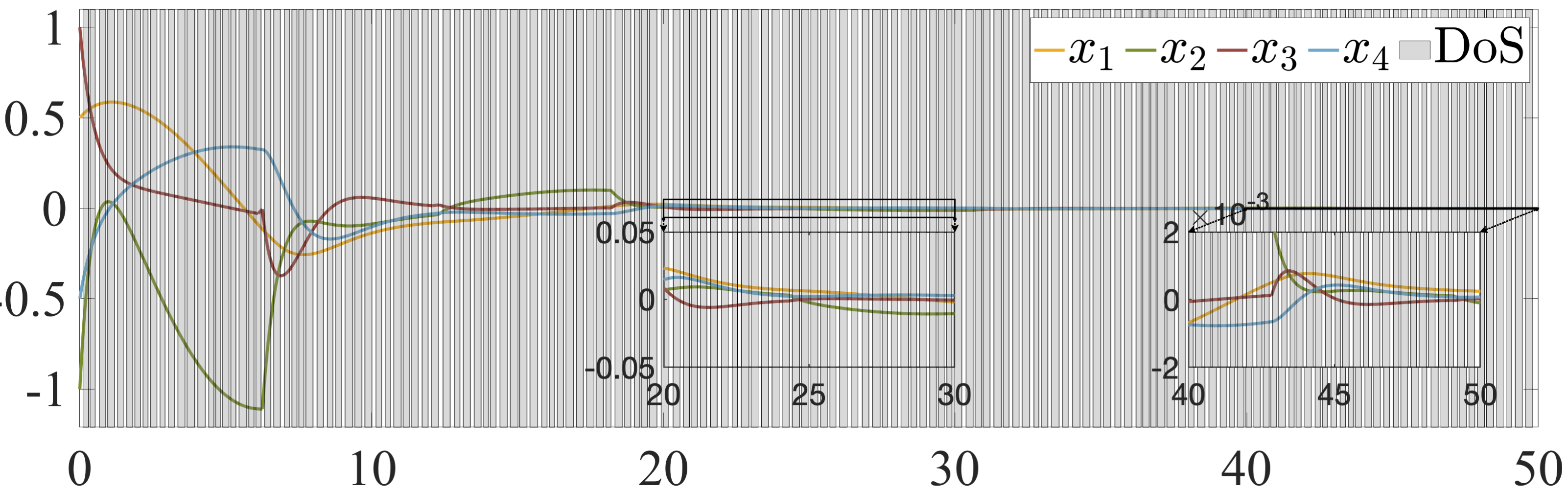}
		\caption{State trajectories} \label{fig:state}
	\end{subfigure}
	\begin{subfigure}[b]{0.32\linewidth}
		\centering
		\includegraphics[width=\linewidth]{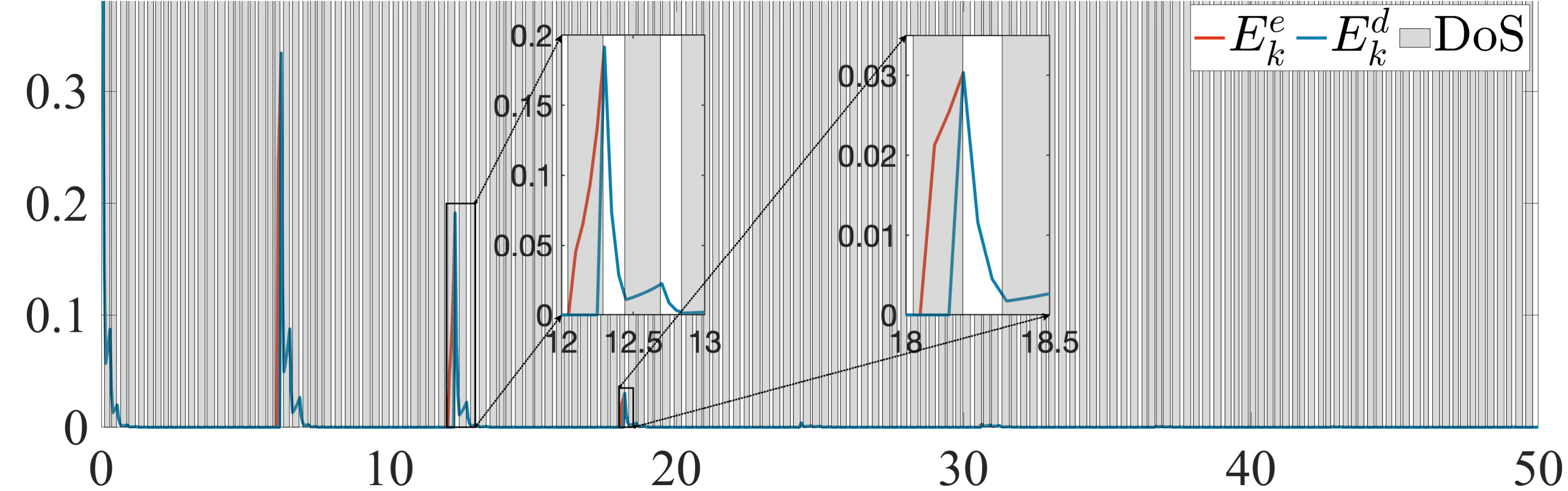}
		\caption{The values of $E_k^e$ and $E_k^d$} \label{fig:ek}
	\end{subfigure}
	\caption{\textbf{\textit{Strategy 1}} in Example A} 
\end{figure*}
\begin{figure*}
	\begin{subfigure}[b]{0.32\linewidth}
		\centering
		\includegraphics[width=\linewidth]{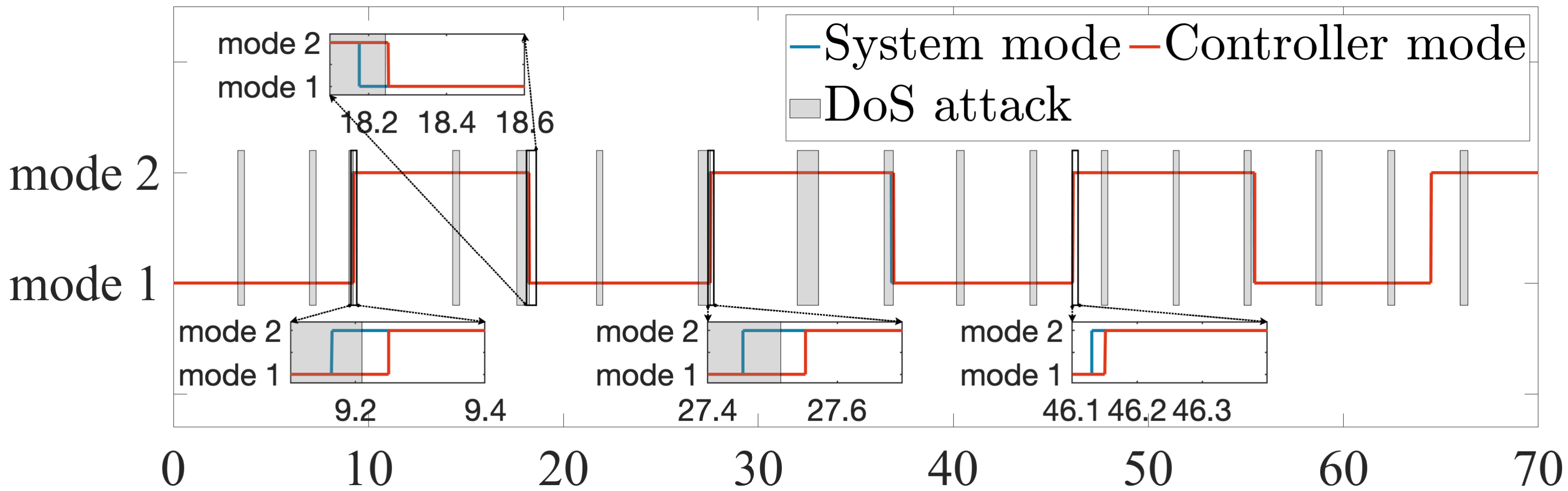}
		\caption{The switching signal and DoS attack} \label{fig:switching2}
	\end{subfigure}
	\begin{subfigure}[b]{0.32\linewidth}
		\centering
		\includegraphics[width=\linewidth]{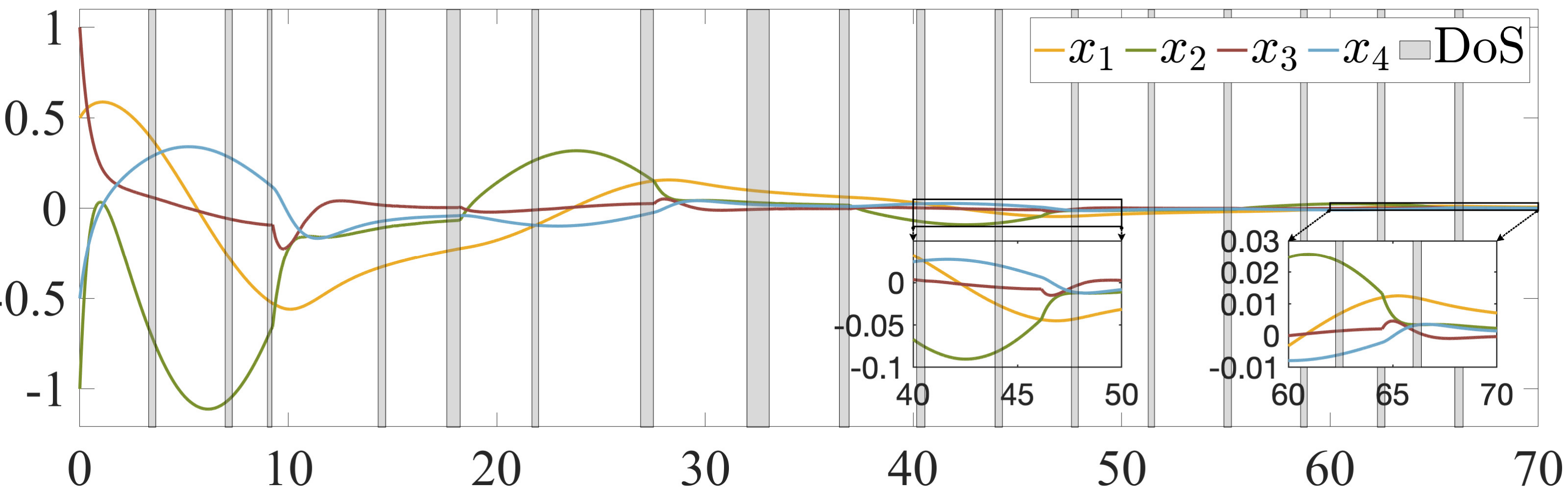}
		\caption{State trajectories} \label{fig:state2}
	\end{subfigure}
	\begin{subfigure}[b]{0.32\linewidth}
		\centering
		\includegraphics[width=\linewidth]{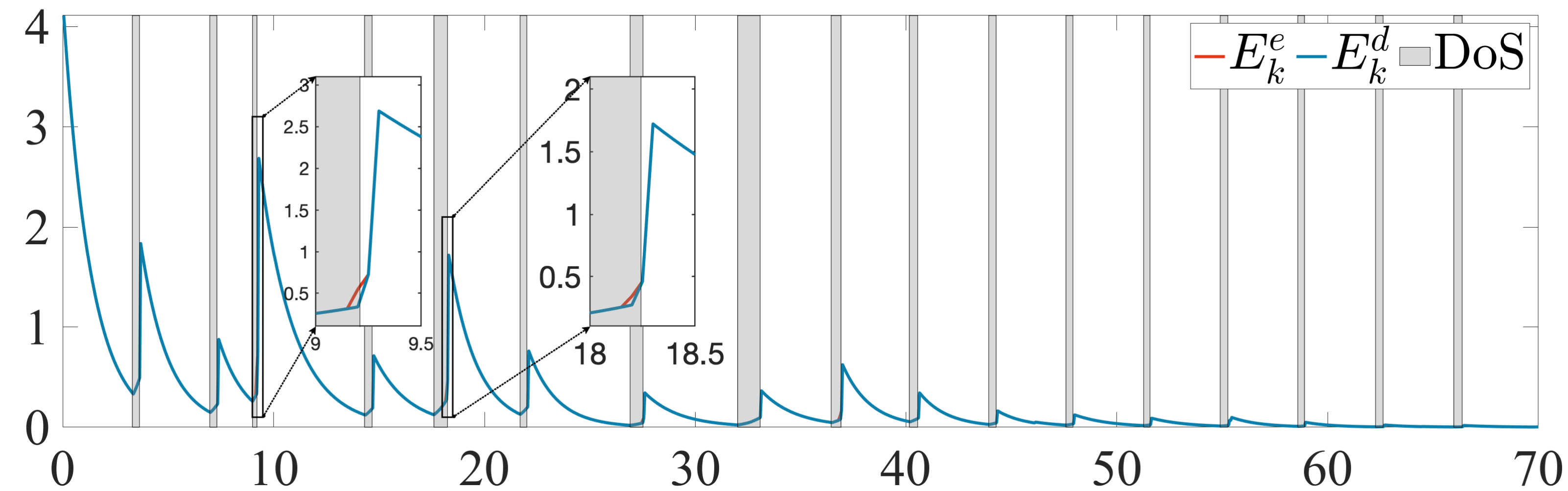}
		\caption{The values of $E_k^e$ and $E_k^d$} \label{fig:ek2}
	\end{subfigure}
	\caption{\textbf{\textit{Strategy 2}} in Example A} 	
\end{figure*}
\begin{figure*}
	\begin{subfigure}[b]{0.32\linewidth}
		\centering
		\includegraphics[width=\linewidth]{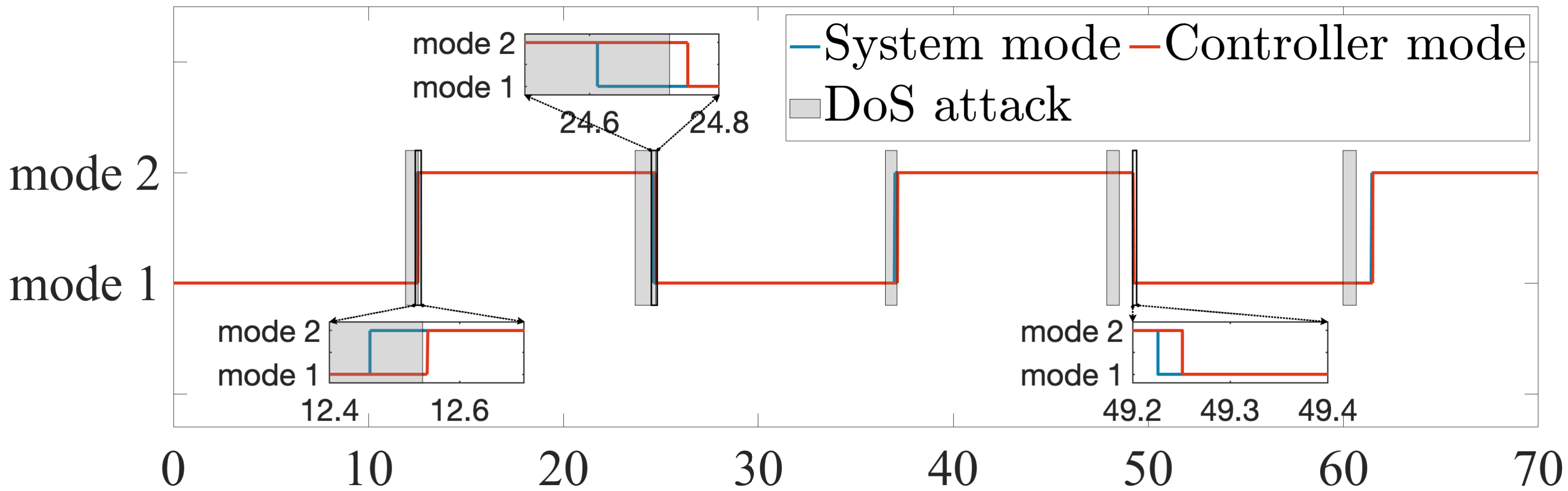}
		\caption{The switching signal and DoS attack} \label{fig:switching3}
	\end{subfigure}
	\begin{subfigure}[b]{0.32\linewidth}
		\centering
		\includegraphics[width=\linewidth]{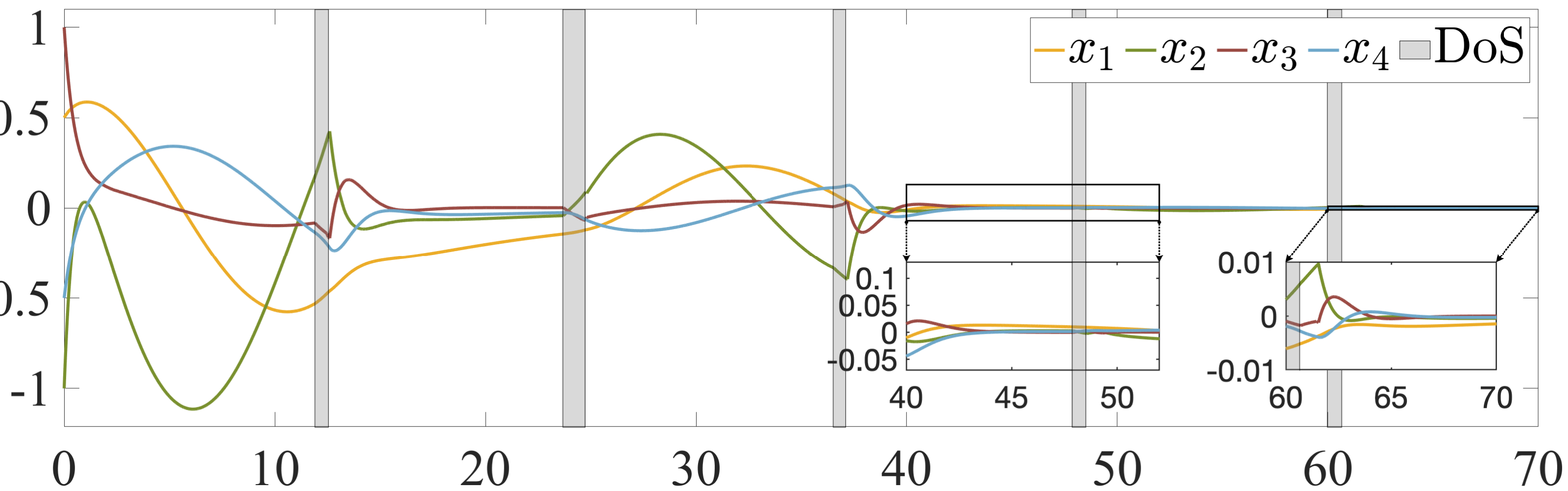}
		\caption{State trajectories} \label{fig:state3}
	\end{subfigure}
	\begin{subfigure}[b]{0.32\linewidth}
		\centering
		\includegraphics[width=\linewidth]{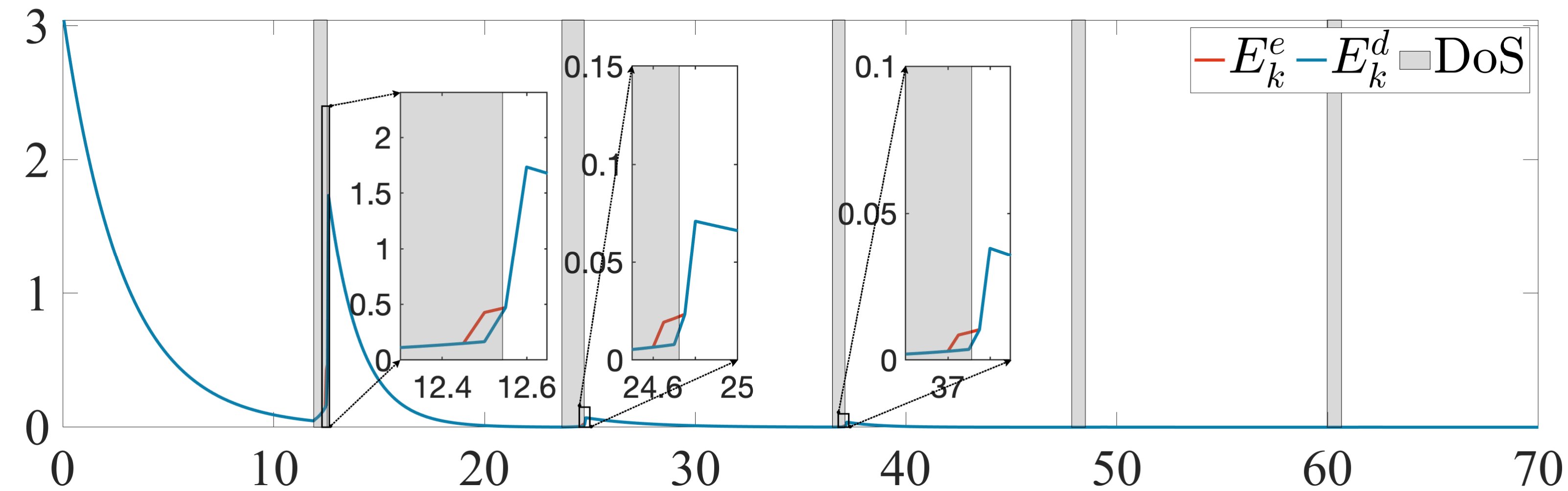}
		\caption{The values of $E_k^e$ and $E_k^d$} \label{fig:ek3}
	\end{subfigure}
	\caption{\textbf{\textit{Strategy 3}} in Example A} \label{fig:control_3}	
\end{figure*}

\textbf{\textit{Strategy 2:}}
From  \Cref{thm1_ori}, we obtain $N \geq 119.4795$. For faster convergence rate, we set  $N = 125$. Moreover, the dwell time constraint requirement derived from \eqref{equ_con2_origin} is $4.4913$. To relax DoS attack constraints, we choose $\tau_d = 9$, $\tau_D = 6$ and $T = 12$, which satisfies \eqref{equ_con1_origin}. Under this pattern, the maximum number of consecutive DoS attack periods is $n_{\max} \leq 18$, with a total duration less than 0.9s. According to the frequency and duration constraints, the average attack length is $\tau_D / T = 0.5$ (i.e., communication is interrupted for 10 sampling instants). Hence, the maximum number of DoS attack is not overly conservative.
As confirmed  in Fig. \ref{fig:state2}, the system remains stable and the quantization parameters stay unchanged during attack-free intervals.

\textbf{\textit{Strategy 3:}}
This case yields the quantization level satisfying $N \geq 29.2091$. We set $N = 175$. The parameters  $\tau_d = 12$, $\tau_D = 20$, $T = 20$ again satisfy the conditions of \Cref{thm1_zero}.
The switching signal, state trajectories, and quantization parameters shown in Fig. \ref{fig:control_3} verify that the system remains stable under \textbf{\textit{Strategy 3}}.

\begin{table}[h]
	\caption{\textcolor{blue}{Allowable DoS Attack and Switching Signal Parameters of Example A in the Revised Version}}
	\label{DoS}
	\centering
	\renewcommand\arraystretch{1}
	\color{blue}
	\begin{tabular}{cccc}
		\toprule[1pt]
		Parameters  & Strategy 1 & Strategy 2 & Strategy 3\\
		\midrule[0.7pt]
		$\tau_D$ &1.2  &6  & 20 \\
		$T$ &1.5  &  12& 20 \\
		$\tau_d$ &5.9 &9&12\\
		$N$ &3&125& 175\\
		\bottomrule[1pt]
	\end{tabular}
\end{table}

\color{black}

\textbf{\textit{Discussions}}: 
Example A illustrates the superiority of the active controller over \textbf{\textit{Strategy 3}}. \textcolor{blue}{More specific, the system with active control can tolerate more severe attacks, see \Cref{DoS} for details.}
The  key differences are summarized below.

1) Computation burden: In \textbf{\textit{Strategy 1}}, both the quantization parameters $E_k^e$ and $x_k^{e*}$ need to be calculated, whereas only $E_k^e$ is involved in \textbf{\textit{Strategies 2-3}}. This means that \textbf{\textit{Strategy 1}} incurs a noticeably higher computation burden.

2) Quantization level: The quantization level in \textbf{\textit{Strategy 1}} is significantly lower than other strategies. In \textbf{\textit{Strategy 1}}, the quantization level depends solely on $\|e^{A_p\tau_s}\|$, whereas in \textbf{\textit{Strategies 2-3}}, it is governed  by the convergence rate of $\|A_p^d\|$ or $\|\hat{A}_p^d\|$. For instance, the quantization level in \textbf{\textit{Strategy 2}} is $N \geq \max_{p \in \mathcal{M}} \left\{ \frac{\rho_p \|B_p^d\|}{1-\lambda_p}\right\}$. Since $\|e^{A_p\tau_s}\|$ and $\|B_p^d\|$ 
are of comparable magnitude while $\rho_p>1$ and $\lambda_p\approx 1$, the required quantization level in \textbf{\textit{Strategy 1}} is dramatically lower. Although there is no simple ordering between  \textbf{\textit{Strategies 2}} and \textbf{\textit{3}}, the active controller exhibits better defense capabilities against DoS attack compared to the passive controller. 
In other words, DoS attack causes a larger impact on systems using passive controllers, as seen from the update law of \cref{case_2} in each strategy. \textbf{\textit{Strategy 3}} deliberately uses a higher quantization level to accelerate the \textit{zooming-in} phase to compensate for the effects of DoS attack (see Figs. \ref{fig:ek2} and \ref{fig:ek3}).

3) Switching signal constraints: \textbf{\textit{Strategy 1}} imposes the least stringent dwell-time requirement. \Cref{coro1} further mitigates the influence of switching on quantization updates and attains the smallest dwell time constraint among all schemes.

4) DoS attack constraints:  \textcolor{blue}{\Cref{DoS} shows that \textbf{\textit{Strategy 1}} tolerates the highest frequency and the longest duration of DoS attack, as evidenced by the smallest parameters $\tau_D$ and $T$. 
}
The attack constraint in Corollary \ref{coro1} becomes more relaxed, with \textbf{\textit{Strategy 1}} being the least effective and \textbf{\textit{Strategy 3}} the most restrictive.

In summary, active control scheme significantly outperforms passive control ones in resisting DoS attacks, with  \textbf{\textit{Strategy 1}} exhibiting the largest resilience. This advantage stems from the active controller’s ability to generate an appropriate control action even during a DoS attack interval.

\begin{table*}[h]
	\caption{Comparison of Quantization Control Strategies}
	\label{tabqua2}
	\centering
	\renewcommand\arraystretch{1}
	\setlength\tabcolsep{2pt}
	\color{blue}
	\begin{threeparttable}
		\begin{tabular}{m{0.03\linewidth}<{\centering} m{0.31\linewidth}<{\centering} m{0.31\linewidth}<{\centering} m{0.31\linewidth}<{\centering}}
			\toprule[1pt]
			Strategy\tnote{*}	& Advantages & Disadvantages & Applicability \\
			\midrule[0.7pt]
			\textbf{\textit{S1}} & High  defense capability and low byte requirement (low quantization levels $N$) & High computational resource requirements for both quantizer and controller & Systems with high security requirements and limited communication resources \\
			\addlinespace
			\textbf{\textit{S2}} & Trade-off  quantizer computational resources and defense capability & High computational resource requirement for controller & Quantizer has limited computational resource, controller has sufficient resources \\
			\addlinespace
			\textbf{\textit{S3}} & Lower computational resource requirements & Poor defense capability & Low attack threats and limited computational resources \\
			\addlinespace
			\textbf{\textit{S4}} & No ACK signal required & Poor defense capability & Network where ACK is unavailable\\
			\bottomrule[1pt]
		\end{tabular}
		\begin{tablenotes}
			\footnotesize
			\item[*] \textbf{\textit{S1--S4}} correspond to \textit{\textbf{Strategies 1--4}}.  
		\end{tablenotes}
	\end{threeparttable}
\end{table*}

\subsection{Example B}

\begin{figure}[h]
	\centering
	\includegraphics[width=0.7\linewidth]{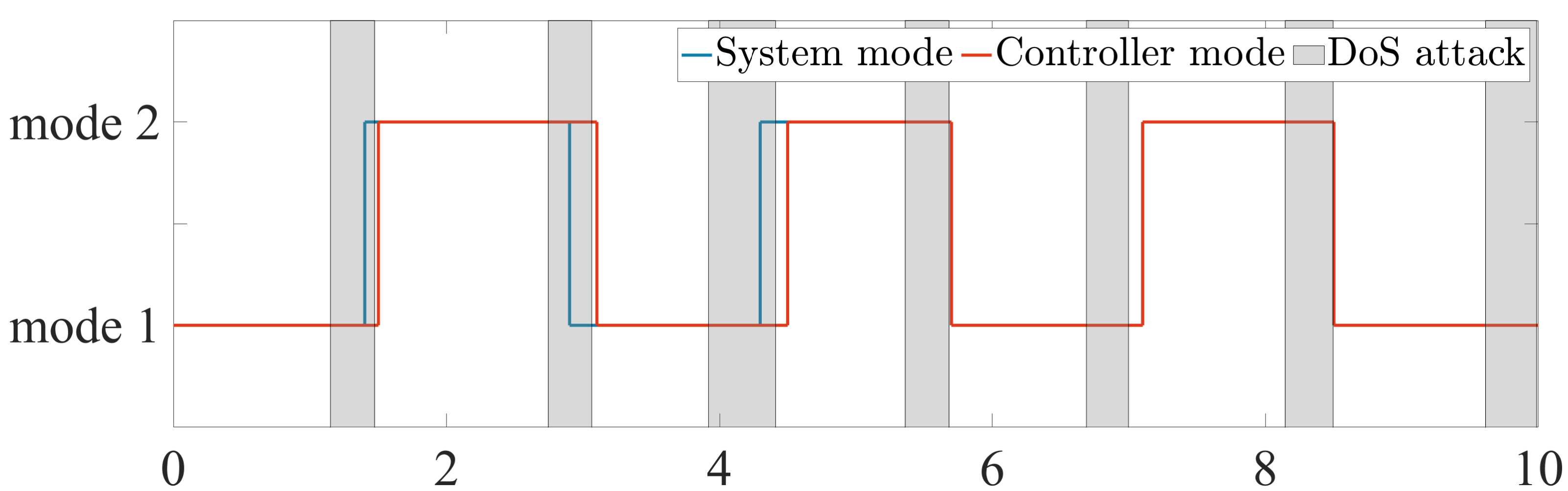}
	\caption{The switching signal and DoS attack in Example B}
	\label{fig:switching41}
\end{figure}

Consider a switched system consisting of two subsystems with parameters
\begin{align*}
	A_1 =& \left[	\begin{array}{cc}
		3.3968   & 1.4663\\
		3.0711 &  -3.2308
	\end{array}\right]	,~
	A_2= \left[	\begin{array}{cc}
		-0.2915  & -3.4221\\
		-0.2891  & -3.6517
	\end{array}\right],~\\
	B_1=& \left[	\begin{array}{cc}
		0.9207   &-1.2808\\
		1.7131  &  0.9749
	\end{array}\right],
	~	B_2= \left[	\begin{array}{cc}
		0.9207   & -1.2808\\
		1.7131  &  0.9749
	\end{array}\right].
\end{align*}
The controller gains are
	$	K_1= \left[	\begin{array}{cc}
		-3.6011  & -2.5837\\
		4.1549  & -1.1392
	\end{array}\right]	,~
	K_2= \left[	\begin{array}{cccc}
		-2.0251 &  -0.9775\\
		3.7468   &-2.4542
	\end{array}\right].$
Let $\hat{\lambda}_{1} = 0.375$, $\hat{\lambda}_{2} = 0.35$, $\hat{\eta}_{1} = 1.0314$, $\hat{\eta}_{2} = 1.4979$, $\hat{\rho}_{1} = 1.0986$, $\hat{\rho}_{2} = 1.1158$, $\hat{\xi}_{1} = 1.4421$, and $\hat{\xi}_{2} = 0.1$. The quantization level satisfies $N \geq \max_{p \in \mathcal{M}} \left\{ \frac{\hat{\rho}_p \|\hat{B}_p^dK_p\|}{1-\hat{\lambda}_p} \right\} = 2.1153$. Hence, we set $N = 3$. 	Choose the dwell time constraint be $2$ seconds, and the minimum DoS attack sleeping duration period is $n_{\min} = 10$ and the maximum consecutive DoS attack period is $n_{\max} = 4$. It can be verified that the parameters satisfy the conditions in  \Cref{thm1_zero,thm_time,thm1_ET}. 

The switching signal is plotted in Fig. \ref{fig:switching41}.  Figs. \ref{fig:control3B}-\ref{fig:control42B} show the state and quantization parameters for  \textbf{\textit{Strategy 3}} and two methods in  \textbf{\textit{Strategy 4}}. 
It is found that among these three control strategies, the one with ACK  signal has the fastest convergence speed. This is reasonable since  the quantization parameters given in  \textbf{\textit{Strategy 4}} are more conservative in order to compensate the absence of  ACK signal. 
In addition, compared with the updating methods of quantization parameters based on event-triggered and time-triggered mechanisms, we can see  that the state of event-triggered method has chatting in the early stage. But the event-triggered method converges faster than the time-triggered counterpart. Since the event-triggered method has close relationship with the real-time state of the system,  DoS attack against aperiodic distribution will perform better, especially in the case where the actual duration of attack-free is significantly higher than the constraint.
\begin{figure*}
	\setlength{\abovecaptionskip}{0pt} 
	\setlength{\belowcaptionskip}{0pt} 
	
	\begin{minipage}{0.32\linewidth}
		\begin{subfigure}[b]{\linewidth}
			\setlength{\abovecaptionskip}{0pt} 
			\setlength{\belowcaptionskip}{0pt} 
			\centering
			\includegraphics[width=\linewidth]{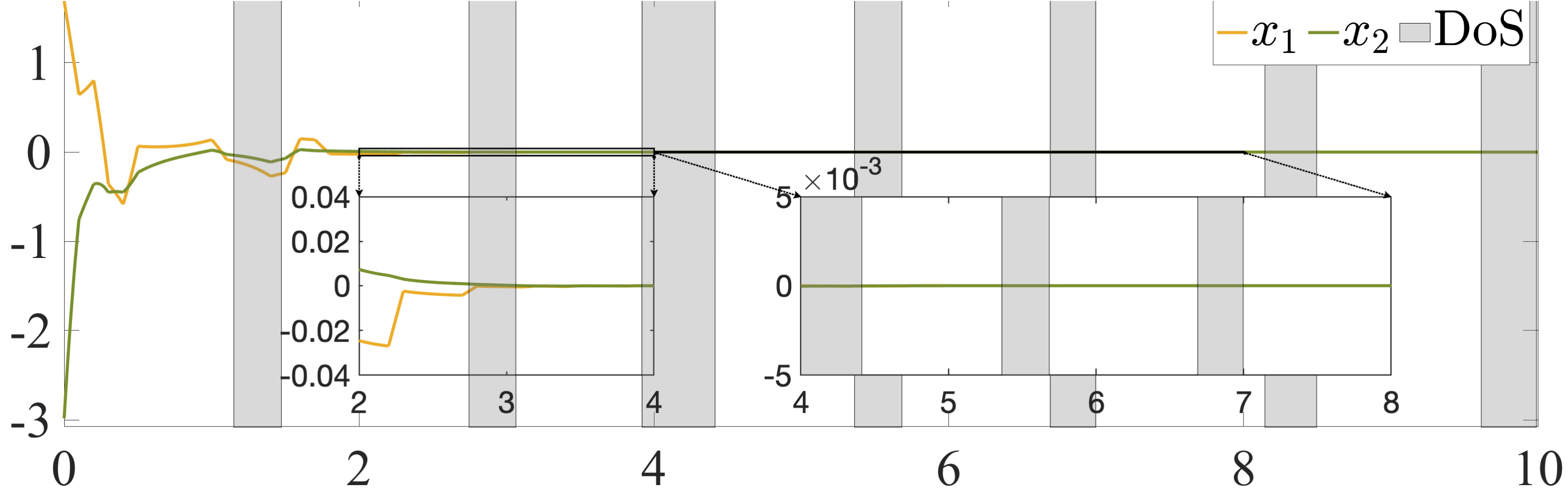}
			\caption{State trajectories} 
			\label{fig:state23}
		\end{subfigure}
		\begin{subfigure}[b]{\linewidth}
			\setlength{\abovecaptionskip}{0pt} 
			\setlength{\belowcaptionskip}{0pt} 
			\centering
			\includegraphics[width=\linewidth]{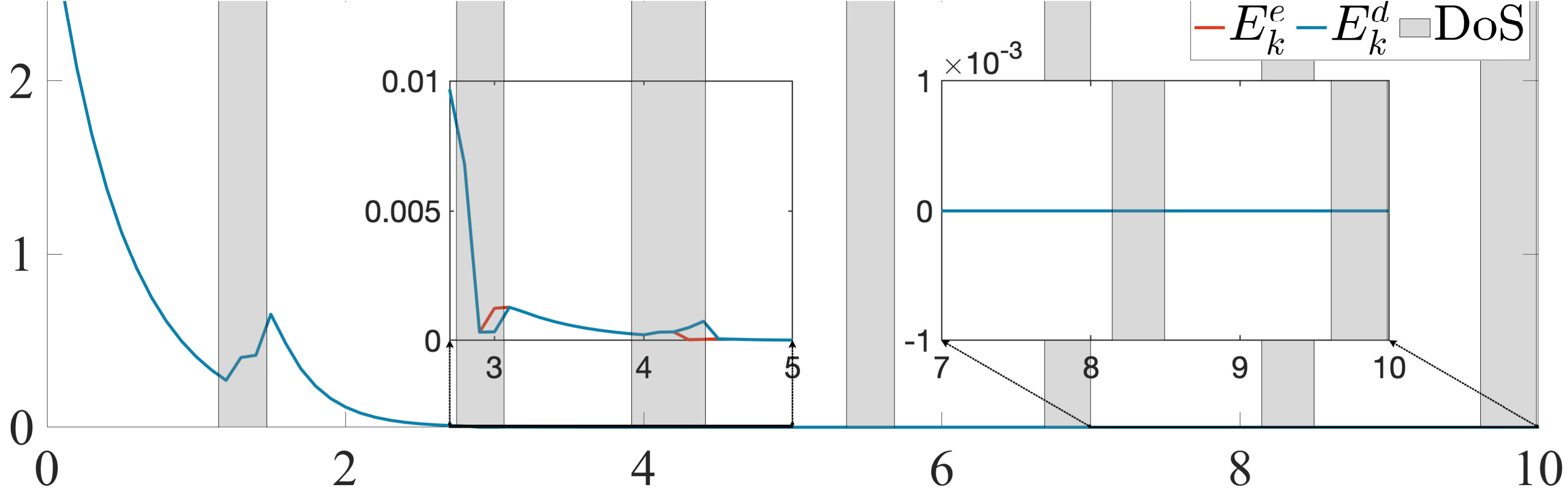}
			\caption{The values of $E_k^e$ and $E_k^d$} 
			\label{fig:ek23}
		\end{subfigure}
		\caption{\textbf{\textit{Strategy 3}} in Example B}
		\label{fig:control3B}
	\end{minipage}
	\begin{minipage}{0.32\linewidth}
		\begin{subfigure}[b]{\linewidth}
			\setlength{\abovecaptionskip}{0pt} 
			\setlength{\belowcaptionskip}{0pt} 
			\centering
			\includegraphics[width=\linewidth]{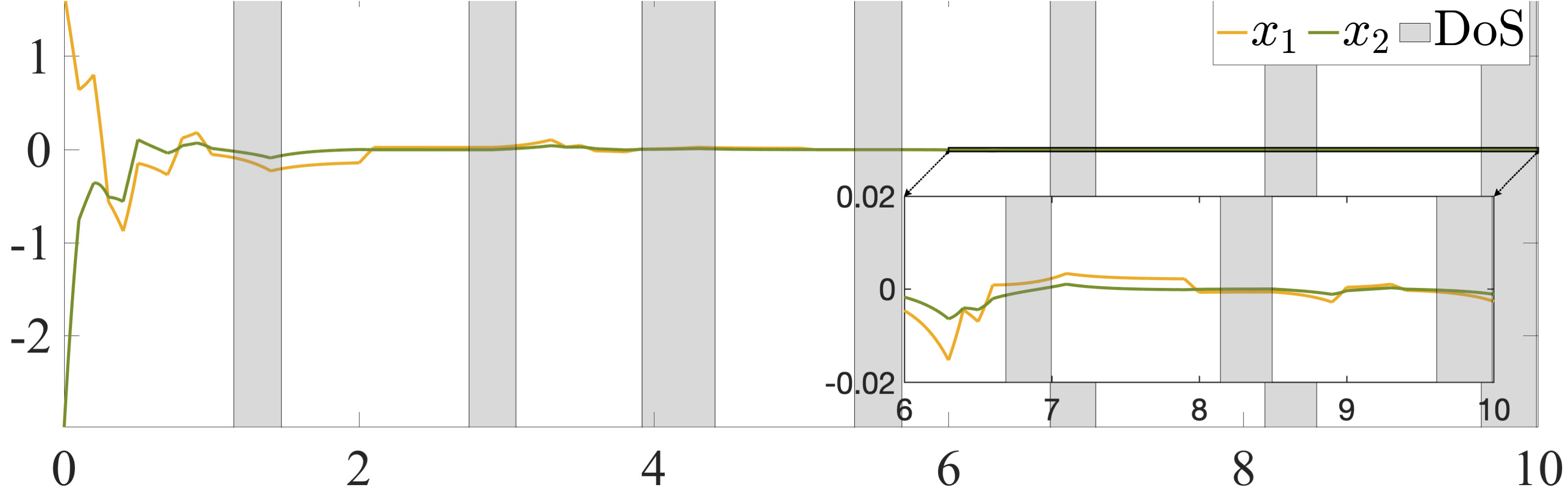}
			\caption{State trajectories} 
			\label{fig:state241}
		\end{subfigure}
		\hfill
		\begin{subfigure}[b]{\linewidth}
			\setlength{\abovecaptionskip}{0pt} 
			\setlength{\belowcaptionskip}{0pt} 
			\centering
			\includegraphics[width=\linewidth]{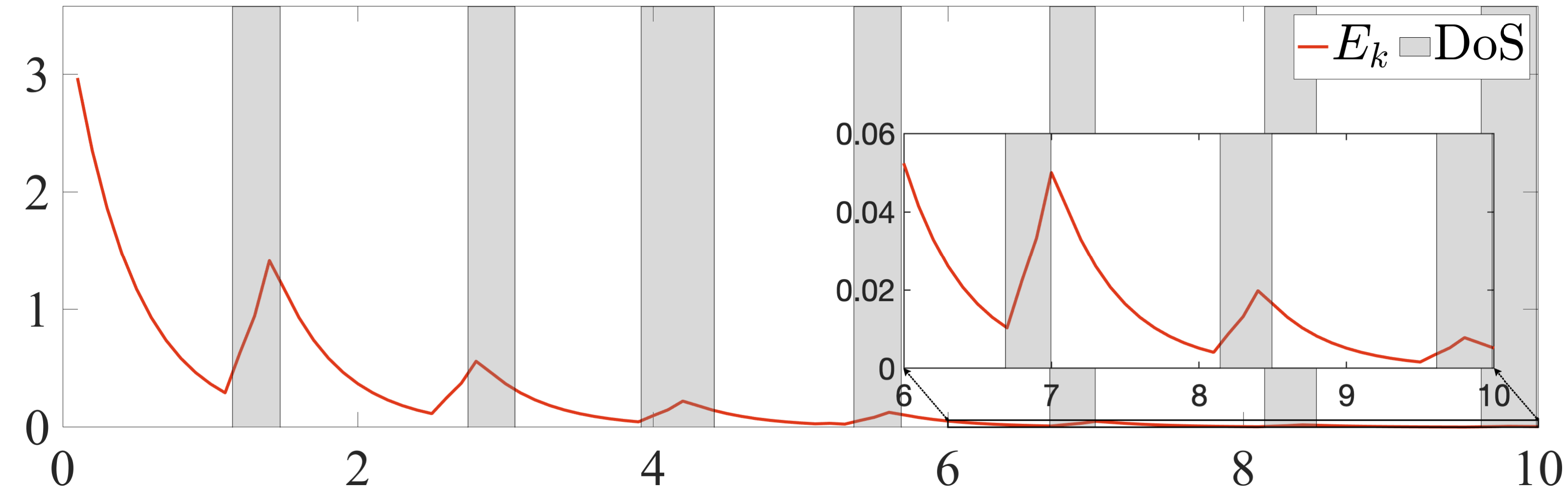}
			\caption{Quantization parameter $E_k$} 
			\label{fig:ek241}
		\end{subfigure}
		
		\caption{\textbf{TT strategy} for Example B}
		\label{fig:tt_strategy_B}
	\end{minipage}
	\begin{minipage}{0.32\linewidth}
		\begin{subfigure}[b]{\linewidth}
			\setlength{\abovecaptionskip}{0pt} 
			\setlength{\belowcaptionskip}{0pt} 
			\centering
			\includegraphics[width=\linewidth]{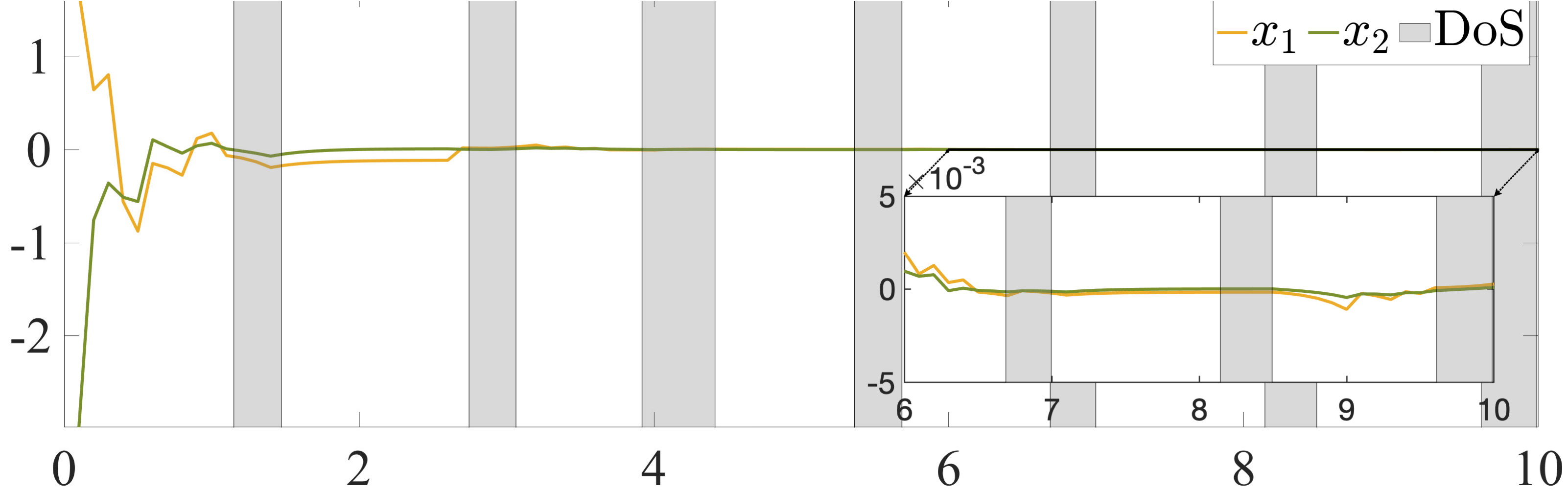}
			\caption{State trajectories} 
			\label{fig:state242}
		\end{subfigure}
		\hfill
		\begin{subfigure}[b]{\linewidth}
			\setlength{\abovecaptionskip}{0pt} 
			\setlength{\belowcaptionskip}{0pt} 
			\centering
			\includegraphics[width=\linewidth]{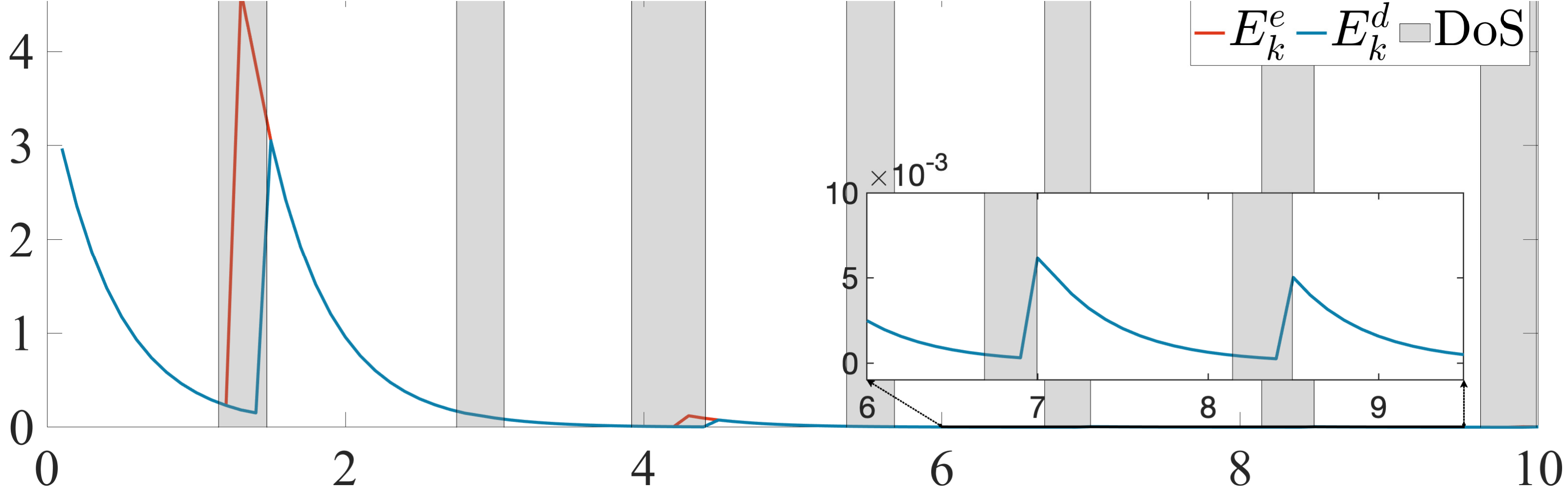}
			\caption{The values of $E_k^e$ and $E_k^d$} 
			\label{fig:ek242}
		\end{subfigure}
		
		\caption{\textbf{ET strategy} for Example B}
		\label{fig:control42B}
	\end{minipage}
\end{figure*}


	\subsection{Example C}
Consider a switched system consisting of two subsystems with parameters
{\scriptsize 	\begin{align*}
		&A_1 = \left[	\begin{array}{cccc}
			0.56 & 0.35 &  -0.30&   -0.60\\
			-0.10&   -0.64&  -0.64  & -0.73\\
			0.04   &-0.25 &  -0.42  &  0.57\\
			0.30  &  0.57  & -0.58 &  -0.50
		\end{array}\right]	\\
		&	A_2= \left[	\begin{array}{cccc}
			-0.50   &-0.64   &-0.74 &   0.30\\
			0.560 &   0.35   &-0.30  & -1.26\\
			-0.60  &  0.57  & -0.80&   -0.60\\
			0.40&  -0.25   & 0.42 &  -1.57
		\end{array}\right]\\
		&	B_1= \left[	\begin{array}{cc}
			1.91&   1.58\\
			0.69   &-0.62\\
			-1.05  & -1.82\\
			-0.20  &  0.39
		\end{array}\right]
		~	B_2= \left[	\begin{array}{cc}
			1.27  & -1.28\\
			1.32    &0.50\\
			-1.13 &  1.21\\
			-1.85  & 1.17\\
		\end{array}\right].
\end{align*}}
The controller gains are
{\scriptsize 	\begin{align*}
		K_1= &\left[	\begin{array}{cccc}
			-1.4037 &  -0.2758  & -0.1832   & 0.0781\\
			-0.8063 &   0.1959  &  0.3296  & -0.3401
		\end{array}\right]	,~\\
		K_2=& \left[	\begin{array}{cccc}
			-0.2191 &   -0.3600   & 0.2030   & 0.4401 \\
			0.2208  & -0.8859 &   0.1664  &  0.8850
		\end{array}\right].
\end{align*}}
\color{blue}
Other parameters are set as follows: $\lambda_{1} = 0.9482$, $\lambda_{2} = 0.9684$, $\eta_1 = 1.1879$, $\eta_2 = 1.2069$, $\rho_{1} = 1.0233$, $\rho_{2} = 1.0781$, $\xi_{12} = 1.8257$, $\xi_{21} = 1.4856$, 
$\hat \lambda_{1} = 0.9533$, $\hat\lambda_{2} = 0.9543$, $\hat\eta_1 = 1.0991$, $\hat\eta_2 = 1.1041$, $\hat\rho_{1} = 1.9587$, $\hat\rho_{2} = 1.9391$, $\hat\xi_{1} = 1.1490$, $\hat\xi_{2} = 1.2481$, and the sampling period $\tau_s = 0.1$. The maximum asynchronous interval is  $N_{\max} = 2$.

\color{black}
\textbf{\textit{Strategy 1:}}
Using  \Cref{thm1}, the quantization level is determined to satisfy $N > 1.2393$. Thus, we choose $N = 3$, $\tau_d = 5.9$, $\tau_D = 1.2$, and $T = 1.5$. Based on the system definition and the corresponding matrices, we can get $b = 0.3309$, $\nu_1 = 0.9482$, $\nu_2 = 0.9684$, $\hat{\nu}_1 = 0.9703$, $\hat{\nu}_2 = 1.0440$, $\mu_{12}^1 = 2.9589$, $\mu_{21}^1 = 2.1687$, $\mu_{12}^2 = 1.7930$, $\mu_{21}^2 = 1.7930$, $\mu_{12}^3 = 1.1879$, and $\mu_{21}^3 = 1.2069$. It is verified that conditions \eqref{equ_con1} and \eqref{equ_con2} are fulfilled.

Fig. \ref{fig:switchingC} illustrates a feasible switching signal, including the system mode, controller mode, and the DoS attack intervals. We can see that the asynchronous behavior occurs due to the DoS attack. The corresponding state trajectories of switched systems under DoS attack and quantization, with the initial condition $x_0 = [-2, 2, 1, -1]^T$, are presented in Fig. \ref{fig:stateC}. It is clear that the system is stable. 
Furthermore, Fig. \ref{fig:ekC} depicts the evolution of quantizer parameters $E_k^e$ and $E_k^d$ for the encoder and decoder, respectively. The interval $[6.8, 7.8]$ is magnified to highlight the differences between the encoder and decoder parameters during the DoS attack. It can be observed that $E_k^e$ and $E_k^d$ are the same during attack-free intervals, confirming the effectiveness of  \Cref{lemma2}.

Finally, we verify the result in \Cref{coro1}. From \eqref{equ_con2_1}, the dwell time constraint is given by $\tau_d \geq 2.8382$. When $N = 3$, the condition \eqref{equ_con1_1} simplifies to $-0.7849 + \left(\frac{1}{T} + \frac{\tau_s}{\tau_D} - 1\right) 1.0986 \leq 0$. A feasible solution is  $T = 1.1$ and $\tau_D \leq 0.1242$. Compared to \Cref{thm1}, the conservatism in the DoS attack and switching signal constraints has been significantly reduced in \Cref{coro1}.

\begin{figure*}
	\begin{subfigure}[b]{0.32\linewidth}
		\centering
		\includegraphics[width=\linewidth]{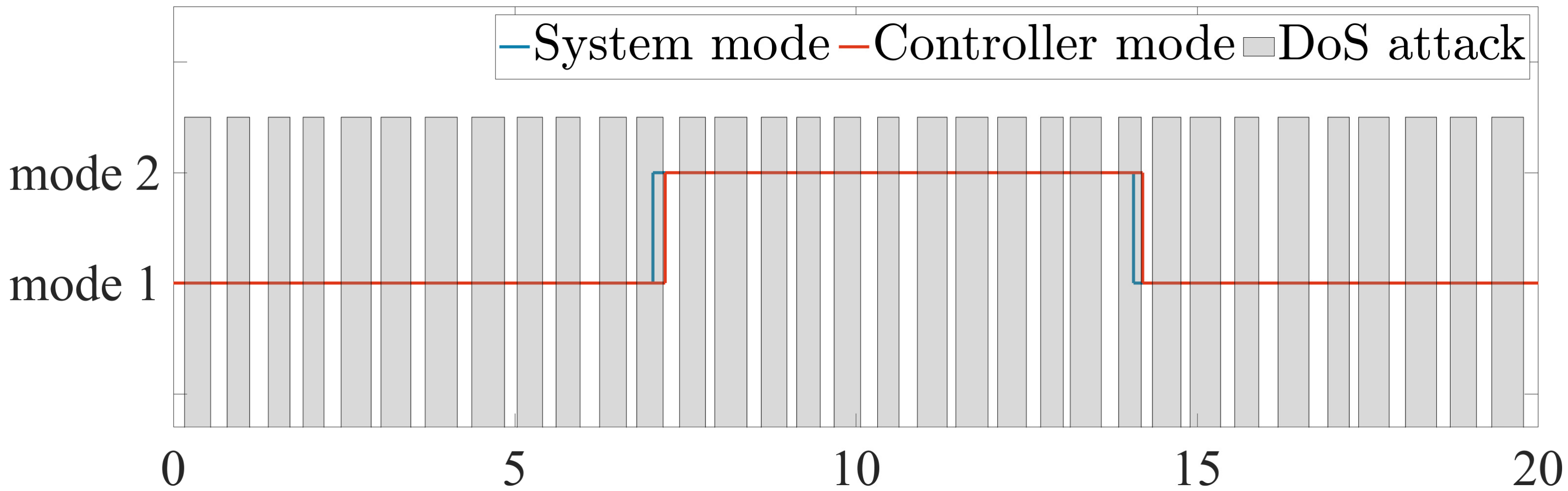}
		\caption{The switching signal and DoS attack} 
		\label{fig:switchingC}
	\end{subfigure}
	\hfill
	\begin{subfigure}[b]{0.32\linewidth}
		\centering
		\includegraphics[width=\linewidth]{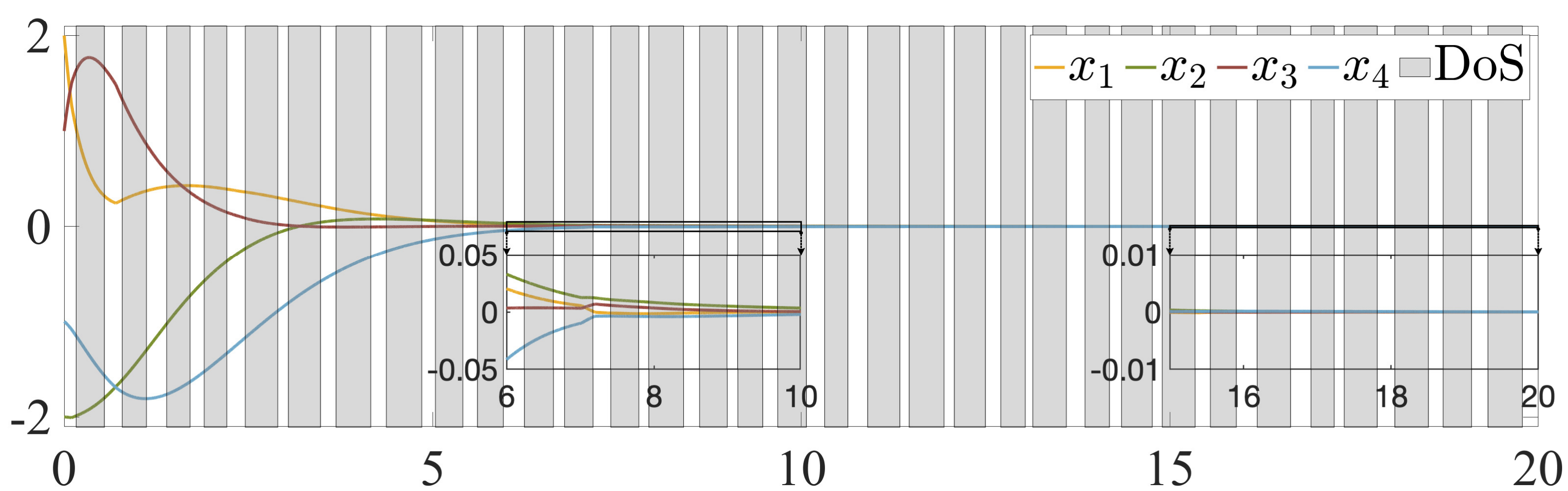}
		\caption{State trajectories} 
		\label{fig:stateC}
	\end{subfigure}
		\hfill
	\begin{subfigure}[b]{0.32\linewidth}
		\centering
		\includegraphics[width=\linewidth]{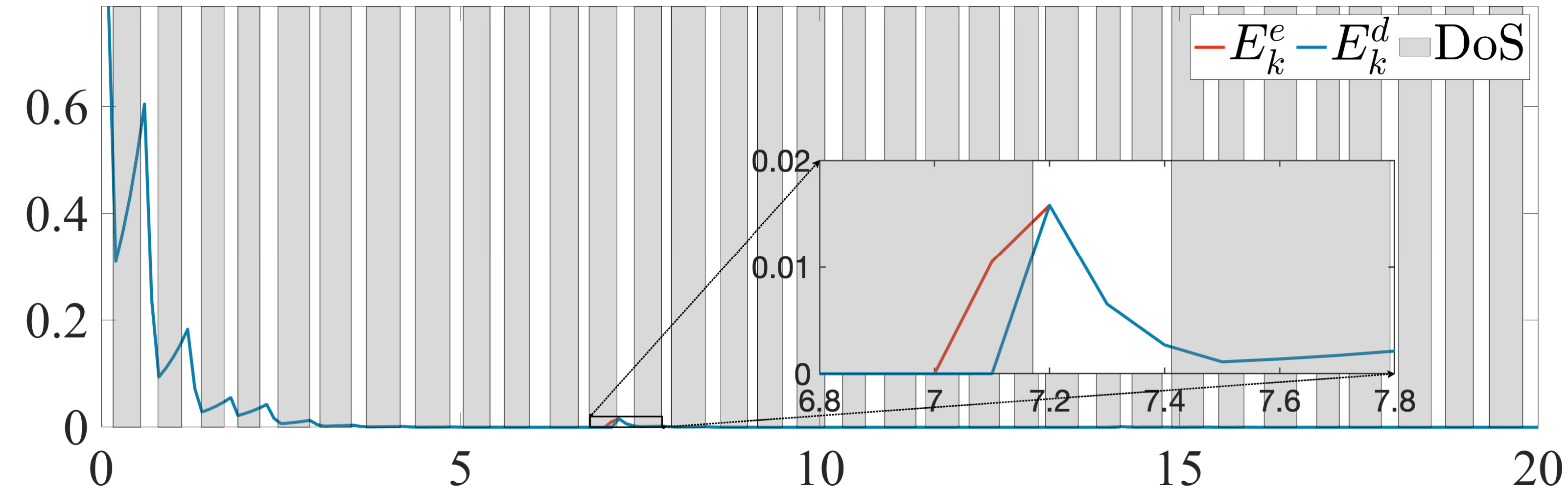}
		\caption{The values of $E_k^e$ and $E_k^d$} 
		\label{fig:ekC}
	\end{subfigure}
	\caption{Simulation results for \textbf{\textit{Strategy 1}} in Example C}	
	\label{fig:strategy1C}
\end{figure*}

\textbf{\textit{Strategy 2:}}
From  \Cref{thm1_ori}, we obtain $N \geq 39.5517$. For a faster convergence rate, we set  $N = 105$. Moreover, the dwell time constraint computed from \eqref{equ_con2_origin} is $4.5123$. Let the dwell time constraint be $4.6$, condition \eqref{equ_con1_origin} then  turns to be $\frac{0.1071}{\tau_D} + \frac{0.3208}{T} \leq -0.1600$, which is infeasible. Therefore, the dwell time constraint needs to be relaxed. By setting $\tau_d = 9$, we get $\frac{0.1071}{\tau_D} + \frac{0.3208}{T} \leq 0.0627$. 
\textcolor{blue}{
	Next, by defining $\tau_D = 5$ and $T = 10$, condition \eqref{equ_con1_origin} holds. Additionally, the maximum number of consecutive DoS attack periods is no more than 18, i.e., $n_{\max} \leq 18$, with a total duration less than 1.8 seconds. According to the frequency and duration constraints, the average length of a DoS attack is $\tau_D / T = 0.5$, which is less than 1.8 seconds. Hence, the maximum number of DoS attack is not too conservative.}
As shown in Fig. \ref{fig:stateC2}, the system remains stable. In addition, for the first subsystem, the DoS attack does not increase the quantization parameter, as illustrated in Fig. \ref{fig:ekC2}. Moreover, the quantization parameters are identical during attack-free intervals.

\textbf{\textit{Strategy 3:}}
Similarly, the quantization level admits $N \geq 20.6266$. We set $N = 155$ and other parameters  $\tau_d = 20$, $\tau_D = 20$, and $T = 20$. Under these settings, the conditions in \Cref{thm1_zero} are satisfied.
The switching signal, state trajectories, and quantization parameters are shown in Fig. \ref{fig:control_C3}. It is seen that the system under \textbf{\textit{Strategy 3}} is stable.

\begin{figure*}
	\hfill
	\begin{subfigure}[b]{0.32\linewidth}
		\centering
		\includegraphics[width=\linewidth]{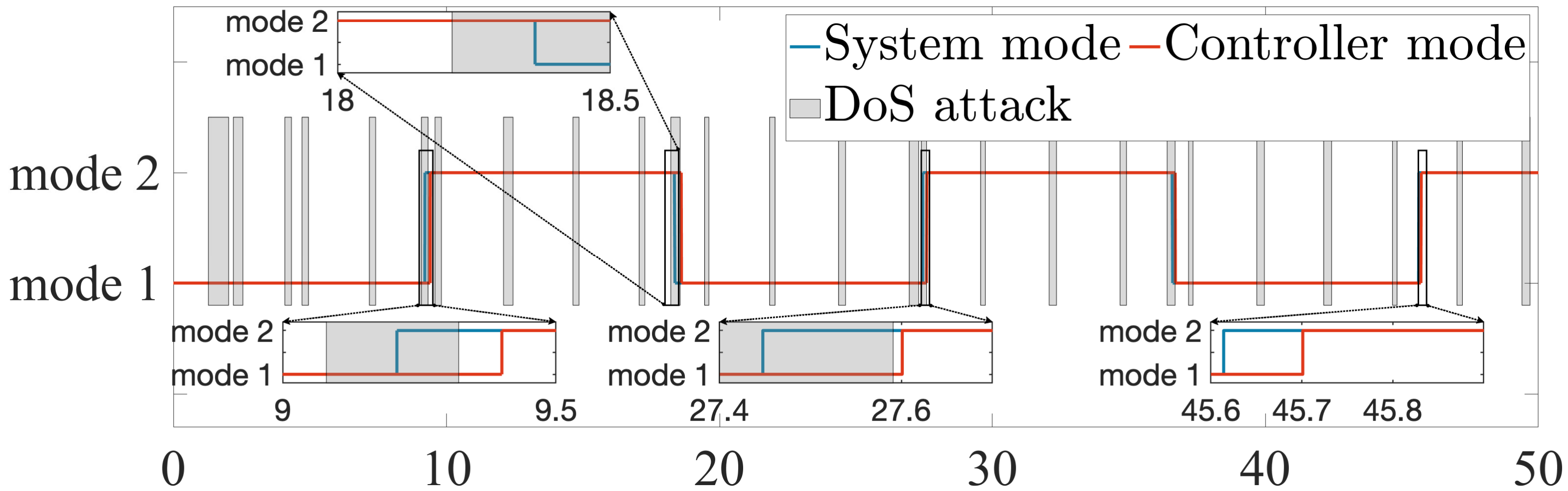}
		\caption{The switching signal and DoS attack} 
		\label{fig:switchingC2}
	\end{subfigure}
		\hfill
	\begin{subfigure}[b]{0.32\linewidth}
		\centering
		\includegraphics[width=\linewidth]{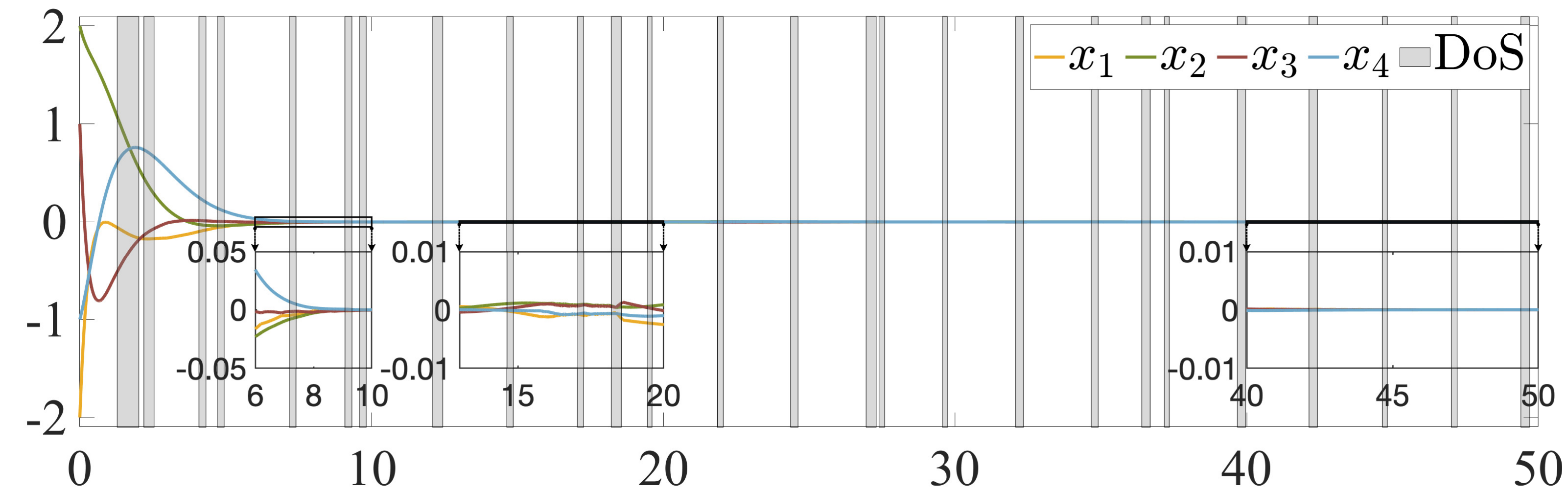}
		\caption{State trajectories} 
		\label{fig:stateC2}
	\end{subfigure}	\hfill
	\begin{subfigure}[b]{0.32\linewidth}
		\centering
		\includegraphics[width=\linewidth]{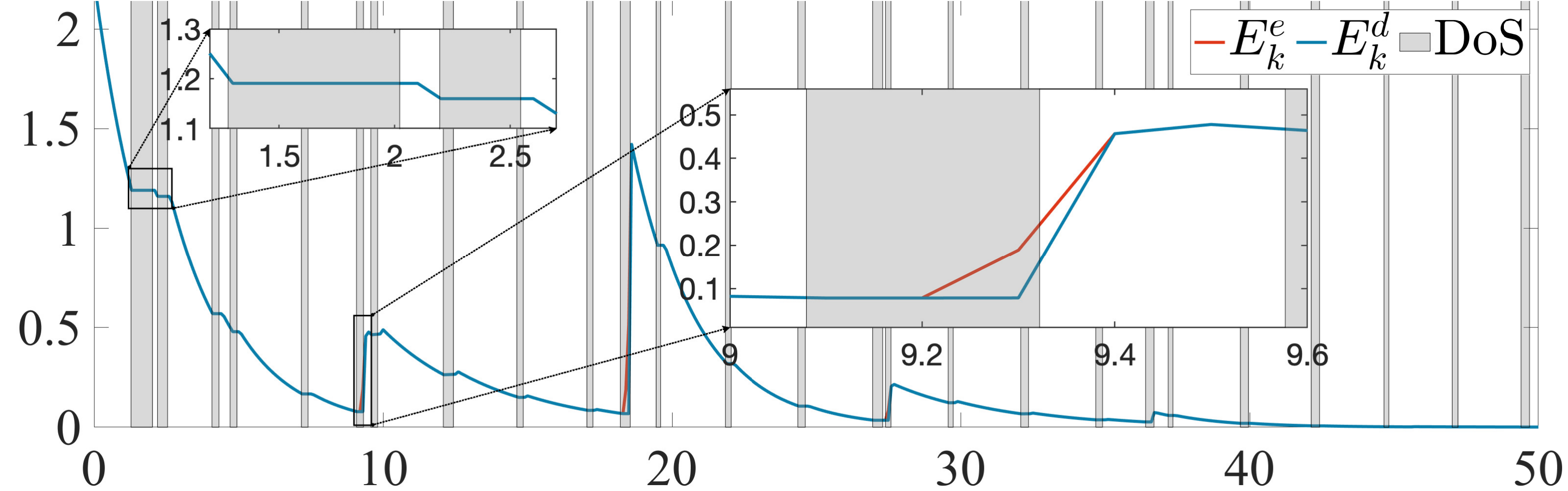}
		\caption{The values of $E_k^e$ and $E_k^d$} 
		\label{fig:ekC2}
	\end{subfigure}
	\hfill
	\caption{Simulation results for \textbf{\textit{Strategy 2}} in Example C}	
	\label{fig:control_C2} 
\end{figure*}
\begin{figure*}
		\hfill
	\begin{subfigure}[b]{0.32\linewidth}
		\centering
		\includegraphics[width=\linewidth]{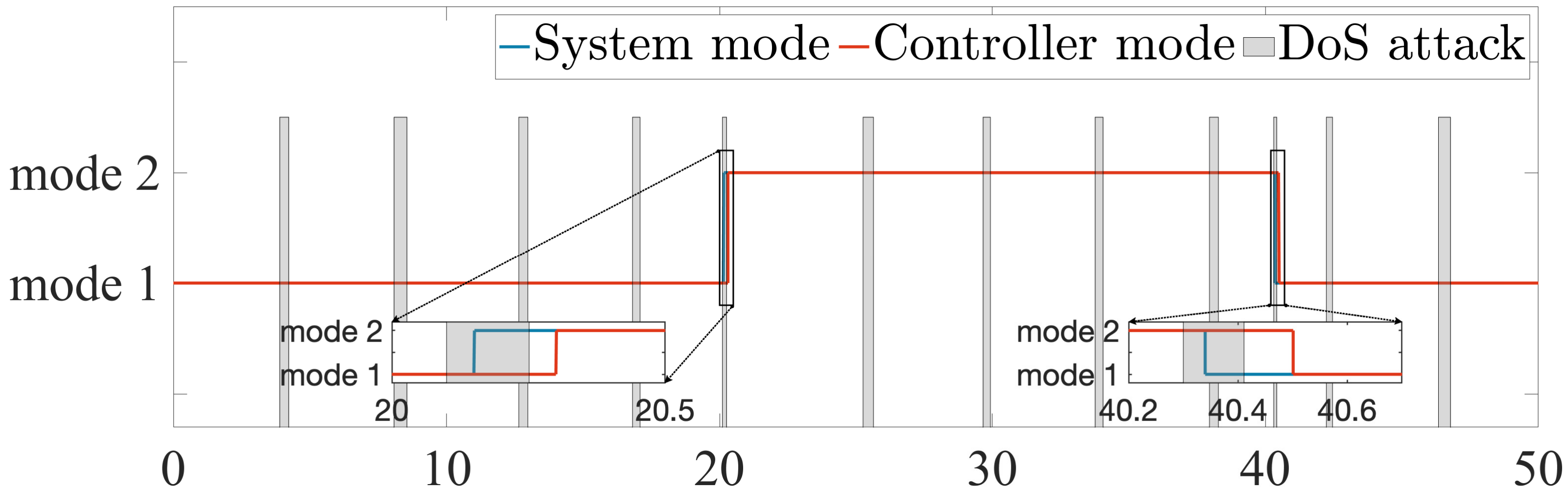}
		\caption{The switching signal and DoS attack} 
		\label{fig:switchingC3}
	\end{subfigure}	\hfill
	\begin{subfigure}[b]{0.32\linewidth}
		\centering
		\includegraphics[width=\linewidth]{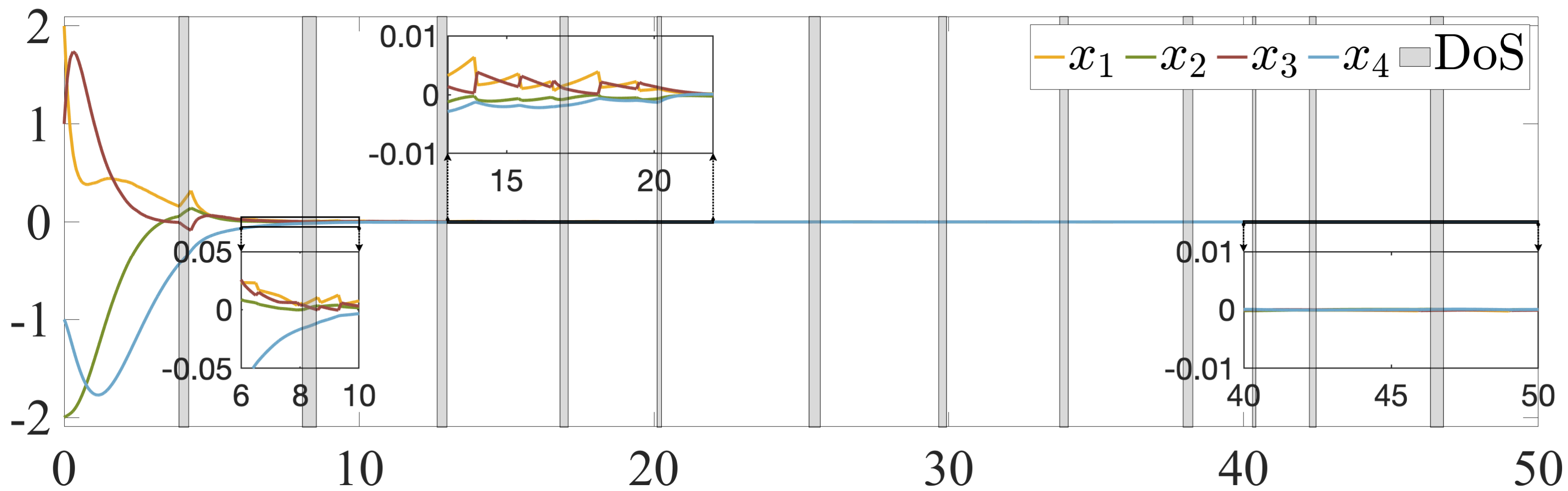}
		\caption{State trajectories} 
		\label{fig:stateC3}
	\end{subfigure}	\hfill
	\begin{subfigure}[b]{0.32\linewidth}
		\centering
		\includegraphics[width=\linewidth]{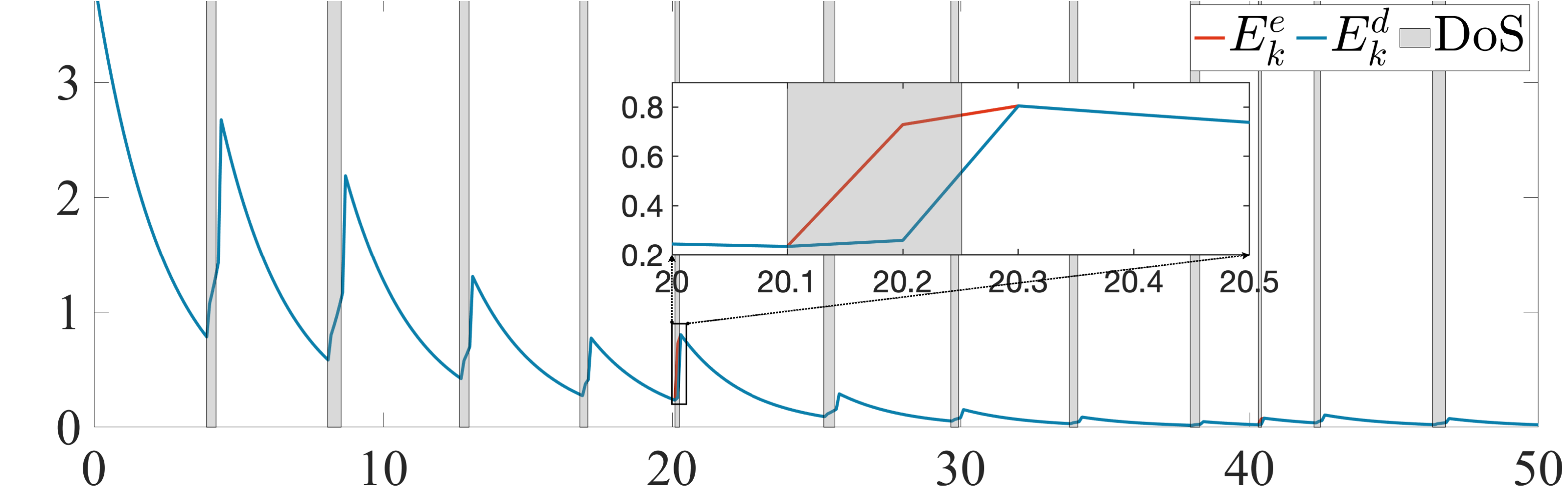}
		\caption{The values of $E_k^e$ and $E_k^d$} 
		\label{fig:ekC3}
	\end{subfigure}	\hfill
	\caption{Simulation results for \textbf{\textit{Strategy 3}} in Example C} 
	\label{fig:control_C3}
\end{figure*}

	\section{Conclusion} \label{sec_con}
The resilient control problem for switched systems under DoS attack has been addressed. Two active quantized control strategies are proposed to enhance the system's defense capabilities. During the DoS attack interval, an appropriate control signal has been designed with predicted state information to compensate for the signal loss during the attack interval. One of the strategies offers better performance but requires more computational resources, while the others exhibit a trade-off between system performance and computational burden. 
	Additionally, a well-designed switching signal is developed to enhance the system's tolerance to DoS attack. Furthermore, two control strategies without ACK signals has been provided in terms of the  event-triggered and time-triggered approaches. The updating law is designed to ensure that the state remains within the quantization range and to maintain synchronization between the decoder and encoder, even under asynchronous conditions caused by DoS attack. 
	Finally, several criteria  for ensuring system stability have been derived, which are related to the switching signal and the constraints imposed by the DoS attack.

\textcolor{blue}{
	Future research will extend the quantized control design for switched systems to multi-agent systems with switching topologies. This extension will leverage existing stochastic multi-agent system frameworks with infinite topologies \cite{R1-2,R1-3,R1-4} and prior studies on switching within finite topological sets \cite{R1-5}. Additionally, 
	we can extend the proposed framework to nonlinear switched systems subject to noisy/lossy networks, and embed real-time, learning-based DoS detection to achieve higher resilience without ACK feedback.
}

\color{black}

\appendix
\color{blue}
\subsection{Parameters chosen}
 The parameters $\tau_s$ and $N_{\max}$ are chosen to satisfy practical hardware and bandwidth constraints.  The remaining parameters $\rho_p$, $\lambda_p$, $\eta_{pq}$, and $\xi_{pq}$, are computed from inequalities that capture the switched system dynamics. For every admissible transition $p \to q$, we require
$\Bigg\| \Bigg[ \begin{matrix} 
	A_{pq}^d(\tau_s) \\ 
	\mathbf{0}
\end{matrix}\Bigg]^k \Bigg\| \leq \xi_{pq} \eta_{pq}^k $ \text{and} $\quad 
\| (A_p^d)^k \| \leq \rho_p \lambda_p^k,$
where $A_p^d = e^{\overline{A}_{pp} \tau_s}$ and $A_{pq}^d(\overline{t}) = \big[\begin{matrix}
	\mathbf{I} & \mathbf{0} \end{matrix}\big] e^{\widetilde{\mathcal{A}}_{pq}(\overline{t})} e^{\overline{\mathcal{A}}_p (\tau_s - \overline{t})}$.  In practice, we minimize $\lambda_p$ and $\eta_{pq}$ while keeping $\rho_p$ and $\xi_{pq} $ close to $1$ as feasible.
\textbf{\textit{Strategy 1}} employs a systematic parameter selection workflow. 
First, $\Gamma$ is derived from system matrix $A$ and sampling period $\tau_s$. The quantization level $N$ is then chosen such that $N > \Gamma$ to guarantee bounded quantization errors. Next, the update laws for the quantization parameters are calculated by deriving $\Gamma_p$, $\Gamma_{pq}^1$, $\Gamma_{pq}^2$, and $\Gamma_{pq}^3$ for all modes. These quantities yield the global parameters $\Gamma$, $\overline{\Gamma}$, $\overline{\Gamma}_2$, and $b$. Stability-related scalars $\nu_p$, $\hat{\nu}_p$, $\mu_{pq}^1$, $\mu_{pq}^2$, and $\mu_{pq}^3$ are subsequently evaluated from these results. A reference dwell time is obtained via (30), and the DoS attack constraints are constructed via (29). 
Whenever the resulting constraints are too restrictive, the dwell time is lengthened or $N$ is increased, and the workflow is re-iterated until a provably stable configuration is reached.

\begin{table}
	\caption{Definitions in \textbf{\textit{Strategies 1-2}}}
	\label{table12}
	\renewcommand\arraystretch{1.5}
	\setlength{\tabcolsep}{3pt}
	\begin{tabular}{p{240pt}}
		\hline
$\overline{A}_{\sigma(t)\hat \sigma(t)} = A_{\sigma(t)}+B_{\sigma(t)}K_{\hat \sigma(t)}$ \\
$A_p^d  = e^{\overline{A}_{pp} \tau_s}$,  $B_p^d = \int_{0}^{\tau_s} e^{\overline{A}_{pp} (\tau_s-s)}B_pK_pe^{{A}_p s}ds$\\
 $\Pi_{pq}^1 =  \overline{A}_{pq} - \overline{A}_{qq}$,  $\Pi_{pq}^2 = B_pK_q- \overline{A}_{qq}$ \\
 $e_z(t) = \left[ \begin{array}{c}	x(t) \\	e(t)\end{array}\right]$, $	\tilde{x}(t)  =  \left[\begin{array}{c}
 	{x}(t)	\\
 	\textbf{0}
 \end{array}\right]$, $	\tilde{e}(t)  =  \left[\begin{array}{c}
 \textbf{0} \\ {e}(t) 
 \end{array}\right]$\\
 $\widetilde{\mathcal{A}}_{pq} = \left[\begin{array}{cc}
 	A_p + B_p K_q	& B_p K_q  \\
 	\Pi_{pq}^1 &  \Pi_{pq}^2
 \end{array}
 \right]$, 
 $ \overline{\mathcal{A}}_{p} =  \left[\begin{array}{cc}
 	A_p	& B_pK_p  \\
 	\textbf{0}	&  A_p 
 \end{array}\right]$\\
 $ \mathcal{A}_{pq} =  \left[\begin{array}{cc}
 	A_p & B_pK_q \\
 	\textbf{0}& A_q+B_qK_q
 \end{array}
 \right]$, $ \hat{\mathcal{A}}_{q} =\left[\begin{array}{cc}
 	\overline{A}_{qq}&\textbf{0}  \\
 	\textbf{	0}& \overline{A}_{qq}
 \end{array}
 \right] $ \\
$z(t) = \left[\begin{array}{c}
	x(t) \\
	\hat x(t)
\end{array}
\right]$, $\hat z(  t_k)=\left[\begin{array}{c}
 	\hat x(  t_k^+ ) \\
 	\hat x(  t_k^+ )
 \end{array}
 \right]$ \\
 $A_{pq}^d (\overline{t})  = \left[\begin{array}{cc}
 	\textbf{I} &\textbf{0}
 \end{array}\right] e^{\widetilde{\mathcal{A}}_{pq}(\overline{t})}e^{ \overline{\mathcal{A}}_{p} (\tau_s -  \overline{t})} $, $B_{pq}^d (\overline{t})  =  A_{pq}^d (\overline{t})  $\\
 $ \| ({A}_p^d)^k\| \leq \rho_{p} \lambda_p^k$, $\rho_p>0$, $0<\lambda_p <1$ \\
 $  \left \| \left[\begin{array}{c}
 	{A}_{pq}^d(\tau_s)	\\
 	\textbf{0}
 \end{array}\right]^k\right\| \leq \xi_{pq} \eta_{pq}^k$,~$\eta_{pq} ,~\xi_{pq}>0$\\
		\hline
	\end{tabular}
	\label{tab122}
\end{table}

\begin{table}
	\caption{Parameters used in \textbf{\textit{Strategy 1}}}
	\label{table5}
	\renewcommand\arraystretch{1.5}
	\setlength{\tabcolsep}{3pt}
	\begin{tabular}{p{240pt}}
		\hline
	$\Gamma_p  = \| e^{A_p \tau_s} \|$\\
$\Gamma_{pq}^{1} = \hat \Gamma_{pq}^{1}+\Gamma_{pq}^{2} \frac{N}{N-1}$\\
	$\hat \Gamma_{pq}^{1}  = \max_{\overline{t}\in [0,\tau_s)} \left \|e^{\mathcal{A}_{pq} \overline{t}}  e^{\mathcal{A}_{qq} ( \tau_s -\overline{t})} \right\| $\\
    $\Gamma_{pq}^{2}  =  \max_{\overline{t}\in [0,\tau_s)} \left \|e^{\mathcal{A}_{pq} \overline{t}}  e^{\mathcal{A}_{qq} ( \tau_s -\overline{t})}-e^{\hat{\mathcal{A}}_{q} \tau_s } \right \| $\\
	$\Gamma_{pq}^3 = \|e^{ \mathcal{A}_{pq}\tau_s}\| $,~$	\Gamma_{pq}^4(n) =( \Gamma_{pq}^3)^n \Gamma_{pq}^1 $,~	$	\Gamma_{pq}^5(n) = (\Gamma_{pq}^3)^n \Gamma_{pq}^2 $\\
	$n = \frac{\lceil t_{s} \rceil - \ulcorner t_s \urcorner }{\tau_s}$\\
	$\Gamma = \max_{p\in\mathcal{M}}  \Gamma_{p}$, $\overline{\Gamma}_2 = \max \limits_{p,q\in \mathcal{M}}\Gamma_{pq}^{2}$\\
	 $\overline{\Gamma}=  \max\limits_{p,q\in\mathcal{M},p\neq q, m\in [0,N_{\max}-1]}(\Gamma_{pq}^3)^{m-1}\Gamma_{pq}^{1}$\\
	  	  $\nu_p = \max\{ \lambda_p , \rho_{p}  \| {B}_{p}^d \| +b \}$, 
$ \hat{\nu}_p= \max\{ \rho_p\lambda_p , \rho_{p}  \| {B}_{p}^d \| +b\}$\\
 $ \mu_{pq}^1 = \max\{\mathbb{A}_{pq}^d  +a\overline{ \Gamma}_2,\mathbb{B}_{pq}^d +b+a\overline{ \Gamma}_2 \}$\\
  $\mathbb{A}_{pq}^d  = \max_{\overline{t} \in [0,\tau_s]} \| \tilde{A}_{pq}^d (\overline{t})\|$,~$\mathbb{B}_{pq}^d  = \max_{\overline{t} \in [0,\tau_s]} \| \tilde{B}_{pq}^d (\overline{t})\|$\\
  	$\mu_{pq}^2 = \max \{ \xi_{pq} \eta_{pq} , \xi_{pq}  \| \tilde{B}_{pq}^d (\tau_s)\| +b\}$\\
  	$\mu_{pq}^3 = \max \{ \eta_{pq}, \xi_{pq}  \| \tilde{B}_{pq}^d (\tau_s)\|+b \}$\\
  	$a = N^{\frac{\overline{\kappa}}{\tau_s}}$, $b = \left(\frac{\overline{\Gamma}}{\Gamma^{N_{\max}}  }\right)^{\frac{\tau_s}{\tau_d}} N^{\frac{1}{\overline{T}}}\frac{\Gamma}{N}$\\ 
		\hline
	\end{tabular}
	\label{tab5}
\end{table}

\begin{table}
	\caption{Parameters used in \textbf{\textit{Strategy 2}}}
	\label{tab6}
	\renewcommand\arraystretch{1.5}
	\setlength{\tabcolsep}{3pt}
	\begin{tabular}{p{240pt}}
		\hline
	${\Lambda}_p^1  = \rho_p \lambda_p + \frac{\rho_p\| B_p^d\| }{N} $\\
	${\Lambda}_p^2  =  \lambda_p + \frac{\rho_p\| B_p^d \| }{N} $\\
	$ {\Lambda}_p^3 = \overline{\Psi}_p^{\frac{1}{n_{\max}}}$\\ 
	$	\overline{\Psi}_p =	\max\limits_{\ell \in [0,n_{max}]} \Big \{\rho_{p} \lambda_p^\ell + \sum_{j=0}^{\ell -1} \rho_p \lambda_p^{\ell-j-1} \| B_p^d \| \frac{\| e^{{A}_p \tau_s} \| ^j }{N} \Big \}$\\
	$ {\Lambda}_{pq}^4 $ $= \max_{\overline{t}\in [0,\tau_s)}\|e^{\mathcal{A}_{pq}   \overline{t}}  e^{\mathcal{A}_{p}(\tau_s - \overline{t})}\|$\\
	$ {\Lambda}_{pq}^5 = \max \Big \{ \frac{ 1- \lambda_p\underline{\Psi}_p\overline{\Psi}_p^{-1} }{\rho_p \| B_p^d\| },1 \Big\}  {\Lambda}_{pq}^4  $\\
	$ {\Lambda}_{pq}^6 = 	 \| e^{\widetilde{\mathcal{A}}_{pq}\tau_s} \|$\\
	$\overline{\Lambda}^{i} = \max_{p \in \mathcal{M}} \overline{\Lambda}^{i}_p (i\in \{1,2,3\})$\\
	 $\overline{\Lambda}^{i} = \max_{p,q \in \mathcal{M}, p\neq q} \overline{\Lambda}^{i}_{pq} (i\in \{5,6\})$\\
	$\mathbb{A}_{pq}^d  = \max_{\overline{t} \in [0,\tau_s]} \| \tilde{A}_{pq}^d (\overline{t})\|$\\
		\hline
	\end{tabular}
\end{table}

\begin{table}
	\caption{Definitions in \textbf{\textit{Strategies 3-4}}}
	\label{table121}
	\renewcommand\arraystretch{1.5}
	\setlength{\tabcolsep}{3pt}
	\begin{tabular}{p{240pt}}
		\hline
	$\hat{A}_p^d = e^{A_p\tau_s}$, $\hat{B}_p^d = \int_0^{\tau_s} e^{A_ps}B_pds$, $\hat{A}_p^{cl} = \hat{A}_p^d + \hat B_p^dK_p$\\
	 $ \overline{t} = t_{k+1}-t_s$\\
	 $\hat{A}_{pq}^d(\overline{t}) =e^{A_p \overline{t} +A_q( \tau_s-\overline{t} )} $\\
	  $\hat{B}_{pq}^d (\overline{t})  = e^{A_p \overline{t}} \int_{0}^{\tau_s-\overline{t}} e^{A_q s } B_q ds+\int_{0}^{\overline{t}} e^{A_p s } B_p ds $\\
	  $\hat{A}_{pq}^{cl} (\overline{t}) =   \hat{A}_{pq}^{d}(\overline{t}) +  \hat{B}_{pq}^d (\overline{t}) K_{q}$\\
	  $ \| (\hat{A}_p^{cl})^k\| \leq \hat {\rho}_{p} \hat \lambda_p^k$, $0<\hat{\lambda}_p <1$, $\hat \rho_{p} >0$\\
	  $ \| (\hat{A}_p^d)^k\| \leq \hat \xi_{p}  \hat\eta_{p}^k$,  $\hat{\eta}_p>0$, $\hat \xi_{p}>0$  \\
		\hline
	\end{tabular}
\end{table}

\begin{table}
	\caption{Definitions in \textbf{\textit{Strategy 3}}}
	\label{table122}
	\renewcommand\arraystretch{1.5}
	\setlength{\tabcolsep}{3pt}
	\begin{tabular}{p{240pt}}
		\hline
		$\tilde{\xi}_{p}  = \max\{\hat{\xi}_{p} ,1\}$\\
	${\Upsilon}_p^1  = \hat \rho_p \hat \lambda_p + \frac{\hat \rho_p\| \hat {B}_p^dK_p\| }{N} $\\
	${\Upsilon}_p^2  =  \hat \lambda_p + \frac{\hat\rho_p\| \hat B_p^d K_p\| }{N} $\\
	$\Upsilon _{pq}^3 = \max _{\overline{t} \in [0,\tau_s]} \left( \| \hat{A}_{pq}^d(\overline{t})+ \hat{B}_{pq}^d (\overline{t}) K_{q}\frac{  N-1}{N}  \| \right) $\\
	 $ {\Upsilon}_{pq}^4 = \max_{\overline{t} \in [0,\tau_s]} \| \tilde{A}_{pq}^d(\overline{t})\|$\\
	  $	\widehat {\Upsilon }_{pq}   =\max \{	\Upsilon _{pq }^3, 	\Upsilon _{pq }^4\}$\\
	  $ \tilde{\Upsilon} = \max_{p, q\in \mathcal{M}} \Big\{ \widehat {\Upsilon }_{pq} \tilde{\xi}_{p},\frac{\Upsilon_{pq}^3 \Upsilon_p^1}{\Upsilon_{p}^2}\Big\} $\\
	   $\tilde{\Upsilon}^i = \max_{p\in\mathcal{M}}\{{\Upsilon}^i_p\} ~(i\in \{1,2\})$\\
	   $\overline{\xi}  = \max_{p \in \mathcal{M}} \tilde{\xi}_{p}$, $\overline{\eta}  = \max_{p\in \mathcal{M}} \hat{\eta}_{p}$\\
		\hline
	\end{tabular}
\end{table}

\begin{table}
	\caption{Definitions in \textbf{\textit{Strategy 4}}}
	\label{table124}
	\renewcommand\arraystretch{1.5}
	\setlength{\tabcolsep}{3pt}
	\begin{tabular}{p{\linewidth}}
		\hline
${\varphi}_1=\max_ {p \in \mathcal{M}}  \tilde \rho_p \hat \lambda_p + \frac{\hat \rho_p\| \hat {B}_p^dK_p\| }{N} $\\
$\varphi_2 = \max_ {p \in \mathcal{M}}   \hat \lambda_p + \frac{\hat \rho_p\| \hat {B}_p^dK_p\| }{N}  $\\
$\varphi_3 = \max_{p\in\mathcal{M}}\tilde{\xi}_{p} \hat{\eta}_p $\\
$\varphi_4 = \max_{p\in\mathcal{M}}\hat{\eta}_p $\\
$\varphi_5 = \max_{p\in\mathcal{M}}\{ \hat{\xi}_{p}, \big(\tilde \rho_p \hat \lambda_p + \frac{\hat \rho_p\| \hat {B}_p^dK_p\| }{N} \big)/ \big(\hat \lambda_p + \frac{\hat \rho_p\| \hat {B}_p^dK_p\| }{N}\big),1 \}$\\
 $\tilde{\rho}_p= \max_ {p \in \mathcal{M}} \{\hat{\rho}_p,1\}$\\
 $\tilde{\eta}_p= \max_ {p \in \mathcal{M}} \{\hat{\eta}_p,1\}$\\
${\varphi}^1= \max\limits_{p\in \mathcal{M}}\tilde \rho_p\hat  \lambda_p + \frac{\hat \rho_p\| 	\hat {B}_p^dK_p\| }{N} $\\
$\varphi^2 =\max\limits_{p\in \mathcal{M}}  \hat \lambda_p + \frac{\hat \rho_p\| \hat {B}_p^dK_p\| }{N}  $\\
$\varPhi(n_z,k) =\varphi^1(\varphi^2 )^{n_z-1}\phi {E}_{ k-n_z-n_{\max}} ^d+(\varphi^2 )^{n_z-1}(	\varphi^1)^2 E_{k-n_z} ^d$\\
 $\phi = 	 \max \limits_{\substack{p,q\in \mathcal{M}, \ell\in [0,n_{\max}], \overline{k}\in [0,\ell-1] }} \big( \| \varpi^x_{pq}(\ell,\overline{k})\| +\sum\limits_{i=0}^{\ell-1} \frac{  \left\| \varpi^e_{pq}(i,\overline{k}) \right\| } {N}  \big)$\\ $	\varpi^x_{pq}(\ell,\overline{k}) = (A_p^{cl})^{\ell-\overline{k}}( A_q^{cl} )^{\overline{k}}-  (\hat A_p^{d})^{\ell-\overline{k}}( \hat A_q^{d} )^{\overline{k}}$ \\ $  \overline{\varpi}^e_{pq}(i,\overline{k})  = \| \hat B_p^d K_p\|  + \sum\limits_{i=0}^{\overline{k}}\|(A_p^{cl})^{\ell-\overline{k}}( A_q^{cl} )^{\overline{k}-i}\hat B_q^dK_q\| +  \sum_{i=\overline{k}+1}^{\ell-1}\|(A_p^{cl})^{\ell-i}B_p^d K_p\|$\\
		\hline
	\end{tabular}
\end{table}


\begin{thebibliography}{10}

\bibitem{DT}
D.~Liberzon, {\em Switching in Systems and Control}.
\newblock Birkhäuser Boston, 2003.

\bibitem{App1}
L.~Menhour, A.~Charara, and D.~Lechner, ``Switched {LQR/$H_\infty$} steering
  vehicle control to detect critical driving situations,'' {\em Control
  Engineering Practice}, vol.~24, pp.~1--14, 2014.

\bibitem{App2}
S.~Chen, L.~Jiang, W.~Yao, and Q.~Wu, ``Application of switched system theory
  in power system stability,'' in {\em 2014 49th International Universities
  Power Engineering Conference (UPEC)}, pp.~1--6, 2014.

\bibitem{APP4}
Y.~Shi and X.-M. Sun, ``Bumpless transfer control for switched linear systems
  and its appliaction to aero-engines,'' {\em IEEE Transactions on Circuits and
  Systeems I: Regular Papers}, vol.~68, no.~5, pp.~2171--2182, 2021.

\bibitem{NSS}
L.~Li, M.~Ju, D.~Ma, and T.~Li, ``Event-triggered multisource bumpless transfer
  control for networked switched systems with almost output regulation against
  switching deception attacks,'' {\em IEEE Transactions on Systems, Man, and
  Cybernetics: Systems}, vol.~53, no.~7, pp.~3956--3965, 2023.

\bibitem{Q_S_rui}
R.~Zhao, Z.~Zuo, and Y.~Wang, ``Event-triggered control for networked switched
  systems with quantization,'' {\em IEEE Transactions on Systems, Man, and
  Cybernetics: Systems}, vol.~52, no.~10, pp.~6120--6128, 2022.

\bibitem{quan_2014}
D.~Liberzon, ``Finite data-rate feedback stabilization of switched and hybrid
  linear systems,'' {\em Automatica}, vol.~50, no.~2, pp.~409--420, 2014.

\bibitem{quan_2018}
G.~Yang and D.~Liberzon, ``Feedback stabilization of switched linear systems
  with unknown disturbances under data-rate constraints,'' {\em IEEE
  Transactions on Automatic Control}, vol.~63, no.~7, pp.~2107--2122, 2018.

\bibitem{R2-1}
W.~He, F.~Qian, Q.-L. Han, and G.~Chen, ``Almost sure stability of nonlinear
  systems under random and impulsive sequential attacks,'' {\em IEEE
  Transactions on Automatic Control}, vol.~65, no.~9, pp.~3879--3886, 2020.

\bibitem{R2-2}
W.~He and Z.~Mo, ``Secure event-triggered consensus control of linear
  multiagent systems subject to sequential scaling attacks,'' {\em IEEE
  Transactions on Cybernetics}, vol.~52, no.~10, pp.~10314--10327, 2022.

\bibitem{etDoS-Ass}
C.~De~Persis and P.~Tesi, ``Input-to-state stabilizing control under
  denial-of-service,'' {\em IEEE Transactions on Automatic Control}, vol.~60,
  no.~11, pp.~2930--2944, 2015.

\bibitem{9903320}
W.~Liu, J.~Sun, G.~Wang, F.~Bullo, and J.~Chen, ``Data-driven resilient
  predictive control under denial-of-service,'' {\em IEEE Transactions on
  Automatic Control}, vol.~68, no.~8, pp.~4722--4737, 2023.

\bibitem{Shi2022}
M.~Shi, S.~Feng, and H.~Ishii, ``Quantized state feedback stabilization of
  nonlinear systems under denial-of-service,'' {\em Automatica}, vol.~139,
  p.~110180, 2022.

\bibitem{output2}
M.~Wakaiki, A.~Cetinkaya, and H.~Ishii, ``Stabilization of networked control
  systems under {DoS} attacks and output quantization,'' {\em IEEE Transactions
  on Automatic Control}, vol.~65, no.~8, pp.~3560--3575, 2020.

\bibitem{Liu2022}
W.~Liu, J.~Sun, G.~Wang, F.~Bullo, and J.~Chen, ``Resilient control under
  quantization and denial-of-service: {Codesigning} a deadbeat controller and
  transimission protocol,'' {\em IEEE Transactions on Automatic Control},
  vol.~67, no.~8, pp.~3879--3891, 2022.

\bibitem{Quan_DoS}
S.~Feng, A.~Cetinkaya, H.~Ishii, P.~Tesi, and D.~Persis, Claudio, ``Networked
  control under {DoS} attacks: {Tradeoffs} between resilience and data rate,''
  {\em IEEE Transactions on Automatic Control}, vol.~66, no.~1, pp.~460--467,
  2021.

\bibitem{Rui2}
R.~Zhao, Z.~Zuo, and Y.~Wang, ``Event-triggered control for switched systems
  with denial-of-service attack,'' {\em IEEE Transactions on Automatic
  Control}, vol.~67, no.~8, pp.~4077--4090, 2022.

\bibitem{Fu2022}
J.~Fu, Y.~Qi, N.~Xing, and Y.~Li, ``A new switching law for event-triggered
  switched systems under {DoS} attacks,'' {\em Automatica}, vol.~142,
  p.~110373, 2022.

\bibitem{Wang2022}
Y.-W. Wang, Z.-H. Zeng, X.-K. Liu, and Z.-W. Liu, ``Input-to-state stability of
  switched linear systems with unstabilizable modes under {DoS} attacks,'' {\em
  Automatica}, vol.~146, p.~110607, 2022.

\bibitem{Rui1}
R.~Zhao, Z.~Zuo, Y.~Wang, and W.~Zhang, ``Active control strategy for switched
  systems against asynchrnous {DoS} attack,'' {\em Automatica}, vol.~148,
  p.~110765, 2023.

\bibitem{Rui3}
R.~Zhao, Z.~Zuo, Y.~Wang, and W.~Zhang, ``Active control strategy for disturbed
  switched systems under asynchronous {DoS} attacks,'' {\em IEEE Control
  Systems Letters}, vol.~6, pp.~2701--2706, 2022.

\bibitem{NAHS}
W.-H. Wang, Y.-W. Wang, X.-K. Liu, and Z.-W. Liu, ``Quantized control for
  networked switched systems under denial-of-service attacks via a barrier
  event-triggered mechanism,'' {\em Nonlinear Analysis: Hybrid Systems},
  vol.~49, p.~101343, 2023.

\bibitem{quant2}
W.-H. Wang, Y.-W. Wang, X.-K. Liu, and Z.-W. Liu, ``Quantization-dependent
  dynamic event-triggered control for networked switched systems under
  denial-of-service attacks,'' {\em IEEE Transactions on Systems, Man, and
  Cybernetics: Systems}, vol.~54, pp.~3822--3833, June 2024.

\bibitem{IJRNC}
J.~Yan, Y.~Xia, X.~Wang, and X.~Feng, ``Quantized stabilization of switched
  systems with partly unstabilizable subsystems and denial-of-service
  attacks,'' {\em International Journal of Robust and Nonlinear}, vol.~32,
  no.~8, pp.~4574--4593, 2022.

\bibitem{output1}
M.~Wakaiki, T.~Zanma, and K.-Z. Liu, ``Observer-based stabilization of systems
  with quantized inputs and outputs,'' {\em IEEE Transactions on Automatic
  Control}, vol.~64, no.~7, pp.~2929--2936, 2019.

\bibitem{R1-1}
X.~Liu and T.~Chen, ``Cluster synchronization in directed networks via
  intermittent pinning control,'' {\em IEEE Transactions on Neural Networks},
  vol.~22, no.~7, pp.~1009--1020, 2011.

\bibitem{DoS_state_switching_2024}
D.~Li, Z.~Zuo, Y.~Wang, and M.~Fu, ``Stability of networked switched systems in
  the presence of denial‐of‐service attacks,'' {\em International Journal
  of Robust and Nonlinear Control}, vol.~33, no.~2, pp.~1267--1283, 2023.

\bibitem{R1-2}
B.~Liu and T.~Chen, ``Consensus in networks of multiagents with cooperation and
  competition via stochastically switching topologies,'' {\em IEEE Transactions
  on Neural Networks}, vol.~19, no.~11, pp.~1967--1973, 2008.

\bibitem{R1-3}
B.~Liu, W.~Lu, and T.~Chen, ``Consensus in networks of multiagents with
  switching topologies modeled as adapted stochastic processes,'' {\em SIAM
  Journal on Control and Optimization}, vol.~49, no.~1, pp.~227--253, 2011.

\bibitem{R1-4}
B.~Liu, W.~Lu, and T.~Chen, ``Global almost sure self-synchronization of
  hopfield neural networks with randomly switching connections,'' {\em Neural
  Networks}, vol.~24, no.~3, pp.~305--310, 2011.

\bibitem{R1-5}
G.~Wen, W.~Yu, G.~Hu, J.~Cao, and X.~Yu, ``Pinning synchronization of directed
  networks with switching topologies: A multiple lyapunov functions approach,''
  {\em IEEE Transactions on Neural Networks and Learning Systems}, vol.~26,
  no.~12, pp.~3239--3250, 2015.

\end{thebibliography}

\begin{IEEEbiography}{Rui Zhao}
	received the B.Eng. degree in automation from Tianjin University, Tianjin, China, in 2018, and the Ph.D. degree in control science and engineering from Tianjin University in 2025. From January 2024 to December 2024, he worked as a visiting Ph.D. student in the Department of Mechanical Engineering, University of Victoria, Victoria BC, Canada. 	
	She is currently a Postdoctoral Fellow with the Department of Electrical Engineering, City University of Hong Kong, Hong Kong SAR, China. 
	Her research interests include event-triggered control, switched systems,  network security and model predictive control.
\end{IEEEbiography}

\begin{IEEEbiography}{Zhiqiang~Zuo} (M'04--SM'18) received the M.S. degree in
	control theory and control engineering in 2001 from
	Yanshan University and the Ph.D. degree in control
	theory in 2004 from Peking University, China. In
	2004, he joined the School of Electrical and Information Engineering, Tianjin University, where he is a
	full professor. From 2008 to 2010, he was a Research
	Fellow in the Department of Mathematics, City
	University of Hong Kong. From 2013 to 2014, he
	was a visiting scholar at the University of California,
	Riverside. His research interests include nonlinear
	control, robust control, multi-agent systems and cyber-physical systems.
	
	Dr. Zuo is an associate editor of the Journal of the Franklin Institute (Elsevier).
\end{IEEEbiography}
\begin{IEEEbiography}{Yijing~Wang} received her M.S. degree in control
	theory and control engineering from Yanshan University
	and the Ph.D. degree in control theory
	from Peking University, China, in 2000 and 2004,
	respectively. In
	2004, she joined the School of Electrical and Information Engineering, Tianjin University, where she is a
	full professor.
	Her research interests are analysis and control of
	switched/hybrid systems, and robust control.
\end{IEEEbiography}
\begin{IEEEbiography}{Wentao~Zhang}
	(Member, IEEE) received Ph.D. in Control Science and Engineering from Tianjin University, China, in 2020.
	From Sept. 2020 to Mar. 2022, he was a research assistant/associate with the School of Electrical and Information Engineering, Tianjin University. Since Mar. 2022, he has been a Research Fellow with the School of Electrical and Electronic Engineering, Nanyang Technological University, Singapore. His current research interests include distributed control, networked control systems, control with limited information and their dial-a-ride application. He was a recipient of the Best Paper Award Finalist in IEEE 8th Annual International Conference on CYBER Technology in Automation, Control, and Intelligent Systems, 2018.
\end{IEEEbiography}
\begin{IEEEbiography}{Yang~Shi}
	(SM’09–F’17) received the Ph.D. degree in electrical and computer engineering from the University of Alberta, Edmonton, AB, Canada, in 2005. From 2005 to 2009, he was an Assistant Professor and Associate Professor in the Department of Mechanical Engineering, University of Saskatchewan, Saskatoon, SK, Canada. In 2009, he joined the University of Victoria, and now he is a Professor in the Department of Mechanical Engineering, University of Victoria, Victoria, BC, Canada. His current research interests include networked and distributed systems, model predictive control (MPC), cyber-physical systems (CPS), robotics and mechatronics, control of autonomous systems (AUV and UAV), and energy system applications.
	
	Dr. Shi received the University of Saskatchewan Student Union Teaching Excellence Award in 2007, and the Faculty of Engineering Teaching Excellence Award in 2012 at the University of Victoria (UVic). He is the recipient of the JSPS Invitation Fellowship (short-term) in 2013, the UVic Craigdarroch Silver Medal for Excellence in Research in 2015, the 2017 IEEE Transactions on Fuzzy Systems Outstanding Paper Award, the Humboldt Research Fellowship for Experienced Researchers in 2018. Currently he serves as the Chair of IEEE IES Technical Committee on Industrial Cyber-Physical Systems, and Co-Editor-in-Chief for IEEE Transactions on Industrial Electronics. He also serves as Associate Editor for Automatica, IEEE Transactions on Control Systems Technology, etc. He is General Chair of the 2019 International Symposium on Industrial Electronics (ISIE) and the 2021 International Conference on Industrial Cyber-Physical Systems (ICPS).
	
	He is a Fellow of IEEE, ASME, CSME, and Engineering Institute of Canada (EIC), and a registered Professional Engineer in British Columbia, Canada.
\end{IEEEbiography}
\end{document}